\pdfoutput=1
\documentclass{article}
\usepackage{fullpage}
\usepackage{hyperref}
\hypersetup{colorlinks=true, allcolors=blue}
\usepackage{amsmath,amsfonts,amsthm,amssymb}
\usepackage{color}
\usepackage{graphicx,float,wrapfig}
\usepackage[toc,page]{appendix}
\usepackage{natbib}
\usepackage{thmtools,thm-restate}
\usepackage{multirow}
\usepackage{multicol}
\usepackage{array}
\usepackage[table]{xcolor}
\usepackage{nicematrix}

\usepackage{xcolor}
\usepackage[algo2e,ruled,linesnumbered,boxed,noend]{algorithm2e}
\usepackage{verbatim}
\usepackage{graphicx,amsfonts,amsmath,amssymb,epsfig,color,multicol,pstricks,tikz, pgf}
\usetikzlibrary{decorations.markings}
\usetikzlibrary{patterns}
\usetikzlibrary{calc, intersections}
\usepackage{pgf,tikz}
\usetikzlibrary{calc}
\usepackage{tkz-euclide}
\usepackage{pgfplots}
\usepackage{mathabx}
\usepackage{graphicx,amsfonts,amsmath,amssymb,epsfig,color,multicol,pstricks,tikz, pgf}

\def\notes{0}
\ifnum\notes=1
\newcommand{\xhk}[1]{\textcolor{red}{[xhk: #1]}}

\newcommand{\Piotr}[1]{\textcolor{blue}{Piotr: #1}}
\newcommand{\tnote}[1]{\textcolor{magenta}{Talya: #1}}
\newcommand{\tnew}[1]{\textcolor{blue}{#1}}
\else
\newcommand{\xhk}[1]{}

\newcommand{\Piotr}[1]{}
\newcommand{\tnote}[1]{}
\newcommand{\tnew}[1]{#1}

\fi

\newcommand{\mb}[1]{\mathbb{#1}}

\newcommand{\algoname}{A* search}

\DeclareMathOperator*{\argmin}{argmin}

\newtheorem{theorem}{Theorem}[section]
\newtheorem{lemma}[theorem]{Lemma}
\newtheorem{proposition}[theorem]{Proposition}

\newtheorem{definition}[theorem]{Definition}
\newtheorem{assumption}[theorem]{Assumption}
\newtheorem{fact}[theorem]{Fact}

\numberwithin{equation}{section}

\title{Embeddings and labeling schemes for A*}

    \author{Talya Eden  \thanks{CSAIL at MIT,  Boston University Department of Computer Science, \textit{talyaa01@gmail.com}. Partially supported by the NSF Grant CCF-1740751, the Eric and
        Wendy Schmidt Fund,  Ben-Gurion University, and the Computer Science Department at Boston University.} \\ Boston University and MIT
    \and Piotr Indyk \thanks{CSAIL at MIT, \textit{indyk@mit.edu}. Partially supported by the NSF TRIPODS program (awards CCF-1740751 and DMS-2022448) and Simons Investigator Award.}\\ MIT 
    \and    Haike Xu \thanks{IIIS, Tsinghua University, \textit{xhk18@mails.tsinghua.edu.cn}.} \\ Tsinghua University
}
\date{\vspace{-5ex}}

\begin{document}
\maketitle

\usetikzlibrary{trees}

	\tikzset{dotted pattern/.style args={#1}{
		postaction=decorate,
		decoration={
			markings,
			mark=between positions 0.25 and 0.75 step 0.25 with {
				\fill[radius=#1] (0,0) circle;
			}
		}
	},
	dotted pattern/.default={1pt},
}

\newcommand{\gettikzxy}[3]{%
  \tikz@scan@one@point\pgfutil@firstofone#1\relax
  \edef#2{\the\pgf@x}%
  \edef#3{\the\pgf@y}%
}

\newcommand{\lINFlb}{
\pgfdeclarelayer{nodelayer}
	\pgfdeclarelayer{edgelayer}
	
	\pgfsetlayers{edgelayer,nodelayer}
	\tikzstyle{none}=[draw,circle,fill=white,minimum size=.3cm]
	\tikzstyle{empty}=[]
	\tikzstyle{peeled}=[draw,circle,fill=gray!30,minimum size=.5cm]
	\tikzstyle{pruned}=[draw,circle,fill=blue!30,minimum size=.5cm]%,pattern=north east lines]
 \begin{scope}[xscale=.8,yscale=.8]

	\begin{pgfonlayer}{nodelayer}
		\node [style=none] (0) at (-2, 3) {$a_1$};
		\node [style=none] (1) at (0, 3) {$a_2$};
		\node [style=none] (3) at (1, 1) {$a_3$};
		\node [style=none] (4) at (-1, -0.25) {$a_{i}$};
		\node [style=none] (2) at (-3, 1) {\small $a_m$};
		\node [style=none, label=$b^1_1$] (5) at (-4, 5) {};
		\node [style=none, label=$b^2_1$] (6) at (-3.3, 5) {};
		\node [style=none, label=$b^k_1$] (7) at (-2, 5) {};
		\node [style=none, label=$b^1_2$] (8) at (0, 5) {};
		\node [style=none, label=$b^2_2$] (9) at (.7, 5) {};
		\node [style=none, label=$b^k_2$] (10) at (2, 5) {};
		\node [] (11) at (-5.25, -0.25) {}; %b_m
		\node [] (12) at (-4.5, -0.25) {}; %b_m
		\node [] (13) at (-3.5, -0.25) {}; %b_m
		\node [style=none, label=below:$b^1_i$] (14) at (-2, -1.75) {};
		\node [style=none, label=below:$b^2_i$] (15) at (-1.3, -1.75) {};  
		\node [style=none, label=below:$b^k_i$] (16) at (0, -1.75) {};
		\node [] (17) at (1.75, -0.5) {}; %b_3
		\node [] (18) at (2.5, -0.5) {};
		\node [] (19) at (3, -0.5) {};
	\end{pgfonlayer}
	\begin{pgfonlayer}{edgelayer}
	\path[dotted pattern] (6) -- (7);
	\path[dotted pattern] (9) -- (10);
% 	\path[dotted pattern] (12) -- (13);
	\path[dotted pattern] (6) -- (7);
	\path[dotted pattern] (15) -- (16);
		\draw (0.center) -- coordinate[midway] (05) (5.center);
		\node [left,yshift=3pt, xshift=-.2cm] at (05) {$w_0$};

		\draw (0.center) -- coordinate[midway] (06) (6.center);
		\node [right,yshift=3pt, xshift=0cm] at (06) {$w_0$};
		\draw (0.center) -- coordinate[midway] (07) (7.center);
		\node [right,yshift=3.2pt, xshift=0cm] at (07) {$w_0$};
		
		\draw (0.center) to (6.center);
		\draw (0.center) to (7.center);
		\draw[] (1.center) to (10.center);
		\draw[] (1.center) to (8.center);
		\draw[] (1.center) to (9.center);
		\draw[dashed] (3.center) to (17.center);
		\draw[dashed] (3.center) to (18.center);
		\draw[dashed] (3.center) to (19.center);
		\draw[dashed] (2.center) to (13.center);
		\draw[dashed] (2.center) to (12.center);
		\draw[dashed] (2.center) to (11.center);
		\draw (4.center) to (15.center);
		\draw (4.center) to (14.center);
		\draw (4.center) to (16.center);
		\draw[dotted] (2.center) to (4.center);
		\draw[dotted] (4.center) to (3.center);
		\draw (3.center) to (1.center);
		\draw (1.center) to (0.center);
		\draw (0.center) to (2.center);
		\draw (0.center) to (4.center);
		\draw (2.center) to (1.center);
		\draw (2.center) to (3.center);
		\draw (4.center) to (1.center);
		\draw (0.center) to (3.center);
		\draw (1.center) -- coordinate[midway] (13) (3.center);
		\node [right,yshift=0pt, xshift=0cm] at (13) {$w_{ij}$};
	\end{pgfonlayer}
\end{scope}
}

\newcommand{\AvgCase}{

\pgfdeclarelayer{nodelayer}
	\pgfdeclarelayer{edgelayer}
	
	\pgfsetlayers{edgelayer,nodelayer}
	\tikzstyle{none}=[draw,circle,fill=white,minimum size=.3cm]
	\tikzstyle{empty}=[]

\begin{tikzpicture}
	\begin{pgfonlayer}{nodelayer}
		\node [style=none] (0) at (-1.5, 2) {$v_1$};
		\node [style=none] (1) at (0, 3) {$v_2$};
		\node [style=none] (2) at (1.5, 2) {$v_3$};
		\node [style=none] (3) at (1.5, 0) {$v_4$};
		\node [style=none] (4) at (0, -1) {$v_i$};
		\node [style=none] (5) at (-1.5, 0) {$v_n$};
		\node [style=none] (6) at (-3, 1) {$u_i$};
		\node [style=none] (13) at (-3, 3.75) {$u_1$};
		\node [style=none, label=right:$u_{\log n}$] (20) at (-3, -2) {};
		\begin{scope}[xshift=-.7cm]
		\node [style=none, label=right:$b_{i,1}$] (7) at (-4, 2) {};
		\node [style=none, label=right:$b_{i,j}$] (8) at (-4, 1) {};
		\node [style=none, label=right:$b_{i,\log n}$] (9) at (-4, 0) {};
		\node [style=none, label=left:$a_{i,1}$] (10) at (-6, 2) {};
		\node [style=none, label=left:$a_{i,j}$] (11) at (-6, 1) {};
		\node [style=none, label=left:$a_{i,\log n}$] (12) at (-6, 0) {};
		\node [style=none] (g1) at (-5,0) {}; 
		\node [style=none] (g2) at (-5,-1) {};
		\path[dotted pattern] (g1) -- (g2) ;
		\node [style=none] (g3) at (-5,2) {}; 
		\node [style=none] (g4) at (-5,1) {};
		\path[dotted pattern] (g3) -- (g4) ;
		\node [style=none, label=right:$b_{1,1}$] (14) at (-4, 4.75) {};
		\node [style=none, label=right:$b_{1,j}$] (15) at (-4, 3.75) {};
		\node [style=none, label=right:$b_{1,\log n}$] (16) at (-4, 2.75) {};
		\node [style=none, label=left:$a_{1,1}$] (17) at (-6, 4.75) {};
		\node [style=none, label=left:$a_{1,j}$] (18) at (-6, 3.75) {};
		\node [style=none, label=left:$a_{1,\log n}$] (19) at (-6, 2.75) {};
		\node [style=none] (21) at (-4, -1) {};
		\node [style=none] (22) at (-4, -2) {};
		\node [style=none] (23) at (-4, -3) {};
		\node [style=none] (24) at (-6, -1) {};
		\node [style=none] (25) at (-6, -2) {};
		\node [style=none] (26) at (-6, -3) {};

		\end{scope}
		\node [style=none] (27) at (0, 4) {};
		\node [style=none] (28) at (1, 5) {};
		\node [style=none] (29) at (0, 5) {};
		\node [style=none] (30) at (-1, 5) {};
		\node [style=none] (31) at (1, 7) {};
		\node [style=none] (32) at (0, 7) {};
		\node [style=none] (33) at (-1, 7) {};
		\node [style=none] (34) at (3, 1) {};
		\node [style=none] (35) at (4, 0) {};
		\node [style=none] (36) at (4, 1) {};
		\node [style=none] (37) at (4, 2) {};
		\node [style=none] (38) at (6, 0) {};
		\node [style=none] (39) at (6, 1) {};
		\node [style=none] (40) at (6, 2) {};
		\node [style=none] (41) at (0, -2.75) {};
		\node [style=none] (42) at (-1, -3.75) {};
		\node [style=none] (43) at (0, -3.75) {};
		\node [style=none] (44) at (1, -3.75) {};
		\node [style=none] (45) at (-1, -5.75) {};
		\node [style=none] (46) at (0, -5.75) {};
		\node [style=none] (47) at (1, -5.75) {};
	\end{pgfonlayer}
	\begin{pgfonlayer}{edgelayer}
	\draw (19.center) to (15.center);
		\draw (0.center) to (1.center);
		\draw (1.center) to (2.center);
		\draw (2.center) to (3.center);
		\draw [dashed] (3.center) to (4.center);
		\draw [dashed] (4.center) to (5.center);
		\draw (5.center) to (0.center);
		\draw (0.center) to (2.center);
		\draw (0.center) to (3.center);
		\draw (0.center) to (4.center);
		\draw (1.center) to (3.center);
		\draw (1.center) to (4.center);
		\draw (1.center) to (5.center);
		\draw (2.center) to (4.center);
		\draw (2.center) to (5.center);
		\draw (0.center) to (6.center);
		\draw (6.center) to (7.center);
		\draw (6.center) to (8.center);
		\draw (6.center) to (9.center);
		\draw (7.center) to (10.center);
		\draw (7.center) to (11.center);
		\draw (7.center) to (12.center);
		\draw (8.center) to (10.center);
		\draw (8.center) to (11.center);
		\draw (9.center) to (10.center);
		\draw (9.center) to (11.center);
		\draw (9.center) to (12.center);
		\draw (5.center) to (6.center);
		\draw (3.center) to (5.center);
		\draw (13.center) to (14.center);
		\draw (13.center) to (15.center);
		\draw (13.center) to (16.center);
		\draw (14.center) to (17.center);
		\draw (14.center) to (18.center);
		\draw (14.center) to (19.center);
		\draw (15.center) to (17.center);
		\draw (15.center) to (18.center);
		\draw (16.center) to (17.center);
		\draw (16.center) to (18.center);
		\draw (16.center) to (19.center);
		\draw (20.center) to (21.center);
		\draw (20.center) to (22.center);
		\draw (20.center) to (23.center);
		\draw (21.center) to (24.center);
		\draw (21.center) to (25.center);
		\draw (21.center) to (26.center);
		\draw (22.center) to (24.center);
		\draw (22.center) to (25.center);
		\draw (23.center) to (24.center);
		
		\draw [in=330, out=150] (23.center) to (25.center);
		\draw [in=360, out=180] (23.center) to (26.center);
		\draw (27.center) to (28.center);
		\draw (27.center) to (29.center);
		\draw (27.center) to (30.center);
		\draw (28.center) to (31.center);
		\draw (28.center) to (32.center);
		\draw (28.center) to (33.center);
		\draw (29.center) to (31.center);
		\draw (29.center) to (32.center);
		\draw (30.center) to (31.center);
		\draw [in=-120, out=60] (30.center) to (32.center);
		\draw [in=-90, out=90] (30.center) to (33.center);
		\draw (34.center) to (35.center);
		\draw (34.center) to (36.center);
		\draw (34.center) to (37.center);
		\draw (35.center) to (38.center);
		\draw (35.center) to (39.center);
		\draw (35.center) to (40.center);
		\draw (36.center) to (38.center);
		\draw (36.center) to (39.center);
		\draw (37.center) to (38.center);
		\draw [in=150, out=-30] (37.center) to (39.center);
		\draw [in=180, out=0] (37.center) to (40.center);
		\draw (41.center) to (42.center);   
		\draw (41.center) to (43.center);
		\draw (41.center) to (44.center);
		\draw (42.center) to (45.center);
		\draw (42.center) to (46.center);
		\draw (42.center) to (47.center);
		\draw (43.center) to (45.center);
		\draw (43.center) to (46.center);
		\draw (44.center) to (45.center);
		\draw [in=60, out=-120] (44.center) to (46.center);
		\draw [in=90, out=-90] (44.center) to (47.center);
		\draw (8.center) to (12.center);
		\draw (26.center) to (22.center);
		\draw (36.center) to (40.center);
		\draw (43.center) to (47.center);
		\draw (33.center) to (29.center);
	\end{pgfonlayer}
\end{tikzpicture}

}

% \begin{figure}[ht!]
% \AvgCase
% \caption{Caption}
% \end{figure}

\newcommand{\beaconLB}{

\pgfdeclarelayer{edgelayer}
	
	\pgfsetlayers{edgelayer,nodelayer}
	\tikzstyle{none}=[draw,circle,fill=white,minimum size=.3cm]
	\tikzstyle{empty}=[]
	\tikzstyle{margin}=[draw,red, thick]

\begin{scope}[xscale=0.8]

	\begin{pgfonlayer}{nodelayer}
		\node [style=none, label=left:$a_1$] (l0) at (-3, 5) {};
		\node [style=none, label=left:$a_2$] (l1) at (-3, 4) {};
		\node [style=none, label=left:$a_3$] (l2) at (-3, 3) {};
		\node [style=none, label=left:$a_n$] (l3) at (-3, 1.5) {};
         
		\node [style=none] (r0) at (0, 5) {};
		
		\node [style=none] (r1) at (0, 4) {};
		\node [style=none] (r2) at (0, 3) {};
		\node [style=none] (r3) at (0, 1.5) {};
		\node[] (lr0) at (1.4,5) {$b_1$};
		\node[] (lr0) at (1.4,4) {$b_2$};
		\node[] (lr0) at (1.4,3) {$b_3$};
		\node[] (lr0) at (1.4,1.5) {$b_n$};

	\end{pgfonlayer}
	\begin{pgfonlayer}{edgelayer}
\foreach \i in {0,1,2,3}{
        \foreach \j in {0,1,2,3}{
        \ifthenelse{\i=\j}{}{
          \draw (l\i.center) to (r\j.center) ;
          }
        }
        
    }
	
		\draw [bend left=50,looseness=1.3] (r0.center) to (r3.center);
		\draw [bend left=50,looseness=1.5] (r0.center) to (r2.center);
		\draw [bend left=50,looseness=1.5] (r1.center) to (r3.center);
		\draw (r0.center) to (r2.center);
		\draw (r1.center) to (r2.center);
		\draw [bend left=50,looseness=1.5] (r2.center) to (r3.center);
		\path[dotted pattern] (r2.south) -- (r3.north);
		\path[dotted pattern] (l2.south) -- (l3.north);
	\end{pgfonlayer}
	 
	\end{scope}
}

% \begin{figure}[ht!]
% \beaconLB
% \caption{Caption}
% \end{figure}

\newcommand{\newGrid}[1]{
     \tikzstyle{every node}=[font=\small]

% \pgfdeclarelayer{edgelayer}
% \pgfdeclarelayer{nodelayer}

% 	\pgfsetlayers{edgelayer,nodelayer}
	\tikzstyle{none}=[]
	\tikzstyle{empty}=[]
	\tikzstyle{margin}=[draw,red, thick]
% 	\begin{pgfonlayer}{nodelayer}
    
    \foreach \i in {0,1,2,3,4,5,6,7} { %declare nodes
        %horizontal margin nodes
        \node[] (t\i) at (\i,0) {};
        \node[] (m\i) at (\i,-4) {};
        \node[] (b\i) at (\i,-7) {};
        %vertical margin nodes
        \node[] (l\i) at (0,-\i) {};
        \node[] (mv\i) at (4,-\i) {};
        \node[] (r\i) at (7,-\i) {};
    }
    \node[] (bl) at (7,-7) {};
    
    \foreach \i in {0,1,2,3,4} { 
        \draw[black] (t\i.center) to (\i,-2);
        \draw[black,dotted] (\i,-2) to (\i,-3);
        \draw[black] (\i,-3) to (m\i.center);
        \draw[red,very thick] (m\i.center) to (b\i.center);
        
        \draw[black] (l\i.center) to (2,-\i);
        \draw[black, dotted] (2,-\i) to (3,-\i);
        \draw[black] (3,-\i)  to (mv\i.center);
        \draw[red,very thick] (mv\i.center) to (r\i.center);
    }
    \foreach \i in {5,6,7} { 
        \draw[black] (t\i.center) to (m\i.center);
        \draw[black] (l\i.center) to (mv\i.center);
    }
    \draw[red, very thick] (t0.center) to (m0.center);
    \draw[red, very thick] (t0.center) to (t4.center);
    \draw[red, very thick] (t7.center) to (m7.center);
    \draw[red, very thick] (b0.center) to (b4.center);
    % \draw[black] (m7.center) to (bl.center);
    % \draw[black] (b4.center) to (bl.center);
    
    \node[] (labelij) at (1.5,-1.5) {$w_{i,j}$};
    \node[] (labeli) at (1.5,.5) {$i$};
    \node[] (labelj) at (-.5,-1.5) {$j$};
    
    \draw [decorate,decoration={brace,amplitude=10pt,mirror,raise=4pt},yshift=0pt] (m7) -- (t7) node [black,midway,xshift=0.8cm] {$m$};
	\draw [decorate,decoration={brace,amplitude=10pt,mirror,raise=4pt},yshift=0pt] (m0) -- (b0) node [black,midway,xshift=-0.8cm] {$k$};
    \draw [decorate,decoration={brace,amplitude=10pt,raise=4pt},yshift=0pt] (t4) -- (t7) node [black,midway,yshift=0.8cm] {$k$};
    \draw [decorate,decoration={brace,amplitude=10pt,raise=4pt},yshift=-0.8cm] (b4) -- (b0) node [black,midway,yshift=-0.8cm] {$m$};
    
    \draw[-latex,black, #1] (3.5,-7) to (3.5,0);
    \draw[-latex,black, #1] (7,-3.5) to (0,-3.5);
}

\newcommand{\GridTuple}{
\begin{scope}[xscale=.6, yscale=.6]
\newGrid{thick, black!30}
\coordinate[draw, circle, blue, fill=blue, inner sep=2] (blue) at (4.5,-3.5);
\coordinate[draw, circle, blue, fill=blue, inner sep=2] (blue) at (6.5,-2.5);
\coordinate[draw, circle, blue, fill=blue, inner sep=2] (blue) at (5.5,-1.5);
\coordinate[draw, circle, blue, fill=blue, inner sep=2] (blue) at (5.5,-0.5);

\coordinate[draw, circle, blue, fill=blue, inner sep=2] (blue) at (1.5,-4.5);
\coordinate[draw, circle, blue, fill=blue, inner sep=2] (blue) at (3.5,-5.5);
\coordinate[draw, circle, blue, fill=blue, inner sep=2] (blue) at (2.5,-6.5);
\coordinate[draw, circle, blue, fill=blue, inner sep=2] (blue) at (0.5,-6.5);
\end{scope}
}

% \begin{figure}[ht!]
% \begin{tikzpicture}
% \newGrid

% \end{tikzpicture}
% \caption{Caption}
% \end{figure}

\newcommand{\sparseLB}{
\begin{scope}[yscale=.8]
    \node[draw, circle,label={[xshift=0cm, yshift=0.2cm]$a_{1,0}$}] (mid) at (2,-3.75) {};%{mid};
    \node[draw, circle, label={[yshift=.cm]$a_1$}] (mid2) at (6,-3.75) {};%{mid2};
    \node[draw, circle, blue, label={[yshift=.1cm,xshift=-.2cm,blue]$a_2$}] (a2) at (6,-8) {};%{mid2};
    % \node[draw, circle,label={[ yshift=0.3cm]$\bar a_{1,0}$}] (label) at (16,-3.5) {};%{mid};
    \foreach \i in {1.5,3,4.5,6} {
        \node[draw, circle] (l) at (0,-\i) {};%{l\i};
        \node[draw, circle] (r) at (4,-\i) {};%{r\i};
        \draw[black] (l) to (mid);
        \draw[black] (r) to (mid);
        \draw[black] (r) to (mid2);
    }
    \node[] (labela) at (-1.1,-1) {$a_{1,1}$};
    \node[] (labelan) at (-1.1,-6) {$a_{1,4}$};
    % \path[dotted pattern] (0,-2.5) -- (0,-4.5);
    % \path[dotted pattern] (4,-2.5) -- (4,-4.5);
    \node[] (labela) at (2.8,-1) {$b_{1,1}$};
    \node[] (labelan) at (2.8,-6) {$b_{1,4}$};
    % \path[dotted pattern] (-.1,-2.5) -- (-.1,-4.5);
    \node[] (labela) at (15.5,-1) {$\bar a_{1,1}$};
    \node[] (labelan) at (15.5,-6) {$\bar a_{1,4}$};
    % \path[dotted pattern] (-1.1,-2.5) -- (-1.1,-4.5);
    % \path[dotted pattern] (-1.1,-2.5) -- (-1.1,-4.5);
    \draw[blue, dashed, thin] (a2) to (4,-7);
    \draw[blue, dashed, thin] (a2) to (4,-7.66);
    \draw[blue, dashed, thin] (a2) to (4,-8.33);
    \draw[blue, dashed, thin] (a2) to (4,-9);
    
    \begin{scope}[xshift=12.4cm]
        \node[draw, circle,label={[yshift=0.2cm]$\bar{a}_1$}] (mid3) at (0,-3.75) {}; %{mid3};
        \node[draw, circle,blue, label={[yshift=0.1cm,xshift=.2cm, blue]$\bar{a}_2$}] (bara2) at (0,-8) {}; %{mid3};
        % \node[draw, circle] (mid4) at (4,-3.5) {};%{mid4};
        \foreach \i in {1.5,3,4.5,6} {
            \node[draw, circle] (l1) at (2,-\i) {};%{l\i};
            % \node[draw, circle] (r1\i) at (6,-\i) {};%{r\i};
            \draw[black] (l1) to (mid3);
            % \draw[black] (l1\i) to (mid4);
            % \draw[black] (r1\i) to (mid4);
        }        
        \draw[blue, dashed,thin] (2,-7) to (bara2);
        \draw[blue, dashed,thin] (2,-7.66) to (bara2);
        \draw[blue, dashed,thin] (2,-8.33) to (bara2);
        \draw[blue, dashed,thin] (2,-9) to (bara2);
        % \path[dotted pattern] (2,-2.5) -- (2,-4.5);
    % \path[dotted pattern] (6,-2.5) -- (6,-4.5);
    \end{scope}
\end{scope}

\begin{scope}[yshift=1cm]
     
    \begin{scope}[xshift=9cm,yshift=-5.75cm] %draw the middle star
     \node[circle,black,draw,label] (center) at (0,0) {\small $c_0$};
        \foreach \i in {1,...,4}{
            \ifthenelse{\i=7}
                {}
                {\node[circle,draw, black]  (n\i) at ({\i*360/4}:1.5cm) {};
                \draw (center)--(n\i);
                }
        }

        % \foreach \i in {1,...,8}{
        %         \pgfmathtruncatemacro{\j}{mod(10-\i,8)} 
        %     \node[circle,draw, blue]  (dots\i) at ({\i*360/14}:.3cm) {};
        % }
        \node also [label={[xshift=.2cm, yshift=.1cm]$c_4$}] (n2) {};
        \node also [label={above:$c_1$}] (n1) {};
        \node also [label={below:$c_3$}] (n3);
        \node also [label={[xshift=-.2cm, yshift=.1cm]$c_2$}] (n4);
        % \node[circle,inner sep=.3pt,fill,draw, black,label={}]  (check) at (-120:1cm) {};
        % \node[circle,inner sep=.3pt,fill, draw, black,label={}]  (check) at
        % (-135:1cm) {};
        % \node[circle,inner sep=.3pt,fill, draw, black,label={}]  (check) at
        % (-150:1cm) {};
        % \tkzDefPoint(0,4.5){N}

    \end{scope}
        \draw   (mid2) to  [bend right] (n3);
        \draw   (mid2) to  [bend left] (n2);
        % \draw   (mid3) to  [out=230, in=-10] (n3);
        \draw   (mid3) to  [bend right] (n1);
        \draw   (mid3) to  [bend right] (n4);
        % \draw   (mid2) to  [out=-30,in=200] (n5);
        \draw[blue, dashed, thin] (n1) to [bend left=40] (bara2) ;
        \draw[blue, dashed, thin] (n3) to [bend left=15] (bara2) ;
        \draw[blue, dashed, thin] (n2) to [bend right=20] (a2) ;
        \draw[blue, dashed, thin] (n4) to [bend left=15] (a2) ;

        % \draw (mid3) to  [out=230,in=-20] (n6);
        % \draw   (mid3) to  [out=230,in=-20] (n7);
        % \draw   (mid3) to  [out=150,in=20] (n8);
\end{scope}

}

% \begin{figure}[ht!]
% \begin{tikzpicture}
% \begin{scope}[xscale=0.8,yscale=.9]
% \sparseLB    
% \end{scope}

% \end{tikzpicture}
% \caption{Caption}
% \end{figure}

\newcommand{\drawTriangle}{

\node[draw,circle,label=right:{$a_1$}] (a1) at (0,0) {}; 
\node[draw,circle,label=right:{$a_2$}] (a2) at (1.75,-2) {};
\node[draw,circle,label=left:{$a_3$}] (a3) at (-1.75,-2) {};
\draw[black] (a1) to (a2) to (a3) to
(a1);

\foreach \i in {1,2,3} {
  \ifthenelse{\i=3}{\def\col{blue}}{\def\col{white}};
    \node[draw, circle,fill=\col] (b1\i) at (-2+\i,1.5) {};
    \node[draw, circle,fill=\col] (b2\i) at (-.25+\i,-3.5) {};
    \node[draw, circle,fill=\col] (b3\i) at (-3.75+\i,-3.5) {};
    \draw[black] (b1\i) to (a1) ;
    \draw[black] (b2\i) to (a2) ;
    \draw[black] (b3\i) to (a3) ;
} 
% \node[draw, circle,fill=blue,blue] (a1f3) at (1,1.5) {};
}

% \begin{figure}[ht!]
% \begin{tikzpicture}
% \begin{scope}[xscale=0.8,yscale=.8]
% \triangle    
% \end{scope}
% \end{tikzpicture}
% \caption{Caption}
% \end{figure}

\newcommand{\drawParen}{
\begin{scope}[yshift=-.6cm]

\node[draw, circle, label=above:$a$] (root) at (0,0) {};
\node[label=above:{\small \textcolor{blue}{ \,$(1,10)$}}] at (0.7,-0.05) {};
\node[draw, circle, label=below:$b$] (cl) at (-1,-1) {};
\node[label=below:{\small \textcolor{blue}{ \,$(2,7)$}}] at (-1,-1.4) {};
\node[draw, circle, label=below:$e$] (cr) at (1,-1) {};
\node[label=below:{\small \textcolor{blue}{ \,$(8,9)$}}] at (1,-1.4) {};
\node[draw, circle, label=below:$c$] (gl) at (-2,-2) {};
\node[label=below:{\small \textcolor{blue}{ \,$(3,4)$}}] at (-2,-2.4) {};
\node[draw, circle, label=below:$d$] (gr) at (0,-2) {};
\node[label=below:{\small \textcolor{blue}{ \,$(5,6)$}}] at (0,-2.4) {};

\draw[black] (root) to (cl); 
\draw[black] (root) to (cr);
\draw[black] (cl) to (gl);
\draw[black] (cl) to (gr);
\end{scope}

\node[] (nums) at (4.9,-0.3) { \small 1\;\;2\;\;3\;4\;\;\;5\;6\;\;7\;\;\;8\;9\;10 };

\node[] (paren) at (5, -1) { \Large(\;(\;(\;)\;(\;)\;)\;(\;)\;)\; };

\draw [blue, decorate,decoration={brace,amplitude=3pt,mirror,raise=4pt},yshift=0pt] (3.87,-1.3) -- (4.37,-1.3) node [black,midway,yshift=-0.4cm] {\blue $c$};
\draw [blue,decorate,decoration={brace,amplitude=3pt,mirror,raise=4pt},yshift=0pt] (4.67,-1.3) -- (5.17,-1.3) node [black,midway,yshift=-0.4cm] {\blue $d$};
\draw [blue,decorate,decoration={brace,amplitude=3pt,mirror,raise=4pt},yshift=0pt] 
(5.91,-1.3) -- (6.41,-1.3) node [black,midway,yshift=-0.4cm] {\blue $e$};
\draw [blue,decorate,decoration={brace,amplitude=8pt,mirror,raise=4pt},yshift=0pt] 
(3.4,-2) -- (5.6,-2) node [black,midway,yshift=-0.7cm] {\blue $b$};
\draw [blue,decorate,decoration={brace,amplitude=8pt,mirror,raise=4pt},yshift=0pt] 
(3,-3) -- (7,-3) node [black,midway,yshift=-0.7cm] {\blue $a$};

}

\newcommand{\InfoLB}{
\pgfdeclarelayer{nodelayer}
	\pgfdeclarelayer{edgelayer}
	
	\pgfsetlayers{edgelayer,nodelayer}
	\tikzstyle{none}=[draw,circle,fill=white,minimum size=.3cm]
	\tikzstyle{empty}=[]
	\tikzstyle{peeled}=[draw,circle,fill=gray!30,minimum size=.5cm]
	\tikzstyle{pruned}=[draw,circle,fill=blue!30,minimum size=.5cm]%,pattern=north east lines]
 \begin{scope}[xscale=.7,yscale=.7]

	\begin{pgfonlayer}{nodelayer}
		\node [style=none] (0) at (-2, 3) {$a_1$};
		\node [style=none] (1) at (0, 3) {$a_2$};
		\node [style=none] (3) at (1, 1) {$a_3$};
		\node [style=none] (4) at (-1, -0.25) {$a_{i}$};
		\node [style=none] (2) at (-3, 1) {\small $a_m$};
		\node [style=none, label=$b^1_1$] (5) at (-4, 5) {};
		\node [style=none, label=$b^2_1$] (6) at (-3.3, 5) {};
		\node [style=none,fill=blue, label=$b^k_1$] (7) at (-2, 5) {};
		\node [style=none, label=$b^1_2$] (8) at (0, 5) {};
		\node [style=none,fill=blue, label=$b_2^2$] (9) at (.7, 5) {};
		\node [style=none, label=$b^k_2$] (10) at (2, 5) {};
		\node [] (11) at (-5.25, -0.25) {}; %b_m
		\node [] (12) at (-4.5, -0.25) {}; %b_m
		\node [] (13) at (-3.5, -0.25) {}; %b_m
		\node [style=none,fill=blue, label=below:$b^1_i$] (14) at (-2, -1.75) {};
		\node [style=none, label=below:$b^2_i$] (15) at (-1.3, -1.75) {};  
		\node [style=none, label=below:$b^k_i$] (16) at (0, -1.75) {};
		\node [] (17) at (1.75, -0.5) {}; %b_3
		\node [] (18) at (2.5, -0.5) {};
		\node [] (19) at (3, -0.5) {};
	\end{pgfonlayer}
	\begin{pgfonlayer}{edgelayer}
	\path[dotted pattern] (6) -- (7);
	\path[dotted pattern] (9) -- (10);
% 	\path[dotted pattern] (12) -- (13);
	\path[dotted pattern] (6) -- (7);
	\path[dotted pattern] (15) -- (16);
		\draw (0.center) -- coordinate[midway] (05) (5.center);
		\node [left,yshift=3pt, xshift=-.2cm] at (05) {1};

		\draw (0.center) -- coordinate[midway] (06) (6.center);
		\node [right,yshift=3pt, xshift=0cm] at (06) {1};
		\draw (0.center) -- coordinate[midway] (07) (7.center);
		\node [right,yshift=3.2pt, xshift=0cm] at (07) {1};
		\draw[] (1.center) to (10.center);
		\draw[] (1.center) to (8.center);
		\draw[] (1.center) to (9.center);
		\draw[dashed] (3.center) to (17.center);
		\draw[dashed] (3.center) to (18.center);
		\draw[dashed] (3.center) to (19.center);
		\draw[dashed] (2.center) to (13.center);
		\draw[dashed] (2.center) to (12.center);
		\draw[dashed] (2.center) to (11.center);
		\draw (4.center) to (15.center);
		\draw (4.center) to (14.center);
		\draw (4.center) to (16.center);
		\draw[dotted] (2.center) to (4.center);
		\draw[dotted] (4.center) to (3.center);
		\draw (3.center) to (1.center);
		\draw (1.center) to (0.center);
		\draw (0.center) to (2.center);
		\draw (0.center) to (4.center);
		\draw (2.center) to (1.center);
		\draw (1.center) -- coordinate[midway] (13) (3.center);
		\node [right,yshift=0pt, xshift=0cm] at (13) {$w_{ij}=6\cdot(2^b+\delta_{ij})-2$};
		\draw (2.center) to (3.center);
		\draw (4.center) to (1.center);
		\draw (0.center) to (3.center);
	\end{pgfonlayer}

\end{scope}
}

\begin{abstract}
A* is a classic and popular method for graphs search and path finding. It assumes the existence of a heuristic function $h(u,t)$ that estimates the shortest distance from any input node $u$ to the destination $t$. Traditionally, heuristics have been handcrafted by domain experts.  However, over the last few years, there has been a growing interest in learning heuristic functions. Such learned heuristics estimate the distance between given nodes based on ``features'' of those nodes.

In this paper we formalize and initiate the study of such feature-based heuristics. In particular, we consider heuristics induced by norm embeddings and distance labeling schemes, and provide lower bounds for the tradeoffs between the number of dimensions or bits used to represent each graph node, and the running time of the A* algorithm. We also show that, under natural assumptions, our lower bounds are almost optimal.
\end{abstract}

\section{Introduction}
A* is a classic and popular method for graphs search and path finding. It provides a method for computing the shortest path in a given weighted graph $G=(V,E,W)$ from a source $s$ to a destination $t$ that is often significantly faster than classic algorithms for this problem. It assumes the existence of a heuristic function $h(u,t)$ that estimates the shortest distance from any input node $u$ to the destination $t$. The algorithm uses the greedy approach, at each step selecting a node $u$ that minimizes $d(s,u)+h(u,t)$, where $d(s,u)$ is the (already computed) distance from $s$ to $u$. Alternatively, A* can be viewed as a variant of Dijkstra algorithm, with the distance function $d(u,v)$ replaced by $d(u, v) + h(v,t) - h(u,t)$. Since its inception in the 1960s, the algorithm has found many application, e.g., to robotics~\citep{yonetani2020path,bhardwaj2017learning}, game solving~\citep{cui2011based}, computational organic chemistry~\citep{chen2020retro}. Over the last two decades, it has been also shown be highly effective for ``standard'' shortest path computation tasks in road networks~\citep{goldberg2005computing}. 

The performance of A* is governed by the quality of the heuristic function that estimates the distance from a given node to the target. For example, if the heuristic function is perfect, i.e., $h(u,t)=d(u,t)$ for all nodes $u$, then the number of vertices scanned by A* is proportional to the number of hops in the shortest path. In practice, the heuristic function is carefully selected based on the properties of the underlying class of problems. E.g., for graphs whose vertices corresponds to points in the plane (e.g., shortest paths in road networks or paths avoiding 2D obstacles), typical choices include Euclidean, Manhattan or Chebyshev distances. In many cases, the task of identifying an appropriate heuristic for a given problem can be quite difficult. 

Over the last few years, there has been a growing interest in {\em learning} heuristic functions based on the properties of input graphs, see e.g.,~\citep{yonetani2020path, bhardwaj2017learning, chen2020retro}. Such predictors are trained on a collection of input graphs. After training, the predictor estimates the distance between given nodes based on the pre-computed ``features'' of those nodes, plus possibly other auxiliary information (see Section~\ref{s:info} for a discussion). The features are vectors in a $d$-dimensional space, either handcrafted based on the domain knowledge, or trained using machine learning methods. Thus, such learned heuristics bear similarity to  {\em metric embeddings}~\citep{naor2018metric} or {\em distance labels}~\citep{gavoille2004distance}, two notions that have been extensively investigated in theoretical computer science and mathematics. The notions of the quality of embeddings in the above lines of research is, however,  quite different from what is required in the A* context. Specifically, a typical objective of metric embeddings or distance labelling is to preserve the distances between every pair of vertices up to some (multiplicative) approximation factor. On the other hand, in the context of A* search,  it is acceptable if the estimated distances  deviate significantly from the ground truth,   as long as the A* search process that uses those estimates does not take too much time. At the same time, approximate estimates of the distances in the metric embedding sense might not be useful if they are not able to disambiguate between many different short paths, leading to high running times. This necessitates  studying different notions of embeddings and labeling that are tailored to the A* context. 

Motivated by these considerations, in this paper we formalize and initiate the study of  embeddings and distance labeling schemes that induce efficient A* heuristics. In particular, we study tradeoffs between the dimensionality of the embeddings (or the length of the labels in labeling schemes) and the complexity of the A* search process. Our focus is on {\em average-case} performance of A*, where the average is taken over all pairs of vertices in the graph. For lower bounds,  average-case results provide much stronger limitations than worst-case results. At the same time, those lower bounds naturally complement our algorithms, which rely on the average-case assumptions for technical reasons.

We start by formally defining the metric that we will use to evaluate the quality of a heuristic. 

\begin{definition}
For a graph $G=(V,E,W)$, and a heuristic $h$,  $P(s,t)$ denote the set of vertices on the shortest path between $s$ and $t$ with the maximal number of hops, and  let $S_h(s,t)$ be the set of vertices scanned by A* given $s,t$ and $h$. Further let $p(s,t)=|P(s,t)|$.
We say that $h$ has {\em an additive overhead} $c$ {\em on average} if
\[ \mb{E}_{s,t \in V}[|S_h(s,t)|-p(s,t)]|\le c \]
where $s$ and $t$ are chosen independently and uniformly at random from $V$.
\end{definition}

Informally, the above definition measures the number of ``extra'' vertices that needs to be scanned, in addition to the size of the shortest path. We note that one could alternatively define the overhead in a ``multiplicative'' way, by computing the ratio between the number of vertices scanned and the number of hops in a shortest paths. However, several of our results (esp. the lower bounds) rely on instances  where we are guaranteed that all shortest paths have constant number of hops. In this case, an additive overhead of $T$ automatically translates into a multiplicative overhead of $\Omega(T)$.

In this paper we focus on heuristics that are  {\em consistent}. The consistency property, introduced by  ~\citet{4082128} and stated below, implies that the A* search algorithm using the heuristic correctly identifies the shortest path upon termination.

\begin{definition}[Consistent heuristic function]\label{def:consistent_embedding}
A heuristic function $h(s,t)$ which estimates $dist(s,t)$ is  \emph{consistent} if for any destination $t$ and edge $(u,v)$, the modified cost $w_t(u,v)=w(u,v)+h(v,t)-h(u,t)$ is non-negative and also $h(t,t)=0$.
\end{definition}

\subsection{Heuristic types}
We now formally define the types of heuristics that we study in this paper. Specifically, we introduce heuristic induced by norm embeddings and labelling schemes, as well as embeddings induced by ``beacons'', which can be viewed as a special type of norm embeddings induced by the $\ell_{\infty}$ norm.

\begin{definition}[Norm heuristic]\label{def:norm_embedding}
A  heuristic function $h: V \times V \to \mathbb{R}$ is a {\em norm heuristic} induced by a function  $\pi:V\to \mathbb{R}^d$ and the $l_p$ norm  if $h(s,t)=\|\pi(s)-\pi(t)\|_p$ for any pair  of vertices $s,t \in V$. The dimension $d$ is referred to as the {\em dimension of $h$}. 
\end{definition}

Note that $h(s,t)$ as defined above satisfies triangle inequality, i.e. $h(s,u)+h(u,t)\ge h(s,t)$ for any $s,t,u\in V$.

\begin{definition}[Labeling heuristic]\label{def:labelling_embedding}
A  heuristic function $h: V \times V \to \{0 \ldots C\}$ is a {\em labeling heuristic} induced by functions $f:V\to \{0 \ldots C\}^L$ and $g:\{0 \ldots C\}^{L}\times \{0 \ldots C\}^{L}\to \{0 \ldots C\}$, if $h(s,t)=g(f(s),f(t))$ for any pair of vertices $s,t \in V$. Here the parameter $C$ is assumed to be polynomial in $|V|$, and $L$ is referred to as the {\em label length}.
\end{definition}

 Note that $f$ and $g$ must be fixed functions working for any graph and label pairs. Furthermore,  we assume no  limitation on the computational power  of $f$ and $g$.
Finally, note that the function $g$ does not have to be a norm function, or to satisfy the triangle inequality.

 We also define a particular type of heuristic that is induced by a set of ``beacons''. This is a natural class of A* heuristics that is particularly popular in the context of shortest path problems~\citep{abraham2010highway}.

\begin{definition}[Beacon heuristic]\label{def:beacon_embedding}
 A heuristic function $h: V\times V\to \mathbb{R}$ is a {\em beacon heuristic} induced by a set $B \subset V$ if   $h(s,t)=\|\pi(s)-\pi(t)\|_{\infty}$, where $\pi(s)=(dist(s,b_1),\ldots,dist(s,b_{|B|}))$ (so that  $h(s,t)=\max_{b\in B}| dist(s,b)-dist(t,b)|$).
\end{definition}

Given the above definitions, we can now formally state our results (also summarized in Table~\ref{t:results}).

\renewcommand\arraystretch{1.5}
\colorlet{lbColor}{gray!10}

\begin{table}[b!]
\begin{center}
\begin{NiceTabular}{ |c|c|c|c|c| }
\hline
 Type  & Settings &  Comments & Result  &  Reference \\ \hline
  \multirow{6}{*}{\shortstack[c] {Norm \\ heuristics}}
& $p<\infty$ & & $d=o(n/\log n) \Rightarrow \tilde \Omega(n)$ avg. overhead 
   & \Block{2-1}{Thm.~\ref{thm:lpnorm_informal}}\\  %
& $p=\infty$ &  & $d=o(\log n) \Rightarrow \tilde \Omega(n)$ avg. overhead    & \\   
\cline{2-5}
&  & general graphs & \Block{2-1}{$d=o(n^{\alpha}) \Rightarrow \tilde \Omega(n^{1-\alpha})$ avg. overhead } & Thm.~\ref{thm:linftylb_informal}\\ \cline{3-3}\cline{5-5}
& $p=\infty$ & grid graphs ($\alpha<0.5$)& & Thm.~\ref{thm:linftylb_grid} \\
\cline{3-5}
  &  weighted graphs & \cellcolor{lbColor} beacon based & \cellcolor{lbColor} \Block{2-1}{ $d=O(n^{\alpha})$ and $O(n^{1-\alpha})$ avg. overhead} &   \cellcolor{lbColor} \Block{2-1}{Thm.~\ref{thm:upper_bound_informal}}\\
  & &  \cellcolor{lbColor}Assumption~\ref{asm:break-tie} &  \cellcolor{lbColor}& \cellcolor{lbColor} \\
\hline
\multirow{3}{*}{\shortstack[c] {Labeling \\ heuristics}} & \Block{3-1}{weighted graphs} & general graphs  & \Block{2-1}{$L=o(n^{\alpha})\Rightarrow  \Omega(n^{1-\alpha})$ avg. overhead} & Thm.~\ref{thm:information_lb_informal} \\
\cline {3-3} \cline {5-5}
& & grid graphs ($\alpha<0.5$) & & Thm.~\ref{thm:information_lb} \\
\cline{3-5}
& & \cellcolor{lbColor} Assumption~\ref{asm:stronger_unique_shortest_path} & \cellcolor{lbColor} $L=O(n^\alpha)$ and $O(n^{1-\alpha})$ avg. overhead & \cellcolor{lbColor} Thm.~\ref{thm:upper_bound_not_norm_informal} \\
\hline
\end{NiceTabular}
\caption{Summary of our results. Upper bounds are colored in gray. }
\label{t:results}
\end{center}
\end{table}

\subsection{Lower bounds results} Our first set of results provide lower bounds on the trade-off between the complexity of a heuristic and the complexity of the A* algorithm. Specifically, we will show that there exist graphs where all shortest paths have constant complexity (i.e., consist of a constant number of hops), but on which A* will scan a large number of vertices unless its heuristic has large complexity. 
We start from heuristics induced by norms.
Recall that  the average-case complexity is defined with respect to the set of all possible source-destination pairs.

\begin{restatable}[Lower bound for norm heuristics]{theorem}{normlbgeneral}\label{thm:lpnorm_informal}
There exists an $n$-vertex unweighted graph $G=(V,E)$ of constant diameter such that for any consistent norm heuristic induced by an $l_p$ norm of dimension $d$, where $d=o(n/\log n)$ for $p<\infty$ or $d=o(\log n)$ for $p=\infty$,
the A* algorithm scans $\tilde{\Omega}(n)$ vertices on average.
\end{restatable}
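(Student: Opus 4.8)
The plan is to construct a single graph $G$ on which the A* overhead between almost all pairs is controlled by one quantity --- how well a consistent norm heuristic separates the vertices of a large ``core'' --- and then to show that low dimension makes the required separation geometrically impossible.

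I would take a core set $A=\{a_1,\dots,a_m\}$ with $m=\Theta(n)$ in which every two core vertices lie at a fixed constant distance (say, $A$ is a clique, or an independent set joined through a common vertex), and attach to each $a_i$ a constant-size gadget whose ``tip'' $t_i$ serves as a source/destination. The gadget should be engineered so that (i) $G$ has constant diameter and every shortest path has $O(1)$ hops, and (ii) when A* runs from $t_i$ to $t_j$, the vertex $a_\ell$ (and its gadget), for $\ell\notin\{i,j\}$, is \emph{not} scanned unless $h(a_\ell,t_j)$ is essentially as large as $d(a_\ell,t_j)$; since this true distance is common to all such $\ell$, the A* overhead on $(t_i,t_j)$ equals, up to lower-order terms, the number of core indices $\ell$ for which $h(a_\ell,t_j)$ is below that common value. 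Now I would use that $h$ is a norm: admissibility (from consistency) gives $h(u,v)=\|\pi(u)-\pi(v)\|_p\le d(u,v)$, and the triangle inequality lets one transport the constraint through the constant-size gadget. The target conclusion of this step is: if A* has small overhead on a constant fraction of source/destination pairs, then $\pi$ must send a $\tilde\Omega(n)$-size subset $A'\subseteq A$ to a configuration in $\mathbb{R}^d$ whose pairwise $\ell_p$-distances are all (nearly) equal --- i.e.\ $\pi|_{A'}$ is an (approximately) equilateral set of size $\tilde\Omega(n)$.

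The contradiction then comes from the geometry of $\ell_p^d$. For $p=\infty$ a packing argument suffices: since all pairwise core distances are $O(1)$, consistency confines $\pi(A)$ to an $\ell_\infty$ ball of radius $O(1)$, which is covered by $2^{O(d)}$ balls of radius $\tfrac14$; if $d=o(\log n)$ then some small ball already contains $n^{1-o(1)}$ of the images $\pi(a_\ell)$, which are then far too close together to be near-equilateral, so by (ii) that many core vertices get scanned for a constant fraction of destinations. For finite $p$ the packing bound is too weak (it only forbids $d=O(\log n)$), so there I would instead invoke the known upper bounds on the size of equilateral sets in $\ell_p^d$, which are of the form $O(d\cdot\mathrm{polylog}\,d)$ for the relevant $p$: with $d=o(n/\log n)$ no equilateral subset of size $\tilde\Omega(n)$ exists, contradicting the previous step. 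This is exactly where the $\log n$ loss, and the different thresholds for $p<\infty$ versus $p=\infty$, originate.

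Finally, since the overhead bound holds for a constant fraction of destinations and, for each, for essentially all sources, and since the gadget-tip pairs are a constant fraction of $V\times V$, one gets $\mathbb{E}_{s,t}\!\left[\,|S_h(s,t)|-p(s,t)\,\right]=\tilde\Omega(n)$. I expect the genuinely delicate part to be step (ii) --- designing the gadget and choosing the (integer) edge lengths so that \emph{consistency alone} forces the near-equilateral structure (ruling out a degenerate-but-still-efficient heuristic), and, crucially, amplifying ``$\pi|_A$ is not exactly equilateral'' into ``$\tilde\Omega(n)$ core pairs are noticeably short'' rather than just one. The strict-versus-non-strict inequality in the definition of the scanned set (the A* tie-breaking rule) also has to be pinned down within this step.
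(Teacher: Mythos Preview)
Your overall plan---equidistant core, per-vertex gadget, then the equilateral-set bounds in $\ell_p^d$---matches the paper's, and you have correctly located the delicate step. But for $p<\infty$ that step is not merely delicate; with your architecture it does not close. The bounds $e(\ell_p^d)=O(d\log d)$ only forbid a large \emph{exactly} equilateral subset, and a Tur\'an-type count turns that into at most $\Theta\!\bigl(m^2/e(\ell_p^d)\bigr)$ short pairs among the $\binom{m}{2}$ core pairs. With constant-size gadgets on each of $m=\Theta(n)$ core vertices, the average overhead you can extract is then only $\Theta\!\bigl(n/(d\log d)\bigr)$, which is $\omega(1)$ but nowhere near $\tilde\Omega(n)$ once $d$ is polynomial in $n$. (Your $\ell_\infty$ packing argument is genuinely stronger---it produces $n^{1-o(1)}$ mutually close images, not just one short pair---and that route can plausibly be pushed through; the trouble is specifically the finite-$p$ case.)

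The paper resolves this by \emph{not} trying to manufacture many short pairs. The core is the $n$ leaves $\{c_i\}$ of a star (pairwise graph distance $2$), and the equilateral bound is used only to obtain a \emph{single} pair $(c_u,c_v)$ with $\|\pi(c_u)-\pi(c_v)\|<2$. The amplification is then concentrated on $O(\log n)$ special vertices rather than spread over the core: one adds $k=\log n$ pairs $(a_p,\bar a_p)$, with $c_j$ adjacent to $a_p$ or $\bar a_p$ according to the $p$-th bit of $j$, so by sub-additivity the single short core pair forces $\|\pi(a_p)-\pi(\bar a_p)\|<4=d(a_p,\bar a_p)$ for the bit $p$ on which $u$ and $v$ differ. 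Each $a_p$ and $\bar a_p$ carries a $\Theta(n)$-size fan of sources/destinations, together with a $\Theta(n)$-size set $\{b_{p,j}\}$ of vertices all lying on shortest paths through $(a_p,\bar a_p)$; the one underestimated pair then makes \emph{every} query between the two fans scan all of $\{b_{p,j}\}$. That is $\Omega(n)$ overhead on $\Omega(n^2)$ of the $\Theta(n^2\log^2 n)$ query pairs, hence $\tilde\Omega(n)$ on average. The missing idea, in short, is to attach the large gadgets to $O(\log n)$ bit-indexed vertices rather than to the $n$ core vertices, so that one short core pair is already enough.
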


The proof for the above theorem appears in Section~\ref{sec:general_lb}. For the special case of the $l_{\infty}$ norm we show a stronger lower bound. However, it requires that the input graph is {\em weighted} (see Section~\ref{sec:linfty_lb}).

\begin{restatable}[Lower bound for $l_{\infty}$ norm heuristics]{theorem}{linftylbGeneral}\label{thm:linftylb_informal}
There exists a weighted $n$-vertex graph $G$ such that (i) the shortest path between any pair of vertices in $G$ consists of a constant number of hops and (ii) for any consistent norm heuristic induced by $l_{\infty}$ of dimension $d=o(n^{\alpha})$, the A* algorithm scans  $\Omega(n^{1-\alpha})$ vertices on average. 
\end{restatable}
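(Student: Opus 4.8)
The plan is to construct a single weighted graph $G$ on $\Theta(n)$ vertices with all shortest paths of $O(1)$ hops, and then to argue that every consistent $\ell_\infty^d$ heuristic with $d=o(n^\alpha)$ forces A* to scan $\Omega(n^{1-\alpha})$ vertices on the average query. I would first record the structural fact that makes $\ell_\infty$ tractable: a consistent $\ell_\infty^d$ heuristic is exactly a tuple $\pi_1,\dots,\pi_d\colon V\to\mathbb R$ of functions that are each $1$-Lipschitz with respect to $d_G$, with $h(u,t)=\max_{c}|\pi_c(u)-\pi_c(t)|$. Indeed, $h\le d_G$ is equivalent to every $\pi_c$ being $1$-Lipschitz, and $h\le d_G$ together with $h(t,t)=0$ and the edge inequality $d_G(u,v)\le w(u,v)$ yields consistency of $h$ (the modified cost is at least $w(u,v)-h(u,v)\ge w(u,v)-d_G(u,v)\ge 0$); conversely, consistency of any heuristic forces $h\le d_G$ by summing modified costs along a shortest path. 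So the task becomes: design $G$ so that no $d=o(n^\alpha)$ $1$-Lipschitz functions make the induced $\ell_\infty$ heuristic good.

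For the construction I would use a bipartite gadget with layers $A,B$, a biclique-minus-perfect-matching between them and a clique on $B$, further refined so that $B$ is partitioned into $q=n^{1-\alpha}$ ``clusters'' of $n^\alpha$ vertices each, with small edge weights inside each cluster and larger weights between clusters, and all weights tuned so that every shortest path has a constant number of hops. The design goal is that for a constant fraction of source--target pairs $(s,t)$ there is a set $U_{s,t}$ of $\Omega(n)$ ``decoy'' vertices $u$ lying on no shortest $s$--$t$ path but with $d_G(s,u)+d_G(u,t)$ only slightly exceeding $d_G(s,t)$; A* then scans such a $u$ unless the heuristic separates it from $t$ by strictly more than $d_G(s,t)-d_G(s,u)$, i.e.\ essentially by the true distance $d_G(u,t)$. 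The weights are deliberately ``tight'' (perturbations comparable to the base weights) so that the slack given for free by consistency, $h\le d_G$, is not enough: the heuristic genuinely has to reproduce $d_G(u,t)$ up to additive $o(1)$ for almost all decoys in order to prune them.

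The crux is then a packing argument across the $d$ coordinates. Fix a target $t$. A single coordinate $\pi_c$ can push $|\pi_c(u)-\pi_c(t)|$ up to the required threshold only for decoys $u$ whose $\pi_c$-value is extremal inside the cluster $C(u)$ containing $u$: if $|\pi_c(u)-\pi_c(t)|$ equals the inter-cluster distance and $\pi_c$ is $1$-Lipschitz, then every $u'\in C(u)$ has $\pi_c(u')$ on the same side of $\pi_c(t)$, so $\pi_c(u)$ must realize the minimum (or maximum) of $\pi_c$ over $C(u)$; the inter-cluster weights are then chosen so that these extremal values across different clusters are mutually incompatible, making a single $\pi_c$ decisive for decoys in only $O(1)$ clusters per pair $(s,t)$. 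Consequently $d$ coordinates prune decoys from only $O(d)$ clusters, leaving $q-O(d)=\Omega(n^{1-\alpha})$ clusters' worth of scanned decoys per hard target; averaging over the $\Omega(n^2)$ hard $(s,t)$ pairs and subtracting the $O(1)$ shortest-path length gives the stated $\Omega(n^{1-\alpha})$ average overhead. The same mechanism, specialized to a planar (grid-shaped) gadget where the cluster geometry can only be laid out in two dimensions, should yield Theorem~\ref{thm:linftylb_grid} with the restriction $\alpha<1/2$.

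I expect the main obstacle to be precisely the interplay of the last two paragraphs: choosing the weights so that (a) all shortest paths stay $O(1)$-hop, (b) pruning a decoy really forces $h(u,t)$ up to the true distance rather than to a loose lower bound already implied by consistency, and (c) each coordinate is forced to act extremally on only $O(1)$ clusters at a time — these requirements pull against each other, and reconciling them is where the bulk of the technical work lies. A secondary issue is to make all the relevant A* priority comparisons strict, so the lower bound does not hinge on an adversarial tie-breaking rule.
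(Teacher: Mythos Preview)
Your parametrization is inverted relative to what the argument needs, and the counting step $q-O(d)=\Omega(n^{1-\alpha})$ breaks down as soon as $\alpha>1/2$: with $q=n^{1-\alpha}$ clusters and $d=o(n^{\alpha})$ coordinates, once $\alpha>1/2$ you have $d\gg q$ and nothing prevents every cluster from being pruned. The paper avoids this by placing the ``hard'' combinatorics on \emph{pairs} rather than on clusters. Concretely, it takes a clique on $m=n^{\alpha}$ vertices $\{a_i\}$ with edge weights $w_{ij}=10+u_{ij}$ for i.i.d.\ uniform $u_{ij}\in[0,1]$, and attaches $k=n^{1-\alpha}$ leaves (the decoys) to each $a_i$ via edges of weight $\epsilon/(16n)$. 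The object that must be ``memorized'' is the set of $\binom{m}{2}=\Theta(n^{2\alpha})$ clique distances, not a set of $n^{1-\alpha}$ cluster identities; with $d=o(n^{\alpha})$ coordinates each handling $O(m)$ pairs you cover only $o(n^{2\alpha})$ pairs, so a constant fraction remain distorted for every $\alpha\in(0,1)$.

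The second gap is the mechanism forcing each coordinate to be ``decisive'' on few objects. Your suggestion --- that extremality of $\pi_c$ within a cluster makes the extremal values across clusters mutually incompatible --- is not pinned down, and it is hard to see what property of the weights would enforce it. The paper's device is different and concrete: call coordinate $c$ \emph{crucial} for $(a_i,a_j)$ if $|\pi_c(a_i)-\pi_c(a_j)|\ge w_{ij}-\epsilon/(2n)$, and form an auxiliary graph on $\{a_i\}$ with an edge for each pair crucial at $c$. If this graph has $\ge m$ edges it contains a simple cycle of length $\le m$, and summing the crucial inequalities around the cycle yields a $\pm1$ combination of the $w_{ij}$'s with absolute value at most $m\cdot\epsilon/(2n)\le\epsilon$. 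Randomness of the $u_{ij}$ guarantees (with probability $0.99$) that no such ``approximated tie'' exists, so each coordinate is crucial for at most $m-1$ pairs. The remaining $\Omega(m^2)$ distorted pairs $(a_i,a_j)$ each make every leaf of $a_i$ a must-scan decoy for all $k^2$ leaf--leaf queries between $a_i$ and $a_j$, giving average overhead $\Omega(m^2k^3/n^2)=\Omega(n^{1-\alpha})$; strictness of the scan inequality (your secondary worry) is immediate from the $\epsilon$-separation.
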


We now turn to heuristics induced by general labeling schemes. Our main result here is that a trade-off analogous to that in Theorem~\ref{thm:linftylb_informal} applies even to general labeling schemes \tnew{(see Section~\ref{sec:information_lb}).}

\begin{restatable}[Lower bound for labeling heuristics]{theorem}{thmInfoGeneral}
\label{thm:information_lb_informal}
For any consistent labeling heuristic function $h$ with length $L=o(n^{\alpha})$, there exists an $n$-vertex weighted graph $G$ such that (i) the shortest path between any pair of vertices in $G$ consists of a constant number of hops and (ii) the A* algorithm scans  $\Omega(n^{1-\alpha})$ vertices on average.
\end{restatable}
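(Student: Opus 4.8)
The plan is to exhibit, for the given heuristic, a hard instance drawn from a large family of graphs that all share the same vertex set and the same $O(1)$‑hop shortest‑path structure, and then to argue information‑theoretically that no short labeling can serve all of them well at once. The construction is essentially the one behind Theorem~\ref{thm:linftylb_informal}, with the geometric argument replaced by an entropy count. Take $m$ ``hub'' vertices $a_1,\dots,a_m$ forming a clique, and attach to each hub $a_i$ a set of $k$ pendant ``leaves'' $b_i^1,\dots,b_i^k$ by unit‑weight edges, so that $n=m(k+1)$; eventually we set $k=\Theta(n^{1-\alpha})$ and $m=\Theta(n^{\alpha})$. A symmetric matrix $\delta\in\{0,\dots,2^b-1\}^{\binom{m}{2}}$ parametrizes the family: the clique edge $(a_i,a_j)$ gets weight $w_{ij}=6(2^b+\delta_{ij})-2$. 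Choose $2^b$ polynomial in $n$, so all weights are $\mathrm{poly}(n)$, consistent with the bound $C=\mathrm{poly}(n)$. Because every two‑hop detour $a_i\!\to\!a_l\!\to\!a_j$ through a third hub has weight at least $2(6\cdot 2^b-2)$, which strictly exceeds $w_{ij}$, the unique shortest path between leaves $b_i^p,b_j^q$ in distinct clusters is $b_i^p\to a_i\to a_j\to b_j^q$, of length $w_{ij}+2$ and $3$ hops; all shortest paths in $G_\delta$ have at most $3$ hops, giving property~(i).

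The key local fact is how consistency pins the heuristic down on the \emph{source cluster's own leaves}. Run A$^*$ from $s=b_i^q$ to $t=b_j^{q'}$ with $i\neq j$: the target has priority $d(s,t)=w_{ij}+2$, while a sibling $b_i^{q''}$ ($q''\neq q$) sits at distance exactly $2$ from $s$, so it escapes the scan set only if $h(b_i^{q''},t)>w_{ij}$, whereas consistency (hence admissibility, Definition~\ref{def:consistent_embedding}) forces $h(b_i^{q''},t)\le d(b_i^{q''},t)=w_{ij}+2$. So not scanning $b_i^{q''}$ requires $h(b_i^{q''},b_j^{q'})$ to lie in a window of width $O(1)$ around $w_{ij}$, and since consecutive valid weights differ by $6$, any such value \emph{determines} $\delta_{ij}$. (By contrast, all hubs and all leaves of third clusters can be kept out of the scan set by some consistent heuristic carrying no information, since their heuristic values only need to be of order $6\cdot2^b$; these vertices are therefore not the bottleneck.) Crucially, whether $b_i^{q''}$ is scanned depends only on $i,j,q'',q'$ and not on the source index $q$, so one ``bad'' tuple $(i,j,q'',q')$ forces $b_i^{q''}$ to be scanned in all $k-1$ searches $b_i^{q}\to b_j^{q'}$ with $q\neq q''$. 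Hence the average overhead is at least $\frac{k-1}{n^2}\sum_{i\neq j,\ q'',q'}\mathbf{1}\!\left[h(b_i^{q''},b_j^{q'})\le w_{ij}\right]$.

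Now suppose for contradiction that the given heuristic has average overhead below a small constant times $k=\Theta(n^{1-\alpha})$ on \emph{every} $G_\delta$. The displayed bound then says that for each $\delta$ the fraction of tuples $(i,j,q'',q')$ with $b_i^{q''}$ scanned is $O(1/k)$. A two‑step averaging over the leaf indices produces, for each cluster $i$, a single representative leaf $q^{*}(i)$, and for each target cluster $j$ a representative $q'^{*}(j)$, such that for all but an $O(1/k)$ fraction of cluster pairs $(i,j)$ the value $g\bigl(f(b_i^{q^{*}(i)}),f(b_j^{q'^{*}(j)})\bigr)$ lies in the $O(1)$‑window and hence reveals $\delta_{ij}$. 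Consequently $\delta$ — which has $\binom{m}{2}b=\Theta(m^2\log n)$ bits of entropy — can be reconstructed, up to $o(m^2)$ unrevealed entries, from the $\le 2m$ labels $\{f(b_i^{q^{*}(i)})\}_i\cup\{f(b_j^{q'^{*}(j)})\}_j$, the two index maps $[m]\to[k]$, and a short description of the unrevealed entries. Since each label carries only $L\log(C+1)=O(L\log n)$ bits, this gives $\Theta(m^2\log n)\le O(mL\log n)+o(m^2\log n)$, i.e.\ $L=\Omega(m)=\Omega(n^{\alpha})$, contradicting $L=o(n^{\alpha})$. Hence some $G_\delta$ has average overhead $\Omega(k)=\Omega(n^{1-\alpha})$, proving~(ii).

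I expect the counting in the last paragraph to be the delicate step: a naive count that keeps all $n$ labels in play yields only $L=\Omega(n^{2\alpha-1})$, because the $k$ leaves of a cluster could in principle pool their $L$‑bit budgets to store that cluster's row of $\delta$. The whole gain comes from the averaging argument that collapses each cluster to a single relevant leaf, so only $m$ labels carry information, together with the scale matching $2^b=\mathrm{poly}(n)=\Theta(C)$, which makes the $\log n$ factors in ``bits per $\delta$‑entry'' and ``bits per label‑entry'' cancel. The remaining issues are routine: handling A$^*$'s tie‑breaking on vertices of equal priority (absorbed by working with the $O(1)$‑width windows rather than exact thresholds), checking that $S_h(s,t)\supseteq P(s,t)$ so that the overhead is counted correctly, and realizing exactly $n$ vertices for every $n$ by padding with a sublinear number of dummy vertices.
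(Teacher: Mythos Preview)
Your proposal is correct and follows essentially the same route as the paper: the identical hub‑plus‑leaves graph family indexed by $\delta$, the same local lemma that a leaf pair with an underestimating heuristic forces $\Omega(k)$ extra scans, and the same compression step (the paper phrases it as a reduction to one‑way Indexing and selects a single representative leaf per cluster—an ``$m$‑tuple''—rather than your source/target pair, but this is cosmetic). One arithmetic slip to fix: from average overhead $o(k)$ your displayed inequality yields a bad‑4‑tuple fraction of $o(1)$, not $O(1/k)$ (the $k-1$ in the numerator cancels against the assumed $o(k)$ overhead), but your downstream use of ``$o(m^2)$ unrevealed entries'' is exactly the right consequence, so the argument goes through unchanged.
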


For the last two theorems, we also show their variants for a graph $G$ that is  a 
grid graph (with weighted edges). See Theorems~\ref{thm:linftylb_grid} and~\ref{thm:information_lb} in Appendix~\ref{sec:grid} for further details.
This makes our instances realizable in the robotics planning problems, which is one of the main applications where heuristics have been applied~\cite{bhardwaj2017learning} (see Section \ref{s:info} for more details).

\subsection{Upper bound results}

We complement the lower bounds from the previous sections with upper bounds, i.e., constructions of embedding or labeling heuristics. Our constructions, however, require certain assumptions regarding breaking ties in the A* algorithm. Specifically, at any step of the algorithm, there could be multiple vertices with the same value of the estimate $d(s,u)+h(u,t)$, and breaking such ties appropriately  can have significant impact on the efficiency of A* \citep{asai2016tiebreaking,10.1613/jair.5249,10.5220/0008119405950604}. In our theoretical analysis we deal with tie-breaking in two ways. Our first approach is to make an assumption about the {\em A* algorithm}. Specifically,  Assumption~\ref{asm:break-tie} states that the A* algorithm can break ties in a way that minimizes its complexity. Our second approach is to make an assumption about the {\em input graph}. Specifically, Assumption~\ref{asm:stronger_unique_shortest_path} states that the weighted graph $G$ is such that there is a unique shortest path between any pair of vertices, and that there is a separation between the lengths of the shortest and the second shortest path. We note that the lower bounds corresponding to our upper bounds are consistent with these assumptions, as per discussions in the corresponding sections.

We start by stating the assumption about the tie-breaking behavior of the algorithm.

\begin{assumption}[Breaking ties]\label{asm:break-tie}
We assume that, when there is a tie on the estimated distance lower bound (as specified in Definition~\ref{def:est_lb}) for multiple different vertices, the A* algorithm chooses the next vertex in a way that minimizes the total running time. 
\end{assumption}

Note that in the context of Fact~\ref{lm:scan_condition} stated in Preliminaries, the assumption states that A* only scans those vertices that {\em must} be scanned.

Under this assumption, we show that Theorem~\ref{thm:linftylb_informal} is almost optimal. In fact, the upper bound is based on a beacon heuristic (see Section~\ref{sec:beacon_ub}). 

\begin{restatable}[Beacon-based upper bound]{theorem}{upperBoundBeacon}\label{thm:upper_bound_informal}
Under Assumption~\ref{asm:break-tie}, for any weighted graph $G=(V,E,W)$ with $|V|=n$ vertices, there exists a beacon heuristic induced by a set $B$ with $|B|=n^{\alpha}$ with an additive overhead $n^{1-\alpha}$, i.e.,  such that  A* scans at most $n^{1-\alpha}+\mb{E}_{s,t}\left[|P(s,t)|\right]$ vertices on average.
\end{restatable}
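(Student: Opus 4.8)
\emph{Step 1 (reduce to bounding a ``bad set'').} Under Assumption~\ref{asm:break-tie}, A* scans only the vertices it is forced to scan, plus the vertices of a single shortest $s$--$t$ path; by the scanning characterization of Fact~\ref{lm:scan_condition} a vertex $u$ is forced iff $d(s,u)+h(u,t)<d(s,t)$. So, writing $\mathrm{Bad}_h(s,t):=\{u\in V:\ d(s,u)+h(u,t)<d(s,t)\}$ and using that $P(s,t)$ is a maximum-hop shortest path (hence every shortest path has at most $p(s,t)$ hops), we get $|S_h(s,t)|\le p(s,t)+|\mathrm{Bad}_h(s,t)|$; i.e. the additive overhead of $h$ on $(s,t)$ is at most $|\mathrm{Bad}_h(s,t)|$. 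Thus it suffices to exhibit $B\subseteq V$ with $|B|=n^{\alpha}$ such that the beacon heuristic $h_B(u,v)=\max_{b\in B}|d(u,b)-d(v,b)|$ satisfies $\mb{E}_{s,t}[|\mathrm{Bad}_{h_B}(s,t)|]\le n^{1-\alpha}$. Consistency of $h_B$ is automatic: it is the $\ell_\infty$ distance between the vectors $(d(\cdot,b))_{b\in B}$, so it obeys the triangle inequality, $|d(u,b)-d(v,b)|\le w(u,v)$ on every edge gives $w_t(u,v)\ge 0$, and $h_B(t,t)=0$.

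\emph{Step 2 (structure of the bad set).} If $u\in\mathrm{Bad}_{h_B}(s,t)$ then $d(s,u)<d(s,t)$, and writing $\gamma:=d(s,t)-d(s,u)\in(0,d(u,t)]$ we have $|d(u,b)-d(t,b)|<\gamma$ for \emph{every} $b\in B$; equivalently $B$ must avoid the \emph{witness set} $N(s,t,u):=\{v\in V:\ |d(u,v)-d(t,v)|\ge\gamma\}$. Two facts I would use: (i) $N(s,t,u)$ contains every $v$ with $t$ on a shortest $u$--$v$ path and every $v$ with $u$ on a shortest $t$--$v$ path (both give $|d(u,v)-d(t,v)|=d(u,t)\ge\gamma$), so in particular it contains the two disjoint subtrees rooted at $t$ in a shortest-path tree $T_u$ and at $u$ in $T_t$, giving $|N(s,t,u)|\ge a(u,t)+a(t,u)$ where $a(x,y):=|\mathrm{subtree}_{T_x}(y)|$; and (ii) if \emph{any} beacon lies ``beyond $u$ as seen from $t$'' (or ``beyond $t$ as seen from $u$'') then $h_B(u,t)=d(u,t)$ and $u$ is not bad, for any $s$.

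\emph{Step 3 (random beacons and the analysis).} I would take $B$ to be a uniformly random subset of size $n^{\alpha}$ (equivalently include each vertex with probability $p:=n^{\alpha-1}$), so that $\Pr_B[u\in\mathrm{Bad}_{h_B}(s,t)]\le (1-p)^{|N(s,t,u)|}\le e^{-p\,|N(s,t,u)|}$ and hence
\[
\mb{E}_{B}\,\mb{E}_{s,t}\!\left[|\mathrm{Bad}_{h_B}(s,t)|\right]\ \le\ \frac{1}{n^{2}}\sum_{\substack{s,t,u\in V\\ d(s,u)<d(s,t)}} e^{-p\,|N(s,t,u)|}.
\]
It remains to show this is $O(n^{1-\alpha})$; a $B$ in the support then meets the bound (after absorbing a constant into $|B|$). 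Here I would first use Step 2(i) together with the observation that $d(s,u)<d(s,t)$ forces $t$ off every shortest $s$--$u$ path, so at most $n-a(u,t)$ sources are admissible for a fixed $(t,u)$; substituting and symmetrising in $t\leftrightarrow u$ reduces the estimate to a statement about the shortest-path trees of $G$ alone. The subtlety is that this crude bound is wasteful precisely on ``bushy'' instances (complete graphs, or a hub with many short legs), where $a(u,t)+a(t,u)$ is small for many pairs but the bad set is still tiny --- because a far-away beacon $b$ already contributes the \emph{partial} bound $h_B(u,t)\ge|d(u,b)-d(t,b)|$, which suffices to give $d(s,u)+h_B(u,t)\ge d(s,t)$ for all but $O(n^{1-\alpha})$ vertices $u$ once $N$ is measured against the actual threshold $\gamma\le d(u,t)$ rather than the extreme value $d(u,t)$. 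Interleaving these two regimes, and exploiting that the average over $s$ shrinks the bad set whenever $u$ has few vertices ``beyond it from $t$'', should yield the $O(n^{1-\alpha})$ bound.

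\textbf{Main obstacle.} Everything nontrivial is in Step 3: bounding $\sum_{s,t,u:\,d(s,u)<d(s,t)} e^{-p\,|N(s,t,u)|}$ uniformly over \emph{all} weighted $n$-vertex graphs. The hard part is that the bad set is typically much smaller than any argument based only on ``$h_B$ is exact at $u$'' would give, so one is forced to account for the partial credit a single beacon lends to the heuristic --- i.e. to work with the full witness set $N(s,t,u)$ and its dependence on $\gamma$ --- and to couple this target-side structure (shortest-path trees and subtree sizes) with the source-side ball $\{u:d(s,u)<d(s,t)\}$. Getting that coupling to close without losing a $\mathrm{polylog}$ factor, matching the clean $n^{1-\alpha}$ in the statement, is where I expect the real effort to lie.
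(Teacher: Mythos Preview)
Your Steps 1--2 are correct and match the paper's setup exactly: under Assumption~\ref{asm:break-tie} the overhead is $|\mathrm{Bad}_{h_B}(s,t)|$, and $u\in\mathrm{Bad}_{h_B}(s,t)$ iff $B$ misses the witness set $N(s,t,u)=\{v:\,|d(u,v)-d(t,v)|\ge\gamma\}$ with $\gamma=d(s,t)-d(s,u)$. The random-beacon idea in Step~3 is also the paper's.

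Where you diverge is in how you try to control the sum. You attempt to lower bound $|N(s,t,u)|$ via shortest-path-tree subtree sizes and then worry about ``bushy'' instances and polylog losses. This detour is unnecessary, and the obstacle you flag is not real, because of one observation you wrote down but did not exploit: with $f(v):=|d(u,v)-d(t,v)|$ (for fixed $u,t$), your threshold satisfies $\gamma=d(s,t)-d(s,u)=f(s)$ exactly. Hence
\[
N(s,t,u)=\{v\in V:\ f(v)\ge f(s)\},
\]
so $|N(s,t,u)|$ is nothing but the rank of $s$ in the ordering of $V$ by $f$. Summing over $s$ (for fixed $u,t$) therefore gives
\[
\sum_{s:\,d(s,u)<d(s,t)} (1-p)^{|N(s,t,u)|}\ \le\ \sum_{k\ge 1}(1-p)^{k}\ \le\ \frac{1}{p}=n^{1-\alpha},
\]
and then $\frac{1}{n^2}\sum_{t,u}\frac{1}{p}=n^{1-\alpha}$, with no structural case analysis and no polylog loss. (Ties in $f$ only make the left-hand side smaller.)

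The paper phrases the same idea as a symmetry: $u$ is bad for source $s$ precisely when $s$, viewed as a beacon, gives a \emph{strictly} better lower bound on $d(u,t)$ than every $b\in B$; over a uniformly random choice of $B\cup\{s\}$ this happens with probability at most $1/(|B|+1)$, so for fixed $(u,t)$ the expected number of bad sources is at most $n/(|B|+1)$. That is the whole proof. Your subtree-size bounds (Step~2(i)) and the ``two regimes'' interleaving are not needed.
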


We now consider the assumption about lack of ties in the input graph. 

\begin{assumption}[Unique shortest path]\label{asm:stronger_unique_shortest_path}
We assume that for any pair $(s,t)$ of vertices in $G$, the difference between the length of their shortest path and second shortest path between $s$ and $t$ is greater than a constant $c>0$ (e.g. $c=3$).
\end{assumption}

Under this assumption, we show that Theorem~\ref{thm:information_lb_informal} is almost optimal \tnew{(see Section~\ref{sec:notnorm_ub}).}

\begin{restatable}
[Labeling-based upper bound]{theorem}{upperBoundLabeling}\label{thm:upper_bound_not_norm_informal}
Under Assumption~\ref{asm:stronger_unique_shortest_path}, for any weighted graph $G=(V,E,W)$ with vertex set size $|V|=n$, there exists a consistent labeling heuristic of length $L = O(n^{\alpha})$ with an additive overhead $n^{1-\alpha}$, i.e.,  such that A*  scans at most $n^{1-\alpha}+\mb{E}_{s,t}[|P(s,t)|]$ vertices on average.
\end{restatable}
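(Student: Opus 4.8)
The plan is to control $\mathbb{E}_{s,t}|S_h(s,t)|$ directly through a combinatorial property of $h$, with Assumption~\ref{asm:stronger_unique_shortest_path} doing the work that Assumption~\ref{asm:break-tie} did in Theorem~\ref{thm:upper_bound_informal}. By Fact~\ref{lm:scan_condition}, any scanned $u$ satisfies $d(s,u)+h(u,t)\le d(s,t)$ no matter how ties are broken, so
\[
|S_h(s,t)|\ \le\ p(s,t)\ +\ \#\{u\notin P(s,t):\ d(s,u)+h(u,t)\le d(s,t)\},
\]
where $P(s,t)$ is the unique shortest path (Assumption~\ref{asm:stronger_unique_shortest_path}), hence contributes at most $p(s,t)$. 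For $u\notin P(s,t)$ the gap in Assumption~\ref{asm:stronger_unique_shortest_path} gives $d(s,u)+d(u,t)\ge d(s,t)+c$, so a scanned off-path $u$ must have $h(u,t)\le d(s,t)-d(s,u)\le d(u,t)-c$. This is exactly the replacement for the tie-breaking assumption: we never argue about which tied vertex A\* pops, we bound the whole low-key set, and the gap forces every off-path member of it to be a pair on which $h$ underestimates the distance by at least $c$. Charging each such pair $(u,t)$ to its at most $n$ sources $s$ and averaging, one gets $\mathbb{E}_{s,t}[|S_h(s,t)|]\le \mathbb{E}_{s,t}[p(s,t)]+\tfrac1n\cdot\#\{(u,t):\,h(u,t)\le d(u,t)-c\}$. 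So it suffices to build a consistent labeling heuristic of length $O(n^\alpha)$ that underestimates by at most $c$ on all but an $O(n^{-\alpha})$ fraction of ordered pairs.

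For the heuristic, the label $f(v)$ will store exact distances from $v$ to a set $R_v$ of $O(n^\alpha)$ reference vertices. One part of $R_v$ is a global set $B$ of $\Theta(n^\alpha)$ beacons shared by all vertices; the corresponding decoder value $\max_{b\in B}\lvert d(v,b)-d(t,b)\rvert$ is automatically consistent, since each single-beacon term is a feasible potential for $h(\cdot,t)$ and a pointwise maximum of consistent heuristics is consistent. A pure beacon heuristic does not suffice, though: on a long weighted caterpillar (a tree, which trivially satisfies Assumption~\ref{asm:stronger_unique_shortest_path}) no beacon can lie ``beyond'' a leaf, and one checks the beacon heuristic of dimension $n^\alpha$ forces $\Omega(n)$ extra scans on average for every $\alpha<1$. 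Hence the remaining part of $R_v$ is chosen using the graph, exploiting that shortest paths are unique so that shortest-path trees are well defined and ``beyond-$u$'' sets are exactly subtrees: a hub/$2$-hop-style selection which, for all but an $O(n^{-\alpha})$ fraction of ordered pairs $(u,t)$, puts into $R_u\cap R_t$ a reference that pins $d(u,t)$ down to within $c$. The decoder $g(f(u),f(t))$ then outputs the largest lower bound it can certify -- at least the beacon bound, and the (near-)exact value of $d(u,t)$ when the references allow it. Consistency must be maintained while using these ``upgrades'': an upgrade of $h(u,t)$ toward $d(u,t)$ is admissible only on a downward-closed family of pairs, which is arranged by choosing the structure-aware reference sets monotonically so that the Lipschitz condition $h(u,t)\le w(u,v)+h(v,t)$ cannot be broken by upgrading a head but not a tail of an edge.

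The step I expect to be the main obstacle is this construction together with its coverage guarantee: beating the beacon bound in a \emph{consistent} way with only $O(n^\alpha)$ stored distances per vertex, and then proving that the structure-aware references (plus the random $B$) leave at most an $O(n^{-\alpha})$ fraction of ordered pairs ``bad''. This is where Assumption~\ref{asm:stronger_unique_shortest_path} is essential: it makes the shortest-path-tree bookkeeping meaningful and lets one show, by a counting/charging argument on subtree sizes, that a bad pair forces the small random $B$ (and the reference sets) to miss a set that is either large -- hence very unlikely to be missed -- or so small that the pair cannot be bad at all, since then $d(u,t)<c$. Secondary but nontrivial points are the consistency bookkeeping for the upgrades and checking that the final derandomization (fixing one good global $B$ by an averaging argument) preserves everything. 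Plugging the coverage bound into the reduction of the first paragraph then yields $\mathbb{E}_{s,t}[|S_h(s,t)|]\le \mathbb{E}_{s,t}[|P(s,t)|]+n^{1-\alpha}$, as required.
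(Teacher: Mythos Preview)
Your first paragraph is a clean and correct reduction: under Assumption~\ref{asm:stronger_unique_shortest_path}, any off-path scanned vertex $u$ satisfies $h(u,t)\le d(u,t)-c$, and hence the average overhead is at most $\tfrac1n\cdot\#\{(u,t):h(u,t)\le d(u,t)-c\}$. But this sufficient condition is strictly stronger than what is needed, and it is exactly where your plan runs into trouble. Consider a weighted caterpillar whose leaf edges all have weight $w>c$. For any beacon set $B$ (and indeed for the paper's own heuristic described below), a pair of leaves $(u,t)$ with $u,t\notin B$ has $h(u,t)\le d(u,t)-2w+2<d(u,t)-c$, so a $(1-o(1))$ fraction of all pairs are ``bad'' in your sense and your bound degenerates to $\Theta(n)$. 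Yet the paper's heuristic achieves overhead $O(n^{1-\alpha})$ on this very instance. So the target you set yourself in the second paragraph --- a consistent labeling heuristic with $O(n^\alpha)$ label length that underestimates by less than $c$ on all but an $O(n^{-\alpha})$ fraction of pairs --- is not only unproved (you call it ``the main obstacle'' and leave the hub/2-hop construction and its consistency and coverage analysis as sketches), it is a detour: the paper never needs such a heuristic.

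The paper's argument is both simpler and avoids your lossy step. The label of $v$ stores, for each of the $n^\alpha$ random beacons $b$, the distance $d(v,b)$ \emph{and} the two positions of $v$ in the Euler tour of the shortest-path tree rooted at $b$ (well defined by Assumption~\ref{asm:stronger_unique_shortest_path}). The decoder returns the beacon bound $\|\pi^0(u)-\pi^0(t)\|_\infty$ plus $2$ whenever $u$ and $t$ are \emph{not} in an ancestor/descendant relation in the tree of the maximizing beacon, and returns the beacon bound itself otherwise. Under Assumption~\ref{asm:stronger_unique_shortest_path} one checks that $|d(b,u)-d(b,t)|=d(u,t)$ iff this ancestor relation holds, so the heuristic equals $d(u,t)$ exactly when the beacon bound is tight and equals $\|\pi^0\|_\infty+2$ otherwise; consistency is then a four-case check. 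The overhead analysis is now direct and does not go through bad pairs at all: vertices with $d(s,u)+\|\pi^0(u)-\pi^0(t)\|_\infty<d(s,t)$ are counted by the random-beacon argument from Theorem~\ref{thm:upper_bound_informal} (strict inequality, so no tie-breaking assumption needed), vertices with strict $>$ are never scanned since $h\ge\|\pi^0\|_\infty$, and every remaining tie either lies on $P(s,t)$ (when the beacon bound is tight) or is bumped above $d(s,t)$ by the $+2$ (when it is not). The key idea you are missing is precisely this Euler-tour tie-breaker: it upgrades the beacon heuristic just enough to kill the equality cases, while Assumption~\ref{asm:stronger_unique_shortest_path} is used only to certify consistency of the $+2$ bump and to make the shortest-path trees unambiguous --- not to force near-exactness on most pairs.
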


\subsection{Learned A* overview}
\label{s:info}
As discussed previously, the main motivation for this study are recent works on learned A* heuristic functions, where the learned features on which the $h$ heuristic is based can be viewed as embedding or labeling schemes.
In this section we briefly describe several approaches used in recent papers on A* with learned heuristic functions~\citep{yonetani2020path, bhardwaj2017learning,chen2020retro}. 
As these are applied machine learning papers, our description here abstracts away many  technical details, focusing on high-level ideas. 
\begin{itemize}
 \item \citep{bhardwaj2017learning} is focused on learning A* heuristics for robot  path planning. The experiments are done on graphs induced by two-dimensional environments modeled as 2D grids with obstacles. The heuristic function is represented by  a feed-forward neural network with two fully connected hidden layers containing respectively 100 and 50 units with ReLu activation.  The input to the network is a 17 dimensional feature vector. The features contain the descriptor of the node $u$ (containing various features of the spatial point) as well as the descriptor of the current ``state'' of the neural A* algorithm when reaching $u$. 
 
    \item \citep{yonetani2020path} is also focused on robot path planning in 2D grids with obstacles.  It uses a neural network architecture called  U-net to transform the input instance into a ``guidance map'', which for each graph vertex provides an adjustment to a ``standard'' heuristic function based on  Chebyshev distance between the given node and the destination node $t$. The input to U-net consists of the input graph, the source node $s$ and destination node $t$. The presence of the last component technically makes it possible to map each node into a one-dimensional scalar equal to the true distance to target node $t$. However, given that training of the heuristic function is expensive, it is natural to aim for a heuristic that is ``target independent'', i.e., trained for a particular environment, independently of the choice of the source and destination.  In this way one can reuse the same heuristic trained for a particular grid for many source/destination pairs.
    
    \item \citep{chen2020retro}  addresses problems in a different domain, focusing on path searching in the molecule space. The goal of the planner is to identify a series of re-actions that can lead to the synthesis of a target product, a classic problem in organic chemistry. The feature vector used as an input to the learned heuristic is based on the ``Morgan fingerprint of radius 2'', which is 2048 bit long. The vector is then fed into a single-layer fully connected neural network of hidden dimension 128, which provides an estimate of the distance. 
\end{itemize}

In the context of the aforementioned works, our results provide insights into the complexity of {\em representing} efficient heuristics for various classes of graphs. We leave other important issues, such as the complexity of {\em learning} those heuristics, to future work.

\section{Preliminaries}

\paragraph{Notation.} We use $dist(u,v)$ to denote the length of the shortest paths between nodes $u$ and $v$ in a graph. For $(u,v)\in E$ we use $w(u,v)$ to denote the edge length.

\paragraph{Heuristic properties.} We first recall several standard definitions of heuristic functions which will be useful in our proofs.

\begin{definition}[Admissible heuristic function]\label{def:admissible_embedding}
For a heuristic function $h(s,t)$ which estimates $dist(s,t)$, it is said to be \emph{admissible} if for any source and destination $(s,t)$, we have $h(s,t)\le dist(s,t)$.
\end{definition}

It is  well-known that all consistent heuristic functions (Definition~\ref{def:consistent_embedding}) are admissible.

\begin{definition}[Sub-additive heuristic function]\label{def:subadditive_embedding}
For a heuristic function $h(s,t)$ which estimates $dist(s,t)$, it is said to be \emph{sub-additive} if for any three vertices $(u,v,w)$, we have $h(u,v)+h(v,w)\ge h(u,w)$.
\end{definition}

Note that consistent heuristics may not be sub-additive, but all norm heuristics (Definition~\ref{def:norm_embedding}) and beacon heuristics (Definition~\ref{def:beacon_embedding}) are sub-additive.

\begin{definition}[Estimated distance lower bound]\label{def:est_lb}
Given a query pair $dist(s,t)$,
for any vertex $u$,
we define $g(u)$ to be the \emph{estimated distance lower bound} calculated by A* for shortest path from $s$ to $t$ passing $u$:  $g(u)=dist(s,u)+h(u,t)$. In particular, $g(t)=dist(s,t)$.
\end{definition}

\paragraph{A* algorithm.} Now, we describe the operation of  A*  with heuristic function $h$, which will be the main object of our analysis in the paper. (See Algorithm~\ref{alg:astar_algorithm})

\begin{algorithm2e}[!ht]
    \caption{\algoname}\label{alg:astar_algorithm}
    \LinesNumbered
    \SetAlgoNoLine
    \DontPrintSemicolon
    \KwIn{$G(V,E,W)$, start point: $s$, end point: $t$} 
    Initialize $S=\{s\}$, $d(s)=0$, $d(S\setminus s)=\infty$\;
    \While{$t\notin S$}{
        $u\gets\argmin\limits_{u\in V\setminus S}\{d(u)+h(u,t)\}$\;
        \For{$v$ s.t. $(u,v)\in E$}{
            $d(v)=\min(d(v),d(u)+w(u,v))$\;
        }
        $S=S\cup \{u\}$\;
    }
\end{algorithm2e}

\begin{fact}[Scan condition, see \citet{4082128}]\label{lm:scan_condition}
Suppose we apply A* to calculate $dist(s,t)$ with a consistent heuristic function $h(u,t)$. Then a vertex $u$ {\em must} be scanned if and only if $dist(s,u)+h(u,t)<dist(s,t)$ and {\em may} be scanned if and only if $dist(s,u)+h(u,t)\le dist(s,t)$.
\end{fact}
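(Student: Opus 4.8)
The plan is to reduce A* with a consistent heuristic to Dijkstra's algorithm on a reweighted graph and then read off both characterizations from standard properties of Dijkstra. Fix the target $t$ and let $G_t$ be the graph on the same vertices and edges with each edge $(u,v)$ reweighted to $w_t(u,v) = w(u,v) + h(v,t) - h(u,t)$; consistency gives $w_t \ge 0$. A telescoping computation shows that any path $u_0,\dots,u_k$ has $w_t$-length equal to its $w$-length plus $h(u_k,t)-h(u_0,t)$, so $dist_t(s,u) = dist(s,u) + h(u,t) - h(s,t)$ for every $u$, and the priority $d(u)+h(u,t)$ that Algorithm~\ref{alg:astar_algorithm} minimizes differs from the Dijkstra-on-$G_t$ priority of $u$ by the fixed constant $h(s,t)$. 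Hence A* is precisely Dijkstra on $G_t$, stopped the first time $t$ leaves the queue. I then use the two textbook facts about Dijkstra on a non-negatively weighted graph: (i) when $u$ is removed from the queue, $d(u) = dist(s,u)$, so at that moment its priority equals the estimated lower bound $g(u) = dist(s,u)+h(u,t)$ of Definition~\ref{def:est_lb}; and (ii) vertices are removed in non-decreasing order of this final priority, and every vertex reachable from $s$ is eventually removed.

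The ``may'' characterization follows immediately. If $u$ is scanned in some run then either $u=t$, in which case $g(u)=dist(s,t)$, or $u$ is removed strictly before $t$ (the loop halts as soon as $t$ enters $S$), so by (i) and (ii), $g(u) \le g(t) = dist(s,t)+h(t,t) = dist(s,t)$; this is the forward direction. Conversely, suppose $g(u) \le dist(s,t)$, so $dist(s,u) < \infty$ and $u$ is reachable. Take a maximum-hop shortest path from $s$ to $u$, say $s = v_0,\dots,v_k = u$; since prefixes of shortest paths are shortest, rewriting $dist(s,v_{i+1})$ and then applying consistency gives $g(v_{i+1}) = dist(s,v_i) + w(v_i,v_{i+1}) + h(v_{i+1},t) \ge dist(s,v_i)+h(v_i,t) = g(v_i)$, so $g$ is non-decreasing along the path and every $g(v_i) \le dist(s,t)$. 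Running A* with a tie-breaking rule that, whenever several queued vertices attain the current minimum priority, prefers the earliest not-yet-scanned $v_i$ to $t$, scans $v_0,\dots,v_k = u$ before $t$; this is realizable because once $v_{i-1}$ is scanned, $v_i$ sits in the queue with priority $g(v_i) \le dist(s,t) \le d(t)$, so it never loses its turn. Hence $u$ may be scanned.

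For the ``must'' characterization, the substantive direction is that $g(u) < dist(s,t)$ forces $u$ to be scanned in every run: in $G_t$ this reads $dist_t(s,u) < dist_t(s,t)$, so by (ii) Dijkstra removes $u$ strictly before it removes $t$, and since A* halts only when $t$ is removed, $u$ is scanned. In the range $g(u) > dist(s,t)$, symmetrically, $t$ is removed --- and the algorithm halts --- before $u$ could be, in every run, so $u$ is not scanned in any run and is certainly not forced. The delicate point, and the step I expect to need the most care, is the boundary $g(u) = dist(s,t)$ with $u \ne t$: facts (i) and (ii) alone do not decide whether $u$ is forced, since $u$ and $t$ are tied in $G_t$ and the answer depends on the shortest-path DAG among the vertices of priority $dist(s,t)$ (a $u$ lying on every shortest $s$-$t$ path at that priority is still scanned by every run). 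Following the convention of \citet{4082128}, I resolve this by breaking ties so as to remove $t$ as early as the shortest-path structure allows; this scans, among the vertices of priority $dist(s,t)$, only those on the chosen shortest $s$-$t$ path, which are exactly the vertices counted in $p(s,t)$. This same case absorbs the trivial remark that $t$ itself is always scanned at termination, and with this convention the two biconditionals hold for every $u \ne t$, which is all the later sections invoke.
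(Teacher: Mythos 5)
The paper does not prove Fact~\ref{lm:scan_condition} at all --- it is stated as a known result and attributed to \citet{4082128} --- so there is no internal proof to compare against; what you have done is supply the standard argument, and it is essentially correct. Your reduction (reweight by $w_t(u,v)=w(u,v)+h(v,t)-h(u,t)\ge 0$, telescope to get $dist_t(s,u)=dist(s,u)+h(u,t)-h(s,t)$, observe that A* is Dijkstra on $G_t$ truncated at the extraction of $t$) is the classical route, and the three substantive implications are all handled soundly: $g(u)<dist(s,t)$ forces $u$ to be extracted before $t$ in every run; any scanned $u$ satisfies $g(u)\le g(t)=dist(s,t)$ by monotonicity of extraction keys; and your explicit tie-breaking run along a shortest $s$--$u$ path (using consistency to show $g$ is non-decreasing along it, so each $v_i$ never loses to $t$) correctly realizes the ``may'' converse.

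The one place I would push back is your closing claim that a tie-breaking convention makes ``the two biconditionals hold for every $u\ne t$.'' You correctly diagnose the boundary defect --- a vertex with $g(u)=dist(s,t)$ that lies on every shortest $s$--$t$ path is scanned in \emph{every} run, so the ``only if'' half of the \emph{must} statement is literally false for it --- but no tie-breaking convention can repair that, since ``must'' quantifies over all runs; the convention only controls which boundary vertices a particular run scans. The honest resolution is the one the paper itself adopts when it actually uses the Fact: under Assumption~\ref{asm:break-tie} (see Section~\ref{sec:beacon_ub}) the vertices that must be scanned are exactly those with $dist(s,u)+h(u,t)<dist(s,t)$ \emph{together with} the vertices of $P(s,t)$, and the lower-bound arguments only ever invoke the clean direction $g(u)<dist(s,t)\Rightarrow u$ scanned. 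So your proof establishes everything the paper relies on; just state the boundary case as an explicit carve-out rather than claiming the biconditional is restored by convention.
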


\section{Lower bounds}
\subsection{An average-case lower bound for $l_p$ norms}\label{sec:general_lb}

In this section, we prove Theorem~\ref{thm:lpnorm_informal}.

\normlbgeneral*

Before exactly defining the lower bound instance (depicted in Figure~\ref{fig:sparseLB}) and proving the theorem, we give a short overview of the ideas behind the construction and proof.

\paragraph{Proof overview.} 
The central part of our constructed hard instance contains a star with $n$ petals. 
We then add to it $O(\log n)$ additional sets of vertices, each of size $O(n)$, in a way that ensures that for every pair of star petals $(u,v)$, there exists $\Omega(n^2)$ vertex pairs in the additional sets whose shortest path goes through $(u,v)$.
Given this construction, we rely on a   standard ``packing argument"  stating that a low dimensional space cannot contain a large equilateral vertex set. 
Therefore, it holds that  if the dimension $d$ of the embedding is too small, then
by the admissibility 
property of the embedding, for at least one pair of the star petals, its embedded distance is distorted.  
We then utilize this pair of distorted petals to show that  every query $(s,t)$ whose shortest path goes through the distorted pair has a large query overhead.
Specifically, we rely on the design of non-uniqueness of the shortest paths to further amplify the penalty due to the distorted pair to $\Omega(n)$ per each query as above, so that in total we get that there exists $\Omega(n^2)$ pairs for which the query overhead is $\Omega(n)$.

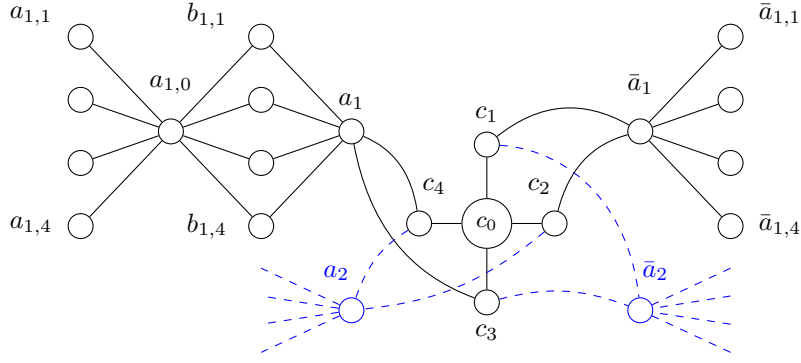
\begin{figure}[ht!]
\centering
\begin{tikzpicture}

\begin{scope}[xscale=0.6,yscale=.7]
\sparseLB    
\end{scope}

\end{tikzpicture}
\caption{Average-case complexity lower bound instance for $l_p$ norms with $k=2$ and $n=4$.}
\label{fig:sparseLB}
\end{figure}

\paragraph{The lower bound instance.} We continue to formally define our lower bound instance for heuristics induced by $l_p$ norms. Note that the lower bound instance is both sparse and unweighted.
 The lower bound graph contains a star with  $n$ leaves $\{c_1,\ldots,c_n\}$ connected to a center $c_0$. Without loss of generality, we assume $n=2^k$ is a power of $2$. Then we have $k$ pairs of vertices $\{(a_i,\bar{a}_i)\}_{i=1}^k$. For each pair $a_i$ and $\bar{a}_i$, and for each $c_j$, we connect $c_j$ to $a_i$ if the $i$-th bit in the binary representation of $j-1$  is $1$ and to $\bar{a}_i$ otherwise. 
Then, for each $a_i$, we create auxiliary vertex sets $\{a_{i,j}\}_{j=0}^n$ and $\{b_{i,j}\}_{j=1}^n$. We connect 
$a_{i,0}$ to all $\{a_{i,j}\}_{j=1}^n$, and, for each $j$, we connect $b_{i,j}$ to $a_{i,0}$ and $a_i$.  Similarly, for each $\bar{a}_i$, we connect it to all $\{\bar{a}_{i,j}\}_{j=1}^n$.
Our instance has size  $|V|=\Theta(n\log n)$ and $|E| = O(|V|)$. See Figure~\ref{fig:sparseLB} for an illustration with $k=2$ and $n=4$.

\begin{lemma}\label{lem:no_simplex}
Suppose $h$ is a norm heuristic function $h(s,t)=\|\pi(s)-\pi(t)\|$. For the instance described above with $|V|=\Theta(n\log n)$,  as long as one pair of distinct vertices in $\{c_i\}_{i=1}^n$, say $c_i$ and $c_j$,  satisfies  $h(c_i,c_j)=\|\pi(c_i)-\pi(c_j)\|<2$ ,  the A* algorithm scans at least $\Omega\left(\frac{n}{\log^2 n}\right)$ vertices on average.
\end{lemma}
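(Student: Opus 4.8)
The plan is to show that whenever a single pair of petals $c_i, c_j$ is ``contracted'' by the heuristic, i.e.\ $h(c_i,c_j) < 2$, then a constant fraction (in fact $\Omega(1/\log^2 n)$ fraction, after accounting for the $\Theta(n\log n)$ size of the vertex set) of all query pairs $(s,t)$ forces A* to scan $\Omega(n)$ extra vertices; averaging then yields the claimed $\Omega(n/\log^2 n)$ bound. First I would pin down which queries are affected. Since the $i$-th bit distinguishes $a_i$ from $\bar a_i$ and the $j$-th bit likewise, and $c_i, c_j$ are distinct, they disagree in at least one bit position, say $\ell$; so one of $c_i,c_j$ connects to $a_\ell$ and the other to $\bar a_\ell$. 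I would pick the query source $s$ and target $t$ among the auxiliary vertices $\{a_{\ell,p}\}$ and $\{\bar a_{\ell,p}\}$ (more precisely, the $b_{\ell,p}$'s and $\bar a_{\ell,p}$'s) so that the unique shortest $s$–$t$ path is forced to travel $\cdots \to a_\ell \to c_i \to c_0 \to c_j \to \bar a_\ell \to \cdots$ (or the mirror image), i.e.\ the shortest path genuinely routes through the contracted pair $(c_i,c_j)$. Because there are $\Theta(n)$ choices on each side and $\Theta(\log n)$ index values $\ell$, this gives $\Omega(n^2)$ such pairs out of $\Theta((n\log n)^2)$ total, hence an $\Omega(1/\log^2 n)$ fraction.

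Next I would use Fact~\ref{lm:scan_condition} to count scanned vertices for such a query $(s,t)$. The key point is that $dist(s,t)$ uses the true (un-contracted) distance through $c_0$, which is $\Theta(1)$ larger than what the heuristic ``thinks'' the $c_i$–$c_j$ portion costs, since $h(c_i,c_j) < 2$ while the true distance $dist(c_i,c_j) = 2$ (through $c_0$). Now consider the $\Theta(n)$ vertices $b_{\ell,p}$ (or $a_{\ell,p}$) hanging off $a_{\ell,0}$: for each of them I would verify $dist(s,u) + h(u,t) < dist(s,t)$. Intuitively, $u$ is close to $s$ (it sits in the same auxiliary cluster, at distance $O(1)$), and $h(u,t)$, being admissible and moreover subadditive through $c_i$, satisfies $h(u,t) \le h(u,c_i) + h(c_i,t)$ where the bound on $h(c_i,t)$ inherits the deficit from $h(c_i,c_j)<2$. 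Assembling these estimates, $dist(s,u)+h(u,t)$ comes out strictly below $dist(s,t)$ for all $\Theta(n)$ such $u$, so all of them must be scanned while the shortest path has only $O(1)$ hops — an overhead of $\Omega(n)$.

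Finally I would combine: there are $\Omega(n^2)$ query pairs with overhead $\Omega(n)$ each, out of $\Theta(n^2 \log^2 n)$ total pairs, so $\mathbb{E}_{s,t}[|S_h(s,t)| - p(s,t)] = \Omega\!\left(\frac{n^2 \cdot n}{n^2 \log^2 n}\right) = \Omega\!\left(\frac{n}{\log^2 n}\right)$, which is the statement of the lemma.

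\textbf{Main obstacle.}
The delicate step is the second one: carefully propagating the ``deficit'' $2 - h(c_i,c_j) > 0$ through the heuristic to the auxiliary vertices and showing the strict inequality $dist(s,u) + h(u,t) < dist(s,t)$ holds simultaneously for $\Omega(n)$ vertices $u$. This requires choosing the auxiliary gadget distances (and the source/target within the clusters) so that the true distance $dist(s,t)$ picks up enough slack to dominate $dist(s,u)$ for every such $u$, while the consistency/admissibility of $h$ forces $h(u,t)$ to remain small. The asymmetric $b_{\ell,p}$ gadget (connected to both $a_{\ell,0}$ and $a_\ell$) is presumably present precisely to create this slack, so the proof will hinge on the exact combinatorics of that gadget; I expect the rest (the bit-disagreement argument and the final averaging) to be routine.
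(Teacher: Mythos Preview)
Your approach is essentially the paper's. Two small clarifications on the details you flagged as the obstacle: (i) take the source $s\in\{a_{\ell,p}\}_{p=1}^n$ and target $t\in\{\bar a_{\ell,p}\}_{p=1}^n$, while the $\{b_{\ell,p}\}$ are the $\Theta(n)$ vertices that get scanned (not the other way around); (ii) the shortest $s$--$t$ path is \emph{not} unique, and that is precisely the mechanism: every $b_{\ell,p}$ lies on \emph{some} shortest $s$--$t$ path, so $dist(s,b_{\ell,p})+dist(b_{\ell,p},t)=dist(s,t)$, and then the sub-additivity chain (through $a_\ell,\bar a_\ell$, or equivalently through $c_i,c_j$ as you wrote) gives $h(b_{\ell,p},t)<dist(b_{\ell,p},t)$, hence the strict scan condition for all $n$ of them simultaneously. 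The averaging is exactly as you stated.
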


\begin{proof}
Suppose $\|\pi(c_u)-\pi(c_v)\|<2$ and $u\neq v$, then there exists a digit $p$ s.t. their binary representations differ, e.g. $u$'s digit is $0$ and $v$'s is $1$. Then, for the pair $(a_p,\bar{a}_p)$ whose real distance is $dist(a_p,\bar{a}_p)=4$, by sub-additivity, their embedded distance satisfies
$$\|\pi(a_p)-\pi(\bar{a}_p)\|\le \|\pi(a_p)-\pi(c_u)\|+\|\pi(c_u)-\pi(c_v)\|+\|\pi(c_v)-\pi(\bar{a}_p)\|<4=dist(a_p,\bar{a}_p).$$
Now, we can check that for any query pair $(s,t)$ with $s\in \{a_{p,j_1}\}_{j_1=1}^n$ and $t\in \{\bar{a}_{p,j_2}\}_{j_2=1}^n$, A* must scan the whole set $\{b_{p,j_1}\}_{j_1=1}^n$  %
To see this, first note that all vertices $u \in \{b_{p,j_1}\}_{j_1=1}^n$ 
lie on the shortest path between $s$ and $t$, and satisfy $h(u,t)<dist(u,t)$ (due to the sub-additivity and since $\|\pi(a_p)-\pi(\bar{a}_p)\|<dist(a_p,\bar{a}_p)$). Therefore, every $u\in \{b_{p,j_1}\}_{j_1=1}^n$  will be scanned by A* according to Fact~\ref{lm:scan_condition}. To calculate the average query complexity, 
observe that for a random query pair $(s,t)$, with probability at least $\Omega(\frac{1}{\log^2 n})$ we have that $s\in\{a_{p,j_1}\}_{j_1=1}^n$ and $t\in\{\bar{a}_{p,j_1}\}_{j_1=1}^n$ simultaneously hold. Once this event occurs, A* will scan at least $\Omega(n)$ vertices, so the average query complexity is lower bounded by $\Omega(\frac{n}{\log^2 n})$.
\end{proof}

\begin{definition}\label{def:equilateral_set}
In a metric space $l^d_p$, we say a vertex set $X$ is equilateral if any two different vertices $x,y\in X$ satisfy $\|x-y\|_p=1$. We define $e(l^d_p)$ to be the size of the largest equilateral vertex set in $l^d_p$ and its inverse function $e^{-1}(p,n)$ equals to the minimum $d$ such that $e(l^d_p)\ge n$.
\end{definition}

\begin{lemma}
For a metric space $l^d_p$, we have the following upper bounds on $e(l^d_p)$
$$e(l^d_p)\le\left\{
\begin{array}{lll}
O(d\log d) & p=1 & \textup{\citep{Alon_equilateralsets}}\\
d+1 & p=2 & \textup{\citep{10.2307/2975549}}\\
O(d\log d) & 2<p<\infty & \textup{\citep{Swanepoel04aproblem}}\\
2^d & p=\infty & \textup{\citep{10.2307/2975549}}
\end{array}\right.$$
and corresponding lower bounds for $e^{-1}(p,n)$
$$e^{-1}(p,n)\ge\left\{
\begin{array}{ll}
\Omega(\frac{n}{\log n}) & p=1\\
n-1 & p=2\\
\Omega(\frac{n}{\log n}) & 2<p<\infty\\
\log_2 n & p=\infty
\end{array}\right.$$
Here our asymptotic bound is in terms of $d$ and hides the dependence on $p$.
\end{lemma}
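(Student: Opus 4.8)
The plan is to obtain both halves of the statement almost for free. For the upper bounds on $e(l^d_p)$ I would simply import the four cited theorems as black boxes and do no new work: for $p=2$ the bound $e(l^d_2)\le d+1$ is the classical regular-simplex fact (the standard Gram-matrix / affine-independence argument, with the regular simplex attaining $d+1$), which is the content of \citep{10.2307/2975549}; for $p=\infty$, $e(l^d_\infty)\le 2^d$ is Petty's theorem from the same reference, and it is tight via the Boolean cube $\{0,1\}^d$, whose distinct points are at $l_\infty$-distance $1$; for $p=1$, $e(l^d_1)=O(d\log d)$ is \citep{Alon_equilateralsets}; and for $2<p<\infty$, $e(l^d_p)=O(d\log d)$ is \citep{Swanepoel04aproblem} (both proved by rank / polynomial-method arguments over a suitable vector space).

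For the lower bounds on $e^{-1}(p,n)$ I would invert these monotone bounds. Since $d\mapsto e(l^d_p)$ is nondecreasing and $e^{-1}(p,n)$ is by definition the least $d$ with $e(l^d_p)\ge n$, to prove $e^{-1}(p,n)\ge D$ it suffices to check that the upper-bound function $f_p$ from the first part satisfies $f_p(d)<n$ for every $d<D$. For $p=2$: if $d\le n-2$ then $e(l^d_2)\le d+1\le n-1<n$, so $e^{-1}(2,n)\ge n-1$ (and this is tight by the regular simplex). For $p=\infty$: if $d<\log_2 n$ then $e(l^d_\infty)\le 2^d<n$, so $e^{-1}(\infty,n)\ge\lceil\log_2 n\rceil\ge\log_2 n$. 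For $p=1$ and for $2<p<\infty$: write $f_p(d)\le C_p\,d\log d$ with $C_p$ depending only on $p$, and take $d\le c_p n/\log n$ with $c_p$ a sufficiently small constant (depending only on $p$); then $\log d\le\log n$, hence $C_p\,d\log d\le C_p c_p n<n$ as soon as $c_p<1/C_p$, giving $e^{-1}(p,n)\ge c_p n/\log n=\Omega(n/\log n)$.

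I do not expect a genuine obstacle here: all the mathematical content is carried by the four cited theorems, which are used verbatim, and the inversion is elementary monotonicity bookkeeping. The only points to watch are that the constants hidden in the $p=1$ and $2<p<\infty$ upper bounds depend on $p$, so the constant in the resulting $\Omega(n/\log n)$ lower bound on $e^{-1}$ is also $p$-dependent (as the statement already flags), and that the asymptotic claims should be read for all sufficiently large $n$, with the integrality of $e^{-1}$ absorbed into the obvious ceilings.
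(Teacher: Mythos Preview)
Your proposal is correct and matches the paper's treatment: the paper gives no proof at all for this lemma, simply citing the four references for the upper bounds and stating the inverse bounds as immediate corollaries. Your explicit inversion argument (monotonicity of $d\mapsto e(l^d_p)$ plus the contrapositive check that $f_p(d)<n$ for all $d$ below the claimed threshold) is exactly the routine bookkeeping the paper leaves to the reader, and your handling of the $p$-dependent constants and the $\log d\le\log n$ step is fine.
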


\begin{proof}[Proof of Theorem~\ref{thm:lpnorm_informal}]
We only need to verify that the assumption made in Lemma~\ref{lem:no_simplex} is true. By the definition of an equilateral set (in Definition~\ref{def:equilateral_set}) and a consistent heuristic function (in Definition~\ref{def:consistent_embedding}), we know that for metric space $l^d_p$ if $d<e^{-1}(p,n)$, then for a set of $n$ 2-equidistant vertices $\{c_i\}_{i=1}^n$ in the original graph, at least two of them will have embedded distance smaller than $2$.
\end{proof}

\subsection{A stronger lower bound  for \texorpdfstring{$l_{\infty}$}{linf} norm space}\label{sec:linfty_lb}

In this section, we prove  a stronger lower bound for embeddings into $l_{\infty}^d$ both for general graphs and for grid graphs. The first lower bound instance is a general graph with constant diameter (as in the previous section), 
which means that the ratio of the number of vertices scanned by A* to the actual path length is large. Based on this instance we prove Theorem~\ref{thm:linftylb_informal}, restated below. 

\linftylbGeneral*
The second lower bound instance has super-constant diameter, but it is a grid graph, which is a more natural structure in the context of robot planning applications. This instance is described in Appendix~\ref{sec:linftylb_grid_proof}, and is used to  prove the grid variant of Theorem~\ref{thm:linftylb_informal}.

\begin{restatable}{theorem}{linftylbGrid} \label{thm:linftylb_grid}
There exists a weighted grid graph $G$ so that 
any consistent norm heuristic $h(s,t)$ induced by $\pi:V\to \mathbb{R}^d$ with dimension  $d<o(n^{\alpha})$ for $\alpha<0.5$ and the $l_{\infty}$ norm will result in A* average-case query complexity of at least $\Omega(n^{1-\alpha})$.
\end{restatable}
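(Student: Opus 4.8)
The goal is to produce a weighted grid graph on $n$ vertices for which any consistent $l_\infty^d$-norm heuristic with $d = o(n^\alpha)$, $\alpha < 1/2$, forces A* to scan $\Omega(n^{1-\alpha})$ vertices on average. The natural route is to take the general-graph construction behind Theorem~\ref{thm:linftylb_informal} and ``embed'' its combinatorial structure into a grid, paying only polynomial blowup in the number of vertices and tolerating a super-constant diameter. So the first step is to recall the skeleton of the general $l_\infty$ lower bound: a bipartite-like gadget with a set $A = \{a_1,\dots,a_m\}$ of ``hub'' vertices and, attached to each $a_i$, a bundle of $k$ pendant vertices $\{b_i^1,\dots,b_i^k\}$, with edge weights chosen (as in the \texttt{InfoLB}/\texttt{lINFlb} figures, $w_{ij} = 6(2^b + \delta_{ij}) - 2$ style) so that shortest paths between pendants of $a_i$ and pendants of $a_j$ pass through the edge $(a_i,a_j)$, and so that the pairwise distances among the $a_i$'s form a set that cannot be realized in $l_\infty^d$ without collapsing some pair below its true distance once $d$ is too small. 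The key quantitative input is the packing bound $e(l_\infty^d) \le 2^d$ from the equilateral-sets lemma, equivalently $e^{-1}(\infty, m) \ge \log_2 m$: if $d = o(n^\alpha)$ we can take $m = n^{\Theta(\alpha)}$ hub vertices so that $d < e^{-1}(\infty, m)$ and hence some pair $(a_i, a_j)$ has $h(a_i,a_j) < dist(a_i,a_j)$.

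Once such a distorted pair $(a_i,a_j)$ is fixed, the argument mirrors Lemma~\ref{lem:no_simplex}: by sub-additivity of the norm heuristic, $h(u,t) < dist(u,t)$ for every $u$ on the shortest path between a pendant $s$ of $a_i$ and a pendant $t$ of $a_j$, so by Fact~\ref{lm:scan_condition} every such $u$ is scanned; routing the bundle so that $\Omega(k)$ vertices lie on that path, and taking $k = \Theta(n^{1-\alpha})$, gives $\Omega(n^{1-\alpha})$ forced scans for each of the $\Omega(k^2)$ query pairs $(s,t)$ with $s$ a pendant of $a_i$ and $t$ a pendant of $a_j$. Since there are $\Theta(n)$ vertices total, a random pair lands in the bad bundle pair with probability $\Omega(k^2/n^2) = \Omega(n^{-2\alpha})$; one must check the arithmetic works out to an $\Omega(n^{1-\alpha})$ \emph{average} overhead, which needs $k \cdot (k^2/n^2) = \Omega(n^{1-\alpha})$, i.e. $k = \Omega(n^{(1+\alpha)/... })$ — so in fact one rebalances the parameters (more bad pairs, or amplify as in the proof overview of Section~\ref{sec:general_lb}: add $O(\log n)$ redundant copies so that \emph{many} hub pairs, not just one, carry large bundles, guaranteeing some distorted pair always coincides with a heavily-loaded one). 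I would follow the amplification trick used for Theorem~\ref{thm:lpnorm_informal}: make the hub-pair structure such that a constant fraction of all query pairs route through \emph{some} hub pair, and the distorted pair is necessarily among them.

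The genuinely grid-specific work — and the expected main obstacle — is realizing this weighted graph as a (weighted) \emph{grid} graph on only $\mathrm{poly}(n)$ vertices while (a) preserving all the relevant shortest paths and distances, (b) keeping $\alpha < 1/2$ (the dimension threshold) meaningful, and (c) not introducing spurious short paths that would let A* prune. The standard device is to lay the $m$ hubs along one axis and the $k$-bundles along the other, forming an $m \times k$-ish rectangular block (cf. the \texttt{newGrid} macro with its $m$ and $k$ braces), routing each logical edge as an axis-parallel path of grid edges whose weights are concentrated on a single ``gateway'' segment (the red heavy edges in the figure) and zero/light elsewhere, so that grid geodesics are forced through the intended gateways. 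One must verify that no diagonal shortcut exists (grid graphs are 4-connected, so this is automatic) and that the weight assignment still yields the distance identities among hubs needed for the packing argument; the $\alpha < 1/2$ restriction presumably comes exactly from needing both $m \approx n^\alpha$ and $k \approx n^{1-\alpha}$ to fit in a grid of $\approx mk \le n$ cells plus routing overhead, which forces $mk = O(n)$ and hence no contradiction only when $\alpha$ and $1-\alpha$ are both $\le$ some bound — I would pin down the constant by sizing the routing gadgets.

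Finally I would assemble the pieces: state the grid instance precisely (vertex count $n$, the hub axis, the bundle axis, the gateway-weighted logical edges), invoke the equilateral-set packing bound to get a distorted hub pair whenever $d = o(n^\alpha)$, invoke sub-additivity plus Fact~\ref{lm:scan_condition} to convert distortion into forced scans along the corresponding bundle path, and compute the average over random $(s,t)$ using the amplified hub-pair coverage to conclude the $\Omega(n^{1-\alpha})$ bound. The only subtlety beyond Theorem~\ref{thm:linftylb_informal} is bookkeeping the grid embedding and the resulting super-constant diameter (which is harmless here since the theorem statement, unlike the constant-diameter version, does not claim constant diameter). I would defer the full grid construction and its verification to Appendix~\ref{sec:linftylb_grid_proof}, as the excerpt indicates.
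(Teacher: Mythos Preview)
Your proposal has a genuine gap at the central quantitative step. You plan to obtain a distorted hub pair via the equilateral-set packing bound $e(l_\infty^d)\le 2^d$, i.e. $e^{-1}(\infty,m)\ge \log_2 m$. But this only forces a collapse when $d<\log_2 m$; with $m=n^{\Theta(\alpha)}$ that is $d=O(\log n)$, not $d=o(n^\alpha)$. The equilateral-set route is exactly what Theorem~\ref{thm:lpnorm_informal} uses, and for $p=\infty$ it yields precisely the logarithmic threshold stated there --- it cannot reach the polynomial threshold claimed in Theorem~\ref{thm:linftylb_grid}. You also notice (but do not resolve) that a single distorted pair does not make the averaging arithmetic work: $k\cdot k^2/n^2$ is far short of $n^{1-\alpha}$.

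The paper's argument is different in kind. It chooses the clique/grid weights \emph{randomly} and exploits a feature specific to $l_\infty$: if $\|\pi(u)-\pi(v)\|_\infty$ is within $\epsilon/(2m)$ of $dist(u,v)$, some single coordinate $i$ witnesses this (a ``crucial coordinate''). One then shows that any fixed coordinate can be crucial for at most $O(m)$ hub pairs, because otherwise the auxiliary graph on hubs (edges $=$ pairs using coordinate $i$) would contain a simple cycle, and summing $\pm\pi_i$-gaps around that cycle would produce an $\epsilon$-approximated tie among the random weights, which is ruled out with high probability. Hence $d=o(m)$ coordinates can account for only $o(m^2)$ well-embedded pairs, leaving $\Omega(m^2)$ hub pairs with distortion exceeding $\epsilon/(2m)$. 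With $m=n^\alpha$ and $k=n^{1-\alpha}$ (and $\alpha<1/2$ so the grid has size $\Theta(n)$), those $\Omega(m^2)$ bad pairs each contribute $k^2$ queries of cost $\Omega(k)$, giving average cost $\Omega(m^2k^3/n^2)=\Omega(n^{1-\alpha})$. The grid realization routes shortest paths between boundary hubs $v_{i,0}$ and $v_{0,j}$ through the upper-left $m\times m$ block via a dominant $(n-x)\cdot n^3$ term in the cell weights, with the pendant bundles placed in the two $m\times k$ legs behind obstacles; this is rather different from the clique-with-leaves layout you sketch. To repair your plan you would need to replace the packing bound by this random-weights/crucial-coordinate argument and carry the resulting $\Omega(m^2)$ bad pairs through the averaging.
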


Importantly,  Theorem~\ref{thm:linftylb_informal}, as well as its grid analog,  Theorem~\ref{thm:linftylb_grid}, hold even under Assumption~\ref{asm:break-tie}, i.e.,  when A* can break ties arbitrarily. This is because Lemma~\ref{lm:alwayscan} below holds for {\em any} tie-breaking rule, i.e., it lower bounds the number of vertices that {\em must} be scanned.
Therefore,  the upper bound Theorem~\ref{thm:upper_bound_informal}  is a matching upper bound. For further details on the upper bound see  Section~\ref{sec:beacon_ub}.

We start with the intuition behind the lower bound construction and proof.

\paragraph{Proof idea.}
The proof idea here follows the one in the previous section, where we first prove that there exist some pairs of vertices with large distortion, and then amplify the query cost penalty due to these pairs.
 The previous equilateral set argument works well for $l_p$ norm with $p<\infty$ but works poorly for the $l_{\infty}$ norm. Therefore, we use a special property of $l_{\infty}$ to get a tighter lower bound, specifically, that   the $l_{\infty}$ norm solely depends on one coordinate (with the maximum absolute value). We can observe that if the distance that our embedding ``wants to memorize" is random, then it cannot remember too much information in each one dimensional space. 
Our construction consists of a clique of size  $O(m)$, 
and the edge weights are  chosen uniformly in $[10,11]$. This ensures that the shortest path between any pair of vertices is the edge directly connecting them, and that the weights are randomized and independent of each other. Therefore, each dimension can only ``memorize'' $O(m)$ shortest path distances, while
there are $O(m^2)$ pairs of vertices.
Similarly to the previous case, the second 
step of amplifying the penalty is achieved 
by adding  auxiliary sets to the main lower bound construction, so that each distorted 
pair appears in the shortest path for many queries $(s,t)$.

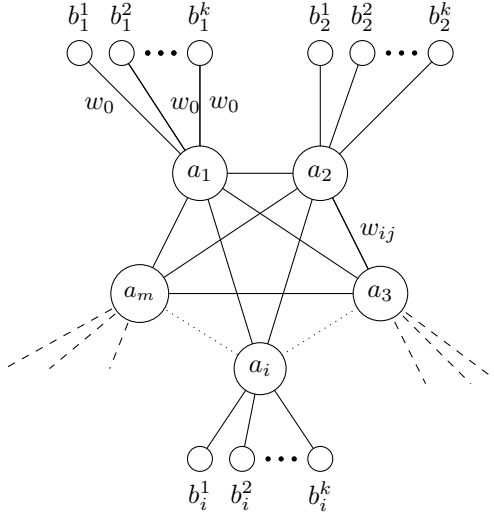
\begin{figure}[ht!]
\centering
\begin{tikzpicture}
\lINFlb
\end{tikzpicture}
\caption{Average-case complexity lower bound instance for $l_{\infty}$ norms.}
\label{fig:instance_linf_nongrid}
\end{figure}

\paragraph{The lower bound instance.} Our lower bound instance (in Figure~\ref{fig:instance_linf_nongrid}) consists of a clique $\{a_i\}_{i=1}^m$ where each $a_i$ is attached to $k$ leaves $\{b_{i,j}\}_{j=1}^k$. The edge weight between $(a_i,a_j)$ is $w_{ij} = 10+u_{ij}$, where $u_{ij}$ are i.i.d. random variables chosen from the uniform distribution over the interval $[0,1]$. The edge weight between each attached leaf to its corresponding vertex in the clique is equal to $w_0$, specified subsequently.

\begin{definition}[Approximated-tie]\label{def:approx_tie}
For a set of values $\{w_i\}_{i=1}^{n}$ and an parameter $\epsilon$, we say there exists an approximated-tie if there exists a coefficient vector $\{c_i\}_{i=1}^{n}$ where each $c_i\in \{-4 \ldots 4\}$ with at least one non-zero component and $|\sum_{i=1}^{n}c_iw_i|\le\epsilon$.
\end{definition}

\begin{proposition}\label{lm:existence_epsilon_nongrid}
For weights $\{w_{i,j}\}_{i,j\in [m]}$ generated as above, there exists an $\epsilon>0$ such that no approximated-tie exists in $\{w_{i,j}\}$ with probability at least 0.99.
\end{proposition}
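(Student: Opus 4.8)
The plan is to show that the finitely many linear combinations appearing in Definition~\ref{def:approx_tie} are almost surely nonzero, and then use a union bound together with the fact that a nondegenerate linear form evaluated on an absolutely continuous random vector puts zero mass on any single value. First I would fix the coefficient vectors: there are at most $9^{\binom{m}{2}}$ choices of $\{c_{i,j}\}$ with each $c_{i,j}\in\{-4,\ldots,4\}$, which is a finite (though large) number, and we may discard the all-zero vector. For each nonzero coefficient vector $c$, consider the random variable $X_c = \sum_{i,j} c_{i,j} w_{i,j} = \sum_{i,j} c_{i,j}(10 + u_{i,j})$. Since some $c_{i,j}$ is nonzero and the $u_{i,j}$ are i.i.d.\ uniform on $[0,1]$, the random variable $X_c$ is a nonconstant affine function of a random vector with a density, hence $X_c$ itself has an absolutely continuous distribution. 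In particular $\Pr[X_c = 0] = 0$, so almost surely $X_c \neq 0$.

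Next I would make this quantitative to extract a single $\epsilon$ that works simultaneously for all $c$. Let $\mathcal{C}$ be the (finite) set of admissible nonzero coefficient vectors and set $Z = \min_{c \in \mathcal{C}} |X_c|$. As a minimum of finitely many almost-surely-positive random variables, $Z > 0$ almost surely, so $\Pr[Z > 0] = 1$ and in particular there exists a constant $\epsilon_0 > 0$ with $\Pr[Z \ge \epsilon_0] \ge 0.99$ (indeed, $\Pr[Z \ge \epsilon] \to 1$ as $\epsilon \to 0^+$ by continuity of measure, so any $\epsilon_0$ with $\Pr[Z < \epsilon_0] \le 0.01$ suffices). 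On the event $\{Z \ge \epsilon_0\}$, no approximated-tie exists in $\{w_{i,j}\}$ with parameter $\epsilon = \epsilon_0$, which is exactly the claim. Alternatively, and perhaps cleaner for the write-up, one can bound each term directly: for a fixed nonzero $c$, condition on all but one index $(i^*,j^*)$ with $c_{i^*,j^*}\neq 0$; then $X_c$ is an affine function of the single uniform variable $u_{i^*,j^*}$ with nonzero slope $c_{i^*,j^*}$, so $\Pr[|X_c| \le \epsilon] \le \frac{2\epsilon}{|c_{i^*,j^*}|} \le 2\epsilon$, and a union bound gives $\Pr[Z \le \epsilon] \le 2\epsilon \cdot 9^{\binom{m}{2}}$; choosing $\epsilon = \epsilon_0$ small enough that this is at most $0.01$ finishes the proof.

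The only mild subtlety — which I would flag rather than call a genuine obstacle — is that the threshold $\epsilon_0$ produced by the union-bound argument depends on $m$ (it shrinks doubly-exponentially in $m$). This is harmless for the intended application, since $\epsilon$ is only used to separate a fixed finite collection of ``approximate ties'' in one particular graph instance of a given size, and the subsequent arguments treat $\epsilon$ as a constant once $m$ is fixed; but it is worth stating explicitly that $\epsilon$ may depend on $m$. If one instead wanted a bound uniform in $m$, the non-quantitative argument via $\Pr[Z>0]=1$ still gives, for each fixed $m$, \emph{some} working $\epsilon$, which is all the proposition asserts. I expect the write-up to be short: essentially the union bound over $O(9^{m^2})$ coefficient vectors combined with the anti-concentration estimate $\Pr[|a u + b| \le \epsilon] \le 2\epsilon/|a|$ for $u$ uniform on $[0,1]$ and $a \neq 0$.
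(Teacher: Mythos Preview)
The paper states this proposition without proof, treating it as essentially self-evident; your proposal correctly supplies the missing argument. Both of your approaches are valid: the non-quantitative one (the minimum of finitely many almost-surely-positive random variables is almost surely positive, so $\Pr[Z\ge\epsilon]\to 1$ as $\epsilon\to 0^+$) is the cleanest and matches the spirit in which the paper leaves the proposition unproved, while the quantitative union-bound version is also fine. Your flag that $\epsilon$ depends on $m$ is accurate and, as you note, harmless here since the paper only ever uses $\epsilon$ for a fixed instance size.
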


Let $\epsilon$ be as in the above proposition We set the edge weights between the clique to the leaves to be $w_0=\frac{\epsilon}{16n}$. From now on, we assume that there is no $\epsilon$ approximated-tie in the graph.
We can observe that in this graph, the shortest path between any two vertices in $\{a_i\}$ is exactly the edge connecting them.

\begin{lemma}
\label{lm:alwayscan}
For a pair of vertices $(a_i,a_j)$, if the embedding distance between them has error larger than $\frac{\epsilon}{2n}$, then any query $(b^p_i,b^q_j)$ for $p,q\in [k]$ will always scan $\Omega(k)$ vertices (under any tie-breaking rule.)
\end{lemma}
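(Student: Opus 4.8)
The plan is to show that every vertex $b^p_i$ and $b^q_j$ lies on a shortest path from source to target, and more importantly that *all* $k$ leaves $\{b^r_j\}_{r=1}^k$ hanging off $a_j$ (or symmetrically off $a_i$) must be scanned because the heuristic underestimates their true distance to the target by enough to keep $dist(s,u)+h(u,t) < dist(s,t)$. The first step is to pin down distances exactly: for a query $(s,t)=(b^p_i,b^q_j)$, the unique shortest path is $b^p_i \to a_i \to a_j \to b^q_j$ with length $2w_0 + w_{ij}$, using the facts that $w_0$ is tiny ($\ll$ any edge weight in the clique) and that the clique edge $a_ia_j$ is the shortest $a_i$–$a_j$ path. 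In particular $dist(s,t) = 2w_0 + w_{ij}$ and $dist(s,a_j) = w_0 + w_{ij}$.

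The second step is to exploit the hypothesis that the embedding distance between $(a_i,a_j)$ has error larger than $\frac{\epsilon}{2n}$. Since $h$ is consistent hence admissible (stated in the excerpt), the error is one-sided: $h(a_i,a_j) \le dist(a_i,a_j) = w_{ij}$, so the hypothesis gives $h(a_i,a_j) < w_{ij} - \frac{\epsilon}{2n}$. Now for each leaf $b^r_j$, $r\in[k]$, I use sub-additivity of the norm heuristic (Definition~\ref{def:subadditive_embedding}, which the excerpt notes holds for all norm heuristics) applied along $a_i, a_j, b^r_j$: $h(a_i, b^r_j) \le h(a_i,a_j) + h(a_j, b^r_j) \le h(a_i,a_j) + w_0$, where I also used admissibility $h(a_j,b^r_j)\le dist(a_j,b^r_j)=w_0$. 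Chaining once more through the source leaf, $h(b^p_i, b^r_j) \le w_0 + h(a_i,b^r_j) \le 2w_0 + h(a_i,a_j) < 2w_0 + w_{ij} - \frac{\epsilon}{2n} = dist(s,t) - \frac{\epsilon}{2n}$.

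The third step converts this into a scan guarantee for all $k$ leaves via Fact~\ref{lm:scan_condition}. Each $b^r_j$ satisfies $dist(s, b^r_j) = w_0 + w_{ij} + w_0 = dist(s,t)$ only when $r=q$; for $r\ne q$ one has $dist(s,b^r_j) = 2w_0 + w_{ij}$ as well since the leaf is one $w_0$-edge off $a_j$ — wait, that equals $dist(s,t)$ exactly, so I must instead bound $dist(s,b^r_j) + h(b^r_j,t)$. Here I apply the sub-additivity/admissibility chain the other way: $h(b^r_j, t) = h(b^r_j, b^q_j) \le h(b^r_j, a_j) + h(a_j, b^q_j) \le 2w_0$, and $w_0 = \frac{\epsilon}{16n}$ is chosen so that $dist(s,b^r_j) + h(b^r_j,t) \le 2w_0 + w_{ij} + 2w_0 < dist(s,t) + \text{(something)}$ — this is too weak as stated, so the correct route is to go through the *distorted pair directly*: every $b^r_j$ has a shortest path from $s$ passing through $a_i$ then $a_j$, and the real obstruction is that $g(a_j) = dist(s,a_j) + h(a_j,t)$ is small. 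Concretely $h(a_j, t) = h(a_j, b^q_j) \le w_0$ so $g(a_j) \le w_0 + w_{ij} + w_0 = dist(s,t)$, which means $a_j$ may be scanned but I need strict inequality for the $b^r_j$'s. The fix: use $h(b^r_j, t) \le h(b^r_j, a_i) + h(a_i, a_j) + h(a_j, b^q_j)$? No — instead observe $g(b^r_j) = dist(s,b^r_j) + h(b^r_j, t)$ and bound $h(b^r_j,t)$ by routing the heuristic's sub-additivity through $a_i$: $h(b^r_j, t) \le h(b^r_j, a_i) + h(a_i, t) \le (w_0 + w_{ij}?)$...

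The cleanest argument, which I would carry out carefully: the $k$ leaves $\{b^r_i\}_{r=1}^k$ off the *source* side all have $dist(s, b^r_i) \le 2w_0$, and $g(b^r_i) = dist(s,b^r_i) + h(b^r_i, t) \le 2w_0 + h(b^r_i, t)$. Since $h(b^r_i,t) \le h(b^r_i, a_i) + h(a_i, a_j) + h(a_j, b^q_j) \le w_0 + (w_{ij} - \tfrac{\epsilon}{2n}) + w_0$, we get $g(b^r_i) \le 4w_0 + w_{ij} - \tfrac{\epsilon}{2n} = w_{ij} + \tfrac{4\epsilon}{16n} - \tfrac{\epsilon}{2n} = w_{ij} - \tfrac{\epsilon}{4n} < 2w_0 + w_{ij} = dist(s,t)$, using $4w_0 = \tfrac{\epsilon}{4n} < \tfrac{\epsilon}{2n}$. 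So by Fact~\ref{lm:scan_condition} every one of the $k$ leaves $\{b^r_i\}_{r=1}^k$ must be scanned, regardless of tie-breaking, giving $\Omega(k)$ scanned vertices. The main obstacle — and the place I'd spend the most care — is exactly this bookkeeping: making sure the chosen $w_0 = \frac{\epsilon}{16n}$ is small enough that all the additive $w_0$ slop accumulated through two or three applications of sub-additivity stays strictly below the $\frac{\epsilon}{2n}$ distortion gap, so the inequality in Fact~\ref{lm:scan_condition} is strict for all $k$ leaves simultaneously and holds for *every* source/target leaf pair $(b^p_i, b^q_j)$ uniformly.
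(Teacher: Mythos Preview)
Your proposal is correct and, after some exploratory detours, lands on exactly the paper's argument: show that each source-side leaf $b^{r}_i$ satisfies $dist(s,b^{r}_i)+h(b^{r}_i,t)\le 2w_0 + w_0 + (w_{ij}-\tfrac{\epsilon}{2n}) + w_0 = w_{ij}-\tfrac{\epsilon}{4n} < dist(s,t)$ via sub-additivity and admissibility, then invoke Fact~\ref{lm:scan_condition}. The earlier attempts to force the target-side leaves $\{b^r_j\}$ to be scanned can simply be dropped from the write-up.
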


\begin{proof}
In the setting of Theorem~\ref{thm:linftylb_informal}, we denote $h(s,t)=\|\pi(s)-\pi(t)\|_{\infty}$. %
If $\|\pi(a_i)-\pi(a_j)\|\le w_{i,j}-\frac{\epsilon}{2n}$, we can check that for query $dist(b^p_i,b^q_j)$ all vertices in $\{b^{p'}_i\}_{p'\in[k]}$ will be scanned by A*:
Let $b^{p'}_i$ be an arbitrary leaf attached to $a_i$.
\begin{align*}
&\quad dist(b^p_i,b^{p'}_i)+\|\pi(b^{p'}_i)-\pi(b^q_j)\|\\
&\le dist(b^p_i,b^{p'}_i)+\|\pi(b^{p'}_i)-\pi(a_i)\|+\|\pi(a_i)-\pi(a_j)\|+\|\pi(a_j)-\pi(b^q_j)\|\\
&\le 2w_0+w_0+w_{i,j}-\frac{\epsilon}{2n}+w_0\\
&\le w_{i,j}-\frac{\epsilon}{4n}\\
&< dist(b^p_i,b^q_j)
\end{align*}
Therefore, for every $p'\in [k]$, by the scan condition, $b_i^{p'}$ will be scanned. 
\end{proof}

\begin{lemma}\label{lm:half}
For the heuristic considered by Theorem~\ref{thm:linftylb_informal}, as long as $d\le o(m)$, at least half of pairs $(s,t)$ where $s,t \in \{a_i\}, i=1 \ldots m$, satisfy $h(s,t)<dist(s,t)-\frac{\epsilon}{2n}$.
\end{lemma}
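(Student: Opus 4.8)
The plan is to count, via a dimension/information argument, how many of the $\binom{m}{2}$ clique-pair distances $w_{i,j}$ can be reproduced (approximately) by an $l_\infty$ embedding of dimension $d$, and to conclude that if $d = o(m)$ then a majority of pairs must be underestimated by more than $\frac{\epsilon}{2n}$. The starting point is the special structure of $l_\infty$: for the pair $(a_i,a_j)$ we have $h(a_i,a_j)=\|\pi(a_i)-\pi(a_j)\|_\infty = \max_{r\in[d]} |\pi(a_i)_r - \pi(a_j)_r|$, so the estimate is ``witnessed'' by a single coordinate $r = r(i,j)$. For each pair $(i,j)$ assign it the witnessing coordinate $r(i,j)\in[d]$. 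Since there are $\binom{m}{2}=\Theta(m^2)$ pairs and only $d$ coordinates, if $d=o(m)$ then by averaging some coordinate $r^\ast$ is the witness for $\omega(m)$ pairs; more usefully, we want to argue that in \emph{each} coordinate only $O(m)$ pairs can be witnessed \emph{without} incurring the $\frac{\epsilon}{2n}$ error, and then sum over the $d=o(m)$ coordinates to get that at most $o(m^2)$ — hence fewer than half — of the pairs can be ``served accurately.''

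The heart of the argument is thus the following claim, which I would prove first: fix any coordinate $r$ and let $x_i := \pi(a_i)_r \in \mathbb{R}$. I claim that the set of pairs $(i,j)$ for which $|x_i - x_j| \ge w_{i,j} - \frac{\epsilon}{2n}$ (equivalently, pairs whose estimate from this coordinate is not a gross underestimate; recall consistency/admissibility forces $|x_i-x_j| \le h(a_i,a_j) \le w_{i,j}$ always, so such pairs have $|x_i-x_j|$ pinned to within $\frac{\epsilon}{2n}$ of $w_{i,j}$) forms a graph on $[m]$ with $O(m)$ edges — in fact I expect it to be a forest, giving at most $m-1$ edges. The reason is the no-approximated-tie hypothesis (Definition~\ref{def:approx_tie}, Proposition~\ref{lm:existence_epsilon_nongrid}): if this ``well-served'' graph contained a cycle $i_1, i_2, \ldots, i_\ell, i_1$, then along the cycle the signs $\sigma_k := \mathrm{sign}(x_{i_k} - x_{i_{k+1}}) \in \{\pm 1\}$ satisfy $\sum_k \sigma_k (x_{i_k} - x_{i_{k+1}}) = 0$ telescopically, while for each edge $|x_{i_k}-x_{i_{k+1}}|$ differs from $w_{i_k,i_{k+1}}$ by at most $\frac{\epsilon}{2n}$ in absolute value and is at most $w_{i_k,i_{k+1}}$; substituting yields $|\sum_k \sigma_k w_{i_k, i_{k+1}}| \le \ell \cdot \frac{\epsilon}{2n} \le m \cdot \frac{\epsilon}{2n}$, and since each $w$ appearing has a coefficient in $\{-1,0,1\}\subset\{-4,\ldots,4\}$ (after collecting repeated edges — a cycle in a simple graph uses each edge once, so coefficients are exactly $\pm1$) and $\frac{m\epsilon}{2n} < \epsilon$ because $m \le n$, this is an $\epsilon$ approximated-tie, contradicting our assumption. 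Hence the well-served graph in coordinate $r$ is acyclic, with at most $m-1$ edges.

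Summing over the $d$ coordinates, the total number of pairs $(i,j)$ that are well-served by \emph{some} coordinate is at most $d(m-1) = o(m^2)$. Since every pair is either well-served by its witnessing coordinate or has $h(a_i,a_j) = \max_r |\pi(a_i)_r - \pi(a_j)_r| < w_{i,j} - \frac{\epsilon}{2n} = dist(a_i,a_j) - \frac{\epsilon}{2n}$, the number of pairs satisfying the conclusion is at least $\binom{m}{2} - d(m-1) = \binom{m}{2} - o(m^2) \ge \frac12\binom{m}{2}$ for $m$ large, which is the statement. The main obstacle I anticipate is bookkeeping around the constants and the exact sign/coefficient accounting in the cycle argument: one must be careful that collapsing a closed walk to use only $\pm1$ coefficients really does respect the $\{-4,\ldots,4\}$ bound (taking a \emph{simple} cycle rather than an arbitrary closed walk handles this), and that $\ell \le m$ so $\ell \cdot \frac{\epsilon}{2n} < \epsilon$ given $m\le n$; the admissibility direction $|x_i - x_j| \le w_{i,j}$, needed so that ``well-served'' pins the value on both sides, follows from consistency of $h$ and the fact that $w_{i,j} = dist(a_i,a_j)$ in this instance.
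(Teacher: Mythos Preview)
Your proposal is correct and follows essentially the same argument as the paper: define ``crucial coordinates'' (your ``well-served'' pairs), show via the no-approximated-tie property that in each coordinate the well-served pairs form a forest on $[m]$ and hence number at most $m-1$, and sum over $d=o(m)$ coordinates. One small slip to fix when you write it up: the telescoping identity should read $\sum_k (x_{i_k}-x_{i_{k+1}}) = 0$, equivalently $\sum_k \sigma_k\,|x_{i_k}-x_{i_{k+1}}| = 0$, not $\sum_k \sigma_k(x_{i_k}-x_{i_{k+1}}) = 0$ (the latter is a sum of nonnegative terms); your subsequent substitution and bound $|\sum_k \sigma_k w_{i_k,i_{k+1}}|\le \ell\cdot\frac{\epsilon}{2n}<\epsilon$ are then correct.
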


\begin{proof}
By the definition of $l_{\infty}$ norm, if $\|\pi(u)-\pi(v)\|_{\infty}\ge dist(u,v)-\frac{\epsilon}{2n}$, then there must exist a coordinate $i\in[d]$ where $|\pi(u)_i-\pi(v)_i|\ge dist(u,v)-\frac{\epsilon}{2n}$. We call such coordinate a ``crucial coordinate'' for the pair of vertices $(u,v)$

We claim that for each coordinate $i\in [d]$, at most $m-1$ pairs of vertices use coordinate $i$ as their crucial coordinate. To show this,  we first construct an auxiliary graph with vertices $\{a_i\}$ where we assign an edge to $(u,v)$ if $(u,v)$ uses the current coordinate as their ``crucial coordinate''. Because the auxiliary graph has $m$ vertices, if there are more than $m$ edges, at least one simple cycle exists with length $l\le m$. We can go along the cycle to get an approximated-tie consisting of a $\pm 1$ weighted sum of $w_{i,j}$. Because $l\cdot \frac{\epsilon}{2n}\le \epsilon$, such a cycle would violate Proposition~\ref{lm:existence_epsilon_nongrid}.

Therefore if there are at most $o(m)$ coordinates, there can be at most $o(m^2)$ ``crucial coordinates'', which means that at least $\frac{m^2}{4}$ pairs are embedded with error more than $\frac{\epsilon}{2n}$.
\end{proof}

\begin{proof}[Proof of Theorem \ref{thm:linftylb_informal}]
We set $m=n^{\alpha}$ and $k=n^{1-\alpha}$ then $|V|=\Theta(n)$.
Now we calculate the overall average complexity lower bound. Each pair $(a_i,a_j)$ with $h(a_i,a_j)<dist(a_i,a_j)-\frac{\epsilon}{2n}$  contributes $k^2$ pairs of queries, each of which causes A* to have query complexity $\Omega(k)$. Therefore the average query complexity is lower bounded by the following:
$$\Omega\left(\frac{\frac{m^2}{4}\cdot k^2\cdot k}{n^2} \right) =\Omega\left(\frac{n^{2\alpha}\cdot n^{2(1-\alpha)}\cdot n^{1-\alpha}}{n^2}\right)=\Omega(n^{1-\alpha}).$$
\end{proof}

\subsection{An average-case lower bound for labeling heuristics}
\label{sec:information_lb}

In this section, we state  two lower bounds for labeling heuristics. The first holds for general graphs, and the second for grid graphs. Note 
that  labeling heuristics doe not have the sub-additivity property.

First, we make some assumptions on the graphs we consider. As usual, we consider weighted undirected graphs $G(V,E,W)$ where $|V|=n$. We further assume that the edge weight between any two vertices $u$ and $v$, $w(u,v)$, is an integer and that the sum of weights satisfies $\sum_{u,v} w(u,v)\le C$, where $C=poly(n)$. Therefore,  each quantity produced by our algorithm can be represented by $B=O(\log n)$ bits.

We first recall the  lower bound that holds for general, non grid, graphs.

\thmInfoGeneral*

We also provide a grid graph which preserves the complexity lower bound and is common in robotics planning problem.

\begin{restatable}{theorem}{thmInfoGrid}
\label{thm:information_lb}
For any consistent labeling heuristic function $h$ with length $L=o(n^{\alpha})$, there exists an $n$-vertex weighted grid graph $G$ such that the A* algorithm scans  $\Omega(n^{1-\alpha})$ vertices on average.
\end{restatable}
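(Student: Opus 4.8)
The plan is to adapt the general-graph lower bound of Theorem~\ref{thm:information_lb_informal} to a grid graph while preserving the $\Omega(n^{1-\alpha})$ average-case overhead. The non-grid construction presumably consists of a ``core'' of $m = n^\alpha$ vertices with carefully chosen integer weights on the $\binom{m}{2}$ connecting paths, so that a labeling scheme of length $L = o(m)$ cannot encode enough information to avoid distortion on a constant fraction of core pairs; each distorted pair is then amplified by attaching $k = n^{1-\alpha}$ pendant vertices on either side, forcing A* to scan $\Omega(k)$ vertices for each of the $\Omega(k^2)$ corresponding queries, so that averaging over all $n^2$ pairs yields $\Omega(n^{1-\alpha})$. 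The task for the grid version is to realize this same combinatorial structure — a near-complete ``interaction pattern'' among $m$ core vertices, plus the pendant amplification gadgets — using only grid edges (with weights), at the cost of a larger (super-constant) diameter.

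First I would lay out a grid of dimensions roughly $\Theta(m)\times\Theta(m)$ (so $\Theta(m^2) = \Theta(n^{2\alpha})$ cells, which is $O(n)$ since $\alpha < 0.5$), and dedicate one row to each of the $m$ ``left'' core vertices and one column to each of the $m$ ``right'' core vertices, in the spirit of the \texttt{newGrid}/\texttt{GridTuple} picture already in the paper. The intersection cell of row $i$ and column $j$ carries the weight $w_{i,j}$ needed to encode the pair-$(i,j)$ distance; a monotone (``staircase'') shortest path from the row-$i$ entry point to the column-$j$ exit point then has total length determined by $w_{i,j}$ plus a fixed offset, so the grid metric restricted to the core entry/exit ports simulates the required weighted-clique-like metric. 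I would use heavy ``boundary'' edges (as in the red thick edges of \texttt{newGrid}) to force all core-to-core shortest paths to route through exactly one designated interaction cell, and make the pendant gadgets of size $k = n^{1-\alpha}$ hang off the port vertices, again via grid cells, so that the amplification argument of Lemma~\ref{lm:alwayscan}'s analogue goes through verbatim: whenever the label-induced heuristic underestimates $dist$ on core pair $(i,j)$ by a small additive gap, all $\Omega(k)$ pendants on the $i$-side get scanned.

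The information-theoretic counting step should be essentially inherited from Theorem~\ref{thm:information_lb_informal}: since $g(f(s),f(t))$ depends on $s,t$ only through their $L$-symbol labels from an alphabet of size $\mathrm{poly}(n)$, a label set of total size $o(m)$ symbols cannot reproduce $\Omega(m^2)$ independent weight values exactly, so (after choosing the weights from a suitable range/distribution and setting a small separation parameter $\epsilon$, exactly as in Proposition~\ref{lm:existence_epsilon_nongrid} and Lemma~\ref{lm:half}) a constant fraction of core pairs must be distorted by at least the gap. Plugging $m = n^\alpha$, $k = n^{1-\alpha}$ into the same averaging computation
\[
\Omega\!\left(\frac{m^2 \cdot k^2 \cdot k}{n^2}\right) = \Omega(n^{1-\alpha})
\]
finishes the bound.

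The main obstacle I expect is the geometric faithfulness of the gadget: in a grid, \emph{every} pair of ports is connected by many short lattice paths, not just the designated one, so I must choose the weights (both the interaction-cell weights $w_{i,j}$ and the heavy boundary weights) and the placement of entry/exit ports so that (i) the intended staircase path is genuinely the unique shortest path between each pair of ports, (ii) shortcuts through \emph{other} interaction cells are strictly longer, and (iii) the pendant-scanning inequalities still hold with the right slack. Controlling these simultaneously — while keeping all weights $\mathrm{poly}(n)$ and the total vertex count $\Theta(n)$ — is the delicate part; it is analogous to, but more constrained than, the $\epsilon$-approximated-tie bookkeeping in Definition~\ref{def:approx_tie} and Lemma~\ref{lm:half}, and is the reason the grid version only achieves the bound for $\alpha < 0.5$ (so that the $\Theta(m^2)$ grid fits inside $n$ vertices).
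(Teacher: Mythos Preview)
Your high-level plan matches the paper's: realize the core-plus-pendants structure on an $m\times m$ grid with $m=n^\alpha$ and pendant strips of length $k=n^{1-\alpha}$, then reuse the information-theoretic step from Theorem~\ref{thm:information_lb_informal}. But there is a real gap in the gadget design and a mix-up in the information-theoretic step.

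The gap is in how the grid encodes the $m^2$ pair-distances. You write that ``the intersection cell of row $i$ and column $j$ carries the weight $w_{i,j}$'' and that the staircase path has length ``$w_{i,j}$ plus a fixed offset'', and that heavy boundary edges ``force all core-to-core shortest paths to route through exactly one designated interaction cell''. None of this is achievable on a grid: the L-shaped path from the row-$i$ port to the column-$j$ port necessarily passes through all of $(i,0),\ldots,(i,j),\ldots,(0,j)$, so its length is a sum of $\Theta(i+j)$ cell weights, and these path sums are heavily dependent across pairs. You cannot set the $m^2$ pair-distances independently by one cell weight each. The paper's actual construction writes each cell weight as a dominant term $(n-x)\cdot n^4$ (which forces the up-then-right trajectory and contributes a predictable offset) plus a perturbation $\delta_{x,y}\cdot n$, and then proves the non-obvious Lemma~\ref{lm:reduction_bdelta}: for \emph{any} target vector $x\in[0,2^b-1]^{m^2}$, one can recursively choose the $\delta_{x,y}$'s so that the path sum $\Delta(i,j)$ equals $x_{im+j}$. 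This invertibility of the path-sum map is the technical heart of the grid adaptation, and your plan does not contain it.

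The mix-up: for the distortion step you invoke Proposition~\ref{lm:existence_epsilon_nongrid} and Lemma~\ref{lm:half}, but those are the $\ell_\infty$ ``crucial coordinate'' arguments, which rely on the coordinate-wise structure of a norm and say nothing about an arbitrary decoder $g$. The correct tool---already used for Theorem~\ref{thm:information_lb_informal} and reused verbatim in the paper as Lemma~\ref{lm:memorizeweight_grid}---is the one-way reduction to indexing: Alice encodes her input via the $\delta$'s (using Lemma~\ref{lm:reduction_bdelta}), finds a $2m$-tuple with few bad pairs, and sends its $2m$ labels of total length $o(m^2b)$; Bob decodes $\Delta(i,j)$ from $g(f(a_i),f(b_j))$. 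Your informal ``$o(m)$ symbols cannot reproduce $\Omega(m^2)$ independent values'' is the right intuition but points at the wrong lemmas.
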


Importantly, both the graph used in Theorem~\ref{thm:information_lb_informal} and the one used to prove  Theorem~\ref{thm:information_lb} satisfy Assumption~\ref{asm:stronger_unique_shortest_path} regarding unique shortest paths,  after an appropriate weight scaling. Therefore,  Theorem~\ref{thm:upper_bound_not_norm_informal} shown later in Section~\ref{sec:notnorm_ub} is a  matching upper bound result.

In this section we only prove the general theorem, and we defer the proof of Theorem~\ref{thm:information_lb} to Appendix~\ref{sec:info_lb_grid}, as it follows similar ideas.  We start from a short overview.

\paragraph{Proof idea.} This proof inherits the idea of reducing a good embedding to a ``memorizing random number" task. We utilize the hardness of the  ``indexing'' communication complexity problem to prove that it is impossible to remember too many random numbers using a limited amount of storage.

\paragraph{The lower bound instance.} 

The lower bound instance  is depicted in Figure~\ref{fig:instance_info_nongrid}. It consists of a clique $\{a_i\}_{i=1}^m$ and additionally, each $a_i$ has a corresponding leaf set $\{b_i^p\}_{p=1}^k$. To avoid boundary cases, we assume both $m,k\ge 10$. Indeed, this
graph has the same structure as that in  Figure~\ref{fig:instance_linf_nongrid},
however, we choose a different set of parameters. The edge weight between each attached leaf to its corresponding vertex in the clique is $1$ and the edge weight between $(a_i,a_j)$ is $w_{ij}=6\cdot(2^b+\delta_{ij})-2$, where $\delta_{ij}\in[0,2^{b}-1]$ to be specified later. The setting of $w_{ij}$ guarantees that the shortest distance between any two vertices $a_i,a_j$ is due to the  edge $(a_i,a_j)$. Finally, we define an $m$-tuple  to be a set consisting of $m$ vertices $(v_1,\ldots, v_m)=(b^{i_1}_1, \ldots, b^{i_{m}}_m)$ for indices $i_j\in[k]$. Therefore, there are $k^m$ different $m$-tuples. See in Figure~\ref{fig:instance_info_nongrid}  an example to one possible $m$-tuple colored in blue.

\begin{figure}[ht!]
\centering
\begin{tikzpicture}
 \InfoLB
\end{tikzpicture}
\caption{Average-case complexity lower bound instance for labeling.}
\label{fig:instance_info_nongrid}
\end{figure}
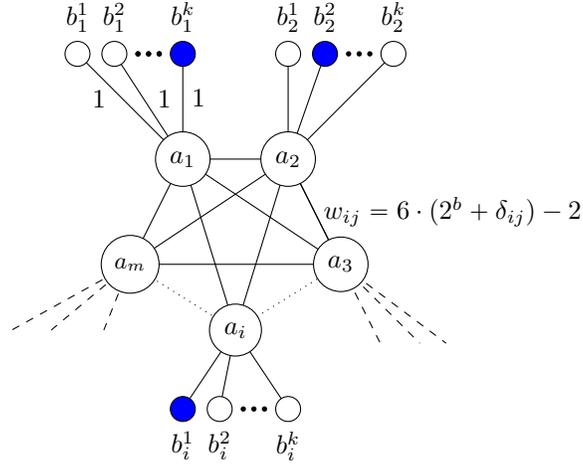

For the ease of proof, we define the binary version of labeling heuristics which only allows for $0/1$ bits and the length is multiplied by $B$.

\begin{definition}[Binary labeling heuristics]\label{def:lembedding_01}
Labeling heuristic with length $L$ consists of two fixed deterministic functions $f:V\to \{0,1\}^{BL}$ and $g:\{0,1\}^{BL}\times \{0,1\}^{BL}\to \{0,1\}^B$. The distance of two vertices $u,v$ in the embedded space is defined to be $g(f(u),f(v))$.  
\end{definition}

Interpreting   strings in $\{0,1\}^B$ as the binary version of integers within the range $[0,2^B-1]$, the consistent definition still applies to the new definition.

We will reduce our embedding problem to the following well known indexing problem.

\begin{definition}[Indexing problem]\label{def:general_index}
Alice gets a vector, $x\in [0,2^{b}-1]^n$ chosen uniformly at random. Bob gets an index $i\in[1,n]$ chosen uniformly at random. The goal is for Bob to report $x_i$ after receiving a single message from Alice. 
\end{definition}

We set the bit length for each number in the input of indexing problem to be $b=\frac{B}{8}$.
Another mild condition is that $B$ should be set  such that $2^b<n$ and $n^6< 2^B$,  to ensure that  every quantity produced in the reduction process can still be represented in $B$ bits.

\begin{theorem}[Folklore, see  e.g.,~\cite{rao2020communication}]
\label{thm:result_general_index}
Any one-way protocol for the indexing problem defined above requires $\Omega(nb)$ bits of communication in order to succeed with  probability $>\frac12$. 

\end{theorem}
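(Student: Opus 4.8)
The plan is to prove the lower bound by the standard information‑theoretic route: showing that Alice's message must carry $\Omega(nb)$ bits of information about her input. Fix any one‑way protocol. Since success is measured on the uniform input distribution and over the public coins, I would first average over the coins and fix the best choice, so that Alice's message becomes a deterministic function $\Pi=\Pi(X)$ of $X=(X_1,\dots,X_n)$, the worst‑case length $c$ is unchanged, and Bob still outputs the queried coordinate correctly with probability $>\tfrac12$ over $X$ uniform and the index $I\in[n]$ uniform and independent of $X$. Because $\Pi$ takes at most $2^{c}$ values (or, for variable length, by Shannon source coding) we have $c\ge H(\Pi)\ge I(X;\Pi)$, so it suffices to show $I(X;\Pi)=\Omega(nb)$.

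Next I would localize the decoding guarantee to each coordinate and then add up. Let $p_i=\Pr[\mathrm{Dec}(\Pi,i)=X_i]$, so the hypothesis reads $\frac1n\sum_i p_i>\tfrac12$. Since $I$ is independent of $(X,\Pi)$, $H(X_i\mid\Pi)=H(X_i\mid\Pi,I=i)$, and Fano's inequality applied to predicting $X_i\in\{0,\dots,2^{b}-1\}$ from $(\Pi,I=i)$ gives
\[
H(X_i\mid\Pi)\;\le\;1+(1-p_i)\,b .
\]
Because $X_1,\dots,X_n$ are i.i.d.\ uniform, $H(X_i\mid X_{<i})=H(X_i)=b$, and by the chain rule for mutual information together with the fact that conditioning cannot increase entropy,
\[
c\;\ge\;H(\Pi)\;\ge\;I(X;\Pi)\;=\;\sum_{i=1}^{n}\bigl(H(X_i\mid X_{<i})-H(X_i\mid\Pi,X_{<i})\bigr)\;\ge\;\sum_{i=1}^{n}\bigl(b-H(X_i\mid\Pi)\bigr).
\]
Combining the two displays,
\[
c\;\ge\;\sum_{i=1}^{n}\bigl(b-1-(1-p_i)b\bigr)\;=\;b\sum_{i=1}^{n}p_i-n\;>\;\frac{nb}{2}-n\;=\;\Omega(nb),
\]
which is the claim.

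The one point that genuinely needs care — and the only place the argument is not completely automatic — is the regime of $b$: with success probability read literally as "$>\tfrac12$'' and $b=1$ the statement is false, since Alice can transmit the majority value of $X$ and beat $\tfrac12$ with $O(1)$ communication, so the bound $\Omega(nb)$ only takes effect once $b$ is bounded away from a small constant (say $b\ge3$, where $\tfrac{nb}{2}-n\ge\tfrac{nb}{6}$). This is harmless in the application, where $b=\Theta(\log n)$ with $2^{b}$ polynomial in $n$. Everything else — fixing the public coins, the inequality $c\ge H(\Pi)$, Fano, and the i.i.d.\ chain‑rule step that makes the per‑coordinate informations add — is routine. (A purely combinatorial variant gives the same bound up to constants: a $c$‑bit message partitions $\{0,\dots,2^{b}-1\}^{n}$ into $2^{c}$ classes, and if $2^{c}=2^{o(nb)}$ then in a typical class the marginal on a $1-o(1)$ fraction of coordinates is still close to uniform, so Bob cannot recover a random coordinate with probability $>\tfrac12$.)
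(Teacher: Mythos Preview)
The paper does not prove this theorem; it is stated as folklore with a citation and used as a black box. Your argument is the standard information-theoretic proof and is correct: fixing the public coins, bounding $c\ge I(X;\Pi)$, applying Fano per coordinate, and summing via the chain rule (using independence of the $X_i$) all go through exactly as you wrote. Your caveat about very small $b$ is accurate and, as you note, irrelevant in the paper's application where $b=\Theta(\log n)$.
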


\begin{definition}[A bad pair]\label{def:bad_pair} We say that a pair of vertices $v_i,v_j$ is a \emph{bad pair} if they have large distortion, defined as $g(f(v_i),f(v_j))<dist(v_i,v_j)-3$.
\end{definition}

We continue to argue about the contribution of bad pairs to the average query complexity of A*.

\begin{lemma}\label{lem:bad_pairs_contribution}
If there are $\Omega(m^2)$ bad pairs in every $m$-tuple, the contribution of bad pairs to the query complexity of A* (over all possible pair queries) is $\Omega(m^2k^3)$, so the average complexity for A* is at least $\Omega(k)$.
\end{lemma}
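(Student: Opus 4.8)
The plan is to count, for each bad pair, how many source–destination queries it forces A* to pay for, and how large that payment is. Fix an $m$-tuple $(v_1,\ldots,v_m)=(b_1^{i_1},\ldots,b_m^{i_m})$ and a bad pair $(v_i,v_j)$ inside it, so $g(f(v_i),f(v_j))<dist(v_i,v_j)-3$. Consider a query $(s,t)$ with $s=b_i^{p}$ for some $p\in[k]$ and $t=v_j=b_j^{i_j}$. I would first argue, exactly as in Lemma~\ref{lm:alwayscan}, that all $k$ leaves $\{b_i^{p'}\}_{p'\in[k]}$ attached to $a_i$ must be scanned: each such leaf lies on the (unique) shortest $s$–$t$ path through $a_i, a_j$, and using the ``near-consistency'' bound $g(f(b_i^{p'}),f(t)) \le dist(b_i^{p'},a_i)+g(f(v_i),f(v_j))+dist(a_j,t)$ together with the $-3$ distortion and the fact that the three extra leaf edges contribute only $+3$ in total (each leaf edge has weight $1$), we get $dist(s,b_i^{p'})+h(b_i^{p'},t) < dist(s,t)$, so the scan condition (Fact~\ref{lm:scan_condition}) applies. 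Hence each such query scans $\Omega(k)$ vertices.

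Next I would do the bookkeeping. There are $k^m$ $m$-tuples; each contains $\Omega(m^2)$ bad pairs by hypothesis. For a fixed bad pair $(b_i^{i_i},b_j^{i_j})$ appearing in a given tuple, the queries I exploit are of the form $(b_i^{p}, b_j^{i_j})$ for $p\in[k]$, giving $k$ queries, each costing $\Omega(k)$ extra scans, so $\Omega(k^2)$ scan-cost per (bad pair, tuple) incidence. Summing over the $\Omega(m^2)$ bad pairs in each of the $k^m$ tuples would naively give $\Omega(k^{m+2} m^2)$, but many (tuple, bad pair) incidences refer to the same underlying query $(b_i^p, b_j^q)$, so I must be careful not to double count. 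The clean way is to reorganize the sum: a bad pair is determined by a choice of clique indices $i\neq j$ and leaf indices $(q_i,q_j)$, i.e. by a pair of leaf-vertices $(b_i^{q_i}, b_j^{q_j})$; whether it is ``bad'' depends only on those two vertices, not on the rest of the tuple. The hypothesis ``$\Omega(m^2)$ bad pairs in every $m$-tuple'' then implies (by averaging over tuples) that a constant fraction of all $\binom{m}{2}k^2 = \Theta(m^2k^2)$ two-leaf pairs $(b_i^{q_i},b_j^{q_j})$ with distinct clique endpoints are bad. Each such bad two-leaf pair $(b_i^{q_i}, b_j^{q_j})$ yields $k$ distinct queries $(b_i^{p}, b_j^{q_j})$, $p\in[k]$, each with $\Omega(k)$ scan overhead, contributing $\Omega(k^2)$; total contribution $\Omega(m^2 k^2 \cdot k^2) = \Omega(m^2 k^4)$. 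Since the total number of query pairs is $n^2$ with $n=\Theta(mk)$, the average overhead is $\Omega(m^2k^4/(mk)^2)=\Omega(k^2)$, which is even stronger than the claimed $\Omega(k)$; I would then just state the $\Omega(m^2k^3)$ total / $\Omega(k)$ average bound as in the lemma (the weaker statement suffices and matches the later use), being careful that the constant in ``$\Omega(m^2)$ bad pairs per tuple'' survives the averaging.

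The main obstacle, and the step I would spend the most care on, is the double-counting issue in the previous paragraph: the hypothesis is phrased per $m$-tuple, but a single bad query $(b_i^p,b_j^q)$ is ``charged'' by exponentially many tuples, so I must convert the per-tuple guarantee into a statement about the fraction of bad two-leaf pairs among all $\Theta(m^2k^2)$ candidate pairs, and only then count queries. A secondary point to verify carefully is that the scan argument in the first paragraph genuinely needs only the weak bound $h(u,t)\le dist(u,t)$-type consistency and the $-3$ slack — i.e., that the three unit-weight leaf edges ($s$ to $a_i$, $a_i$ to the scanned leaf's nothing... actually $b_i^{p'}$ to $a_i$, and $a_j$ to $t$) sum to at most $3$, so the $-3$ distortion strictly dominates and the inequality is strict — since A* must scan $u$ only when the inequality is strict (Fact~\ref{lm:scan_condition}). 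Everything else is routine arithmetic with $n=\Theta(mk)$.
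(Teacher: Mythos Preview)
Your double-counting step is correct and is exactly what the paper does: from ``$\Omega(m^2)$ bad pairs in every $m$-tuple'' one gets $\Omega(m^2 k^m)$ incidences, and since each leaf-pair sits in $k^{m-2}$ tuples, there are $\Omega(m^2 k^2)$ distinct bad pairs.

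The gap is in your scan argument. You try to port Lemma~\ref{lm:alwayscan} and show that a single query $(b_i^{p},v_j)$ forces A* to scan \emph{all} $k$ leaves $b_i^{p'}$. This does not go through for labeling heuristics. Your ``near-consistency'' bound is not available: labeling heuristics are not sub-additive (the paper flags this at the start of Section~\ref{sec:information_lb}), so you cannot route the bound through $a_i,a_j$ as written. The most consistency gives (iterated along a shortest path) is $h(b_i^{p'},t)\le dist(b_i^{p'},v_i)+h(v_i,t)$ with $dist(b_i^{p'},v_i)=2$. Plugging in, the best you get is
\[
dist(s,b_i^{p'})+h(b_i^{p'},t)\le 2+2+(w_{ij}-1)=w_{ij}+3,
\]
which does \emph{not} beat $dist(s,t)=w_{ij}+2$. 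The detour through $b_i^{p'}$ and back to $v_i$ costs four leaf edges, not three, so the $-3$ distortion is one short. (Also, the leaves $b_i^{p'}$ are not on the shortest $s$--$t$ path, contrary to what you wrote.)

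The paper's fix is to reverse the charging. It does not argue that one query scans many leaves; instead, for a bad pair $(b_i^{p},b_j^{q})$ it shows that the \emph{single} vertex $b_i^{p}$ is scanned in each of the $k$ queries $(b_i^{p'},b_j^{q})$, $p'\in[k]$. Here the scanned vertex is one endpoint of the bad pair and the target is the other, so the value $g(f(b_i^{p}),f(b_j^{q}))$ is used directly and no triangle inequality is needed:
\[
dist(b_i^{p'},b_i^{p})+g(f(b_i^{p}),f(b_j^{q}))<2+(w_{ij}-1)<w_{ij}+2=dist(b_i^{p'},b_j^{q}).
\]
Each distinct bad pair thus contributes $\Omega(k)$ (not $\Omega(k^2)$) to the total, giving $\Omega(m^2k^2)\cdot k=\Omega(m^2k^3)$ and average $\Omega(k)$ --- which is exactly why the lemma states $\Omega(k)$ rather than your hoped-for $\Omega(k^2)$.
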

\begin{proof}
We first argue that if a pair of vertices $(b^p_i,b^q_j)$ is a bad pair (as defined in Definition~\ref{def:bad_pair}), then for any shortest path query of the form $(b_i^{p'}, b^q_j)$ for any $p'\in[k]$ will scan $b^p_i$.
By Fact~\ref{lm:scan_condition}, for a shortest path query  $(s,t)$, a vertex $u$ is always scanned by A* if $dist(s,u)+g(f(u),f(t))<dist(s,t)$. By Definition~\ref{def:bad_pair},  if $(b^p_i,b^q_j)$ is a bad pair, then $g(f(b^p_i),f(b^q_j))< dist(b^p_i,b^q_j)-3=w_{ij}-1$. Therefore,  for any query from $b^{p'}_i$ to $b^q_j$ where $b^{p'}_i$ is any other element in the set $\{b^p_i\}_{p=1}^k$
\begin{align*}
dist(b^{p'}_i,b^p_i)+g(f(b^p_i),f(b^q_j))< 2+w_{i,j}-1<dist(b^{p'}_i,b^q_j).
\end{align*}
implying that $b_i^{p}$ will be scanned.
Therefore, every bad pair contributes $\Omega(k)$ cost to the summation of all-pair shortest path queries  cost. 

It remains  to lower bound the number of such bad pairs at $\Omega(k^2m^2)$.
By the assumption that there are $\Omega(m^2)$ bad pairs for each $m$-tuple,  counting with repetitions, there are at least $\Omega(m^2 \cdot k^m)$
bad pairs. 
Since every bad pair appears in at most  $k^{m-2}$ $m$-tuples, it follows that  there are at least $\Omega(k^2\cdot m^2)$ distinct bad pairs.
Hence, the bad pairs contribute $\Omega(m^2k^3)$ to the overall query complexity of A*.
Because $n=\Theta(m\times k)$, the average-case complexity is at least $\Omega(k)$.
\end{proof}

\begin{lemma}\label{lm:memorizeweight_nongrid}
Consider the graph described above and the A* algorithm using a consistent $L=o(m)$ labeling heuristic. If for any choice of $\{\delta_{ij}\}$ there exists an $m$-tuple 
$(v_1,...v_m)$ (where recall that $v_i=b_i^{p_i}$ for some $p_i\in[k]$), for which at most $o(m^2)$ number of the induced all-pair embedding distances $\{g(f(v_i),f(v_j))\}_{i,j\in[m]}$ have large distortion, $g(f(v_i),f(v_j))<dist(v_i,v_j)-3$, then there is a one-way protocol for the indexing problem using $o(nb)$ bits and succeeds with probability at least $1-o(1)$.
\end{lemma}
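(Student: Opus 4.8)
The plan is to build a one-way protocol for the indexing problem of Definition~\ref{def:general_index} on a vector of length $N:=\binom{m}{2}=\Theta(m^2)$, identifying the $N$ coordinates of Alice's vector with the $\binom{m}{2}$ edges of the clique $\{a_i\}_{i=1}^m$. Given Alice's input $x\in[0,2^b-1]^N$, set $\delta_{\{s,t\}}:=x_{\{s,t\}}$ for every clique edge $\{s,t\}$; this fixes all clique weights $w_{st}=6(2^b+\delta_{\{s,t\}})-2$ (the leaf edges have weight $1$ regardless of $x$) and hence the entire graph $G$. Since the lemma's hypothesis holds for \emph{every} choice of $\{\delta_{ij}\}$, in particular for this one there exists an $m$-tuple $(v_1,\dots,v_m)$, $v_i=b_i^{p_i}$, in which at most $o(m^2)$ of the pairs $(v_i,v_j)$ satisfy $g(f(v_i),f(v_j))<dist(v_i,v_j)-3$. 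Alice, being computationally unbounded, finds such a tuple and sends the message $\big(f(v_1),\dots,f(v_m)\big)$; no bits are needed to transmit $f$ and $g$ themselves, as they are fixed public functions. Each label lies in $\{0,1\}^{BL}$, so the message has $m\cdot BL=B\cdot m\cdot o(m)=o(m^2B)=o(Nb)$ bits, using $L=o(m)$, $N=\Theta(m^2)$ and $B=8b$, matching the $o(nb)$ bound of the lemma.

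Next I would describe and verify Bob's decoding. Bob's index $i$ names a clique edge $\{s,t\}$; he reads $f(v_s),f(v_t)$ out of the message and outputs $\widehat{\delta}:=\lceil g(f(v_s),f(v_t))/6\rceil-2^b$. For correctness, note that the path $b_s^{p_s}\!-\!a_s\!-\!a_t\!-\!b_t^{p_t}$ is shortest, so $dist(v_s,v_t)=1+w_{st}+1=6(2^b+\delta_{\{s,t\}})$, which is an integer, and the numbers $6(2^b+\delta)$ for distinct integers $\delta$ are $6$ apart. As the heuristic is consistent, hence admissible, $g(f(v_s),f(v_t))\le dist(v_s,v_t)$; and whenever $(v_s,v_t)$ is not a bad pair (Definition~\ref{def:bad_pair}) we additionally have $g(f(v_s),f(v_t))\ge dist(v_s,v_t)-3$. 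Thus $g(f(v_s),f(v_t))$ falls in the length-$3$ interval $[\,6(2^b+\delta_{\{s,t\}})-3,\,6(2^b+\delta_{\{s,t\}})\,]$, and these intervals are pairwise disjoint over integer values of $\delta_{\{s,t\}}$, so the rounding recovers $\widehat{\delta}=\delta_{\{s,t\}}=x_{\{s,t\}}$ exactly. Finally, since at most $o(m^2)$ of the $\Theta(m^2)$ pairs in the chosen tuple are bad --- and this holds for Alice's input no matter what it is --- a uniformly random clique edge is a bad pair with probability $o(1)$, so Bob answers correctly with probability $1-o(1)$, giving the claimed protocol.

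I expect the main obstacle to be conceptual rather than computational: recognizing that the weights $w_{st}=6(2^b+\delta_{st})-2$ are engineered so that the spacing $6$ between possible admissible targets $dist(v_s,v_t)$ strictly exceeds twice the distortion tolerance $3$ in the definition of a bad pair, which is precisely what lets the admissibility upper bound and the ``not-bad'' lower bound sandwich $g(f(v_s),f(v_t))$ into a window narrow enough to decode $\delta_{st}$. The remaining steps are routine bookkeeping: Bob never needs to know which leaves $p_s,p_t$ Alice chose (the two labels plus the public constant $2^b$ suffice), and the dependence of the selected $m$-tuple on $x$ is harmless because Bob only reads labels out of the message.
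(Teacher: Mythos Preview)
Your proposal is correct and follows essentially the same approach as the paper: identify the $\binom{m}{2}$ indexing coordinates with the clique edges, have Alice set $\delta_{ij}=x_{ij}$, find a good $m$-tuple, transmit the $m$ labels ($o(m^2B)=o(Nb)$ bits), and have Bob decode via $\lceil g(f(v_s),f(v_t))/6\rceil-2^b$. Your justification of the decoding step is in fact slightly more explicit than the paper's, spelling out that admissibility gives the upper bound and the not-bad condition gives the lower bound, pinning $g$ to an interval of length $3$ within a spacing of $6$.
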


\begin{proof}
Suppose that for every choice of $\{\delta_{ij}\}$, there exists at least one $m$-tuple $(v_1, \ldots, v_m)$ with its induced all-pair embedding distance satisfying that for at least $1-o(1)$ fraction of the pairs $v_i,v_j$, $dist(v_i,v_j)-3\le g(f(v_i),f(v_j))\le dist(v_i,v_j)$. Then we construct the following protocol for the indexing problem. We let $n=\binom{m}{2}$, and we think of the index that bob receives as a tuple $(i,j)$, where $i,j\in [m]$ and $i< j$.

Alice: given an input $x\in [0,2^b-1]^{\binom{m}{2}}$, set $\{\delta_{ij}\}=x$ respectively. Enumerate over all the $m$-tuples to find the one whose embedded distance has distortion smaller than $3$ for at least $1-o(1)$ fraction of all pairs. Note that a distortion smaller than 3  means that $dist(a_i,a_j)-1\le g(f(v_i),f(v_j))\le dist(a_i,a_j)+2$. Recall that  $dist(a_i,a_j)=w_{ij}=6\cdot(2^b+\delta_{ij})-2$. Hence,  $\left\lceil\frac{g(f(v_i),f(v_j))}{6}\right\rceil-2^{b}=\delta_{ij}$. Alice then sends to Bob  the embedding $\{f(v_i)\}_{i=1}^m$,  which has bit length $o(m^2B)$, or equivalently $o(m^2b)$ in terms of the indexing bit length $b$.

Bob: Receive $\{f(v_i)\}$ from Alice, use them to construct $\delta'_{ij}=\left\lceil\frac{g(f(v_i),f(v_j))}{6}\right\rceil-2^{b}$ where at least $1-o(1)$ fraction of them satisfy $\delta'_{ij}=\tnew{\delta_{ij}=x}$. 
Given Bob's query $(i,j)$, Bob 
 answers the corresponding term $\delta'_{ij}$.

Therefore, the protocol above can solve the indexing problem with probability greater than $1-o(1)$ and communication complexity $o(m^2b)$ for any input.
\end{proof}

We are now ready to prove Theorem~\ref{thm:information_lb_informal}.

\begin{proof}[Proof of Theorem~\ref{thm:information_lb_informal}]
We set $m=n^{\alpha}$ and $k=n^{1-\alpha}$, implying that  that the size of the  lower bound graph is $\Theta(m\cdot k)= \Theta(n)$ as required. 
We prove this theorem by contradiction. If there exists a consistent labeling heuristic $h$ with length $L=o(n^{\alpha})$ such that for any choice of $\{\delta_{ij}\}$, A* with heuristic $h$ scans $o(n^{1-\alpha})$ vertices. Then by Lemma~\ref{lem:bad_pairs_contribution}, for each choice of $\{\delta_{ij}\}$ we can find a $m$-tuple, where there is at most $o(1)$ fraction of bad-pairs. Finally, by Lemma~\ref{lm:memorizeweight_nongrid}, we can get an algorithm for the indexing problem, contradicting Theorem~\ref{thm:result_general_index}.
\end{proof}

\section{Upper bounds}

\subsection{Random beacons}\label{sec:beacon_ub}

In this subsection, we analyze the performance of beacon heuristic under Assumption~\ref{asm:break-tie} when A* has the ability to arbitrarily break ties. 
Assumption~\ref{asm:break-tie} means that for the scanning condition stated in Fact~\ref{lm:scan_condition}, when answering query $dist(s,t)$, A* will only scan vertex $u$ that {\em must} be scanned, i.e., if either $u\in P(s,t)$ or $dist(s,u)+h(u,t)<dist(s,t)$.

We can verify that the random beacon embedding is consistent according to Definition~\ref{def:consistent_embedding}.

\upperBoundBeacon*

\begin{proof}
First, we uniformly at random select the set $B$ from the vertex set $V$. By Assumption~\ref{asm:break-tie}, A* answering query $dist(s,t)$ will visit a vertex $u \notin P(s,t)$ if $dist(s,u)+h(u,t)<dist(s,t)$, or equivalently $\|\pi(u)-\pi(t)\|< dist(s,t)-dist(s,u)$. By the definition of $\|\pi(u)-\pi(t)\|=\max_{v\in B}|dist(u,v)-dist(t,v)|$ in the random beacon embedding, $u$ will satisfy the inequality above  only if $s$ can give a better distance estimation for   $dist(u,t)$ than all beacons in $B$. 
This happens with probability at most $\frac{1}{|B|+1}$ over a random selection of $B \cup \{s\}$ for a fixed pair $u,t$. In expectation, for any pair $(u,t)$, there are at most $\frac{n}{|B|}$  such vertices $s$. 
Therefore, taking a summation over all pairs of $(u,t)$ yields that the expected number of extra vertices scanned is at most  $\frac{n^2\cdot n/|B|}{n^2}=\frac{n}{|B|}$. Finally, we choose $|B|=n^{\alpha}$ to get the desired result.
\end{proof}

\subsection{A labeling heuristic}\label{sec:notnorm_ub}

In this section we present a labeling heuristic with parameters as in Theorem~\ref{thm:upper_bound_not_norm_informal}, assuming  Assumption~\ref{asm:stronger_unique_shortest_path}.
First, we observe that this assumption  holds for ``generic'' graphs, i.e., holds with high probability assuming that the edge weights are generated randomly. We defer the proof to Appendix~\ref{appen:prob_unique_proof}.

\begin{proposition}\label{prop:prob_unique}
For any weighted graph $G=(V,E,W)$ with $|V|=n$ with edge weight independently generated from a discrete distribution whose probability mass at each point is upper bounded by $\frac{1}{n^5}$, Assumption~\ref{asm:stronger_unique_shortest_path} holds with probability at least $1-\frac{1}{n}$.
\end{proposition}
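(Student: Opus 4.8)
The plan is to show that with high probability, no two distinct $s$–$t$ paths in $G$ have lengths within $c$ of each other. Fix a pair of vertices $(s,t)$ and fix two distinct simple paths $P_1, P_2$ between them. Since the paths are distinct, there is at least one edge $e^\star$ that lies on exactly one of them, say $e^\star \in P_1 \setminus P_2$. The key observation is that, conditioned on all edge weights other than $w(e^\star)$, the difference $\mathrm{len}(P_1) - \mathrm{len}(P_2)$ is an affine function of $w(e^\star)$ with coefficient $\pm 1$ (it is $w(e^\star)$ plus a constant determined by the conditioning). Hence the event $|\mathrm{len}(P_1) - \mathrm{len}(P_2)| \le c$ forces $w(e^\star)$ to lie in an interval (actually a set of at most $2c+1$ integer points, since weights are integral) whose probability is, by the assumption on the mass function, at most $(2c+1)/n^5 = O(1/n^5)$.

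Next I would union-bound. The naive count of simple paths is exponential, which is too crude. Instead I would observe that for the ``second shortest path'' condition to fail it suffices to consider, for each pair $(s,t)$, the actual shortest path $P_1$ (unique or not) and any competitor $P_2$; but to make the union bound clean it is simpler to bound the probability that \emph{some} pair of distinct paths between \emph{some} pair of vertices comes within $c$. A better route: for each ordered pair of vertices $(s,t)$ and each edge $e=(u,v)$, consider the event that there exist two distinct $s$–$t$ paths differing in $e$ with near-equal lengths; conditioning on everything but $w(e)$, this is again a bad-interval event for $w(e)$ of probability $O(1/n^5)$. Summing over the at most $n^2$ pairs $(s,t)$ and at most $n^2$ edges gives a failure probability of $O(n^4/n^5) = O(1/n)$, which after adjusting constants is at most $1/n$ as claimed. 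Equivalently, one can phrase it per edge: for each edge $e$, conditioned on the other weights, all pairwise length-differences among paths through $e$ versus not through $e$ are determined up to the additive shift $w(e)$, so only $O(1)$ values of $w(e)$ are ``dangerous'' for any given pair of candidate paths, and a careful accounting over the polynomially many relevant path-difference functions keeps the total under $1/n$.

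The main obstacle is making the union bound rigorous without paying for the exponentially many simple paths: one must argue that it is enough to ``charge'' each near-tie to a single distinguishing edge and condition on the remaining randomness, so that the relevant count of bad events is polynomial in $n$ rather than exponential. Concretely, the cleanest formalization fixes an edge $e$ and two vertex-disjoint-except-at-endpoints path segments that witness the difference; conditioned on all weights except $w(e)$, the difference of the two full path lengths is $w(e)$ plus a constant, so the probability it falls in $[-c,c]$ is at most $(2c+1)\cdot n^{-5}$; then one notes there are at most $n^2$ choices of $(s,t)$, $n^2$ choices of $e$, and the remaining combinatorial choices of the two segments only affect the \emph{constant} shift, not the coefficient of $w(e)$, so the bad $w(e)$-values form a union of at most $\mathrm{poly}(n)$ intervals — but since we only need \emph{one} witnessing edge per bad pair, it suffices to union over $(s,t)$ and the shortest path's edges, giving $O(n^3)$ events each of probability $O(n^{-5})$, hence total failure probability $O(n^{-2}) \le 1/n$. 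The remaining details — verifying integrality is used only to turn intervals into finite point sets, and that Assumption~\ref{asm:stronger_unique_shortest_path} as stated (gap strictly greater than $c$) is implied — are routine.
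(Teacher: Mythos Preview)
Your ``better route'' --- union bound over all pairs $(s,t)$ and all edges $e$, conditioning on $w(e)$ --- is the right idea and matches the paper's approach, which also conditions on a single edge weight to decouple it from the rest. The paper's execution differs slightly: it observes that two distinct shortest $s$--$t$ paths must leave $s$ through different neighbors $u,v\in N(s)$, unions over $(s,t,u,v)$, and conditions on $w(s,u)$; it then argues via a short case analysis that in the tie event $dist(u,t)$ and $dist(v,t)$ cannot actually depend on $w(s,u)$, so the equality $w(s,u)=w(s,v)+dist(v,t)-dist(u,t)$ pins $w(s,u)$ to a single value. Either packaging yields the same $n^4\cdot n^{-5}=n^{-1}$ bound.

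There is, however, a real gap in how you phrase the per-$(s,t,e)$ event. You write it as ``there exist two distinct $s$--$t$ paths differing in $e$ with near-equal lengths''; conditioned on the other weights this is \emph{not} a single bad interval for $w(e)$, because there are exponentially many path pairs and each pins $w(e)$ to a different value --- your claim that ``the bad $w(e)$-values form a union of at most $\mathrm{poly}(n)$ intervals'' is unjustified. What you must use instead is the event ``the shortest $s$--$t$ path through $e$ and the shortest $s$--$t$ path avoiding $e$ are within $c$'': the latter has length a constant $A$, and the former has length $w(e)+B$ for a \emph{single} constant $B$ (the minimum, over all simple paths through $e$, of their length minus $w(e)$), so the bad set for $w(e)$ is one interval of mass $O(n^{-5})$. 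Your later attempt to repair this by taking the union only over ``the shortest path's edges'' to get $O(n^3)$ events is circular --- the shortest path is itself random and depends on $w(e)$, so the index set of the union bound must be the deterministic set of all edges, putting you back at $O(n^4)$ events (which is still fine).
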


Since it is easy to re-scale the edge weights, we set the constant in Assumption~\ref{asm:stronger_unique_shortest_path} to be $3$ for the ease of later proofs.
For convenience, we restate the main theorem shown in this section.

\upperBoundLabeling*

\paragraph{Proof idea.}
Since a ``naive" beacon embedding  produces ties, our idea for the proof is to slightly increase the embedded distance  so that potential ties are avoided by adding more information to the embedding.
 This  preserves the consistency of the heuristic,  except in the case where the original embedded distance is already tight. The latter  case only happens when  one of the three vertices ($s,u$ or $t$) lies on the shortest path between the other two. Under the unique shortest path assumption (Assumption~\ref{asm:stronger_unique_shortest_path}), this case is easy to detect  if we encode the Euler tour of the shortest path tree rooted at each beacon in the ``extra'' bits of the labels. At the same time, adding this information increases the label length by only a negligible factor.

In order to prove the theorem, we first extend the definition of the beacon heuristic from the previous section.

\begin{definition}[Beacon heuristic with tie breaking]\label{def:beacon_embedding_tie_breaker}
For a beacon set $B$, we define $\pi$ to be a concatenation of two $d$ dimensional embeddings $\pi^0$ and $\pi^1$ where $\pi^0_i(s)=dist(s,b_i)$ is the same as beacon-based embedding and $\pi^1_i(s):V\to [2n]^2$  are the locations of the two occurrences of the vertex $s$ in the Euler tour of the shortest path tree rooted at $b_i$ (see Figure~\ref{fig:parseq}). \begin{align}
h(s,t)=\max_{i\in|B|}\left(\left|\pi^0_i(s)-\pi^0_i(t)\right|+\underbrace{\left|\sum_{j\in\{0,1\}} sign\left(\pi^1_{i,j}(s)-\pi^1_{i,j}(t)\right)\right|}_{\textup{checking whether $s$ is $t$'s ancestor/descendent}}\right)\label{line:pi_design}
\end{align}
\end{definition}

Note that by the property of the Euler tour, the second part in Equation~\ref{line:pi_design} is equal to either $0$ or $2$. Moreover, the value is equal to $0$ only when $u$ is an ancestor of $v$ or $v$ is an ancestor of $u$.  

Note $h(\cdot,\cdot)$ is not a norm function of $\pi(\cdot)-\pi(\cdot)$ in general.  

\begin{figure}[ht!]
\centering
\begin{tikzpicture}
\begin{scope}[xscale=0.8,yscale=.8]
\drawParen
\end{scope}
\end{tikzpicture}
\caption{An example of a shortest path tree (left) and its Euler tour (right).}
\label{fig:parseq}
\end{figure}
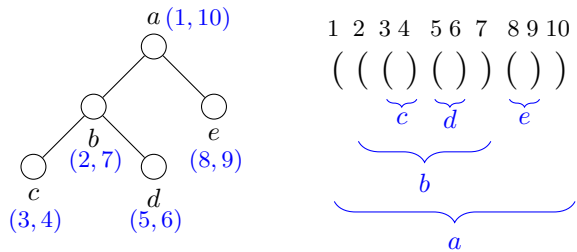

\begin{lemma}
Under Assumption~\ref{asm:stronger_unique_shortest_path}, the beacon-based embedding with tie breaker defined in Definition~\ref{def:beacon_embedding_tie_breaker} is consistent and has the following compact expression:
\begin{align}
h(s,t)=\begin{cases}dist(s,t)\ &if\ \|\pi^0(s)-\pi^0(t)\|_{\infty}=dist(s,t)\\ \|\pi^0(s)-\pi^0(t)\|_{\infty}+2\ &if\ \|\pi^0(s)-\pi^0(t)\|_{\infty}<dist(s,t)\end{cases}\label{line:goal_pi}
\end{align}
\end{lemma}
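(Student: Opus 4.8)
The goal is to establish the compact expression~\eqref{line:goal_pi} and deduce consistency from it. The plan is to first verify the equivalence between the definition in~\eqref{line:pi_design} and the case expression in~\eqref{line:goal_pi}, and then check the consistency inequality directly on the case expression.

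\emph{Step 1: Deriving the compact form.} For each beacon $b_i$, fix the shortest path tree $T_i$ rooted at $b_i$; by Assumption~\ref{asm:stronger_unique_shortest_path} this tree is unique. Recall the standard Euler tour fact: for two vertices $s,t$, the two sign comparisons $sign(\pi^1_{i,0}(s)-\pi^1_{i,0}(t))$ and $sign(\pi^1_{i,1}(s)-\pi^1_{i,1}(t))$ agree (giving $|\cdot|=2$) exactly when neither of $s,t$ is an ancestor of the other in $T_i$, and disagree (giving $|\cdot|=0$) exactly when one is an ancestor of the other. Next observe that $|\pi^0_i(s)-\pi^0_i(t)| = |dist(s,b_i)-dist(t,b_i)| = dist(s,t)$ holds for a given $i$ if and only if $s$ and $t$ lie on a common root-to-leaf path in $T_i$, i.e.\ one is an ancestor of the other in $T_i$ — here we use the unique shortest path assumption so that $dist(s,b_i)=dist(t,b_i)\pm dist(s,t)$ forces the $b_i$–$s$–$t$ configuration to be colinear along a tree path. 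Combining these two equivalences: the per-beacon term equals $dist(s,t)+0$ when $|\pi^0_i(s)-\pi^0_i(t)|=dist(s,t)$, and equals $|\pi^0_i(s)-\pi^0_i(t)|+2$ otherwise. Since $|\pi^0_i(s)-\pi^0_i(t)|\le dist(s,t)$ always (triangle inequality), the first type of term equals $dist(s,t)$ and dominates; a term of the second type is at most $dist(s,t)-1+2 = dist(s,t)+1$ only if $|\pi^0_i(s)-\pi^0_i(t)|=dist(s,t)-1$, but under Assumption~\ref{asm:stronger_unique_shortest_path} the gap $dist(s,t)-|\pi^0_i(s)-\pi^0_i(t)|$ is either $0$ or at least $c>2$, so a second-type term is at most $dist(s,t)-c+2<dist(s,t)$. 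Hence the maximum over $i$ equals $dist(s,t)$ if some beacon gives a first-type term (equivalently $\|\pi^0(s)-\pi^0(t)\|_\infty = dist(s,t)$), and equals $\|\pi^0(s)-\pi^0(t)\|_\infty+2$ otherwise. This is exactly~\eqref{line:goal_pi}.

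\emph{Step 2: Consistency.} With the compact form in hand, I need to check, for every edge $(u,v)$ and every target $t$, that $w(u,v)+h(v,t)-h(u,t)\ge 0$ and $h(t,t)=0$. The second condition is immediate since $\|\pi^0(t)-\pi^0(t)\|_\infty=0=dist(t,t)$. For the first, split into cases according to whether $h(u,t)$ and $h(v,t)$ are each of the ``tight'' ($=dist(\cdot,t)$) or ``loose'' ($=\|\pi^0\|_\infty+2$) type. If $h(u,t)$ is loose, then $h(u,t)\le dist(u,t)-c+2$; since $h(v,t)\ge \|\pi^0(v)-\pi^0(t)\|_\infty \ge dist(v,t)$ is false in general — rather $h(v,t)\ge \|\pi^0(v)-\pi^0(t)\|_\infty \ge dist(v,t)-w(u,v)$ by the triangle inequality applied at the beacon achieving $\|\pi^0(u)-\pi^0(t)\|_\infty$ — one gets $w(u,v)+h(v,t)-h(u,t) \ge w(u,v) + (dist(u,t)-w(u,v)-2) - (dist(u,t)-c+2) = c-4 \ge 0$ for $c$ large enough; more carefully I would track that the beacon-based $\pi^0$ part already satisfies the consistency inequality up to the $+2$ corrections and argue the $+2$ terms cannot both hurt. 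If $h(u,t)$ is tight, $h(u,t)=dist(u,t)\le w(u,v)+dist(v,t)$ and $h(v,t)\le dist(v,t)+2$ is not quite enough, so here I would instead use that when $h(u,t)=dist(u,t)$ the relevant beacon places $u$ on the shortest $u$–$t$ tree path, forcing $h(v,t)$'s value and giving the bound directly.

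\emph{Main obstacle.} The delicate part is Step 2: handling the $+2$ additive corrections in the consistency check, since the ``naive'' beacon heuristic $\|\pi^0(s)-\pi^0(t)\|_\infty$ is only barely consistent (it can be tight along shortest paths), and adding $2$ in the loose case while keeping $dist$ in the tight case must be shown not to break the edge inequality. The key leverage is Assumption~\ref{asm:stronger_unique_shortest_path}: the constant-size separation $c>2$ between shortest and second-shortest paths means the loose value is strictly below $dist$ by at least $c-2$, which absorbs the $+2$. The overhead bound ($n^{1-\alpha}$ on average with $|B|=n^\alpha$) then follows exactly as in the proof of Theorem~\ref{thm:upper_bound_informal}, because $h(s,t)$ now differs from the pure beacon heuristic only in cases that do not create new ``must-scan'' vertices — in fact it can only increase estimates, hence only decrease the scanned set — so I would invoke that argument essentially verbatim, with the tie-freeness from the $+2$ ensuring Assumption~\ref{asm:break-tie} is not needed.
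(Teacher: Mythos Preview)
Your Step~1 is essentially the paper's argument and is correct: the Euler-tour term is $0$ iff one of $s,t$ is an ancestor of the other in $T_i$, and under the uniqueness assumption this holds iff $|dist(b_i,s)-dist(b_i,t)|=dist(s,t)$; combining with the gap argument gives the compact form~\eqref{line:goal_pi}.

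Your Step~2, however, has a genuine gap. The direct case analysis you sketch does not close. In the ``$h(u,t)$ loose'' case your displayed inequality $w(u,v)+h(v,t)-h(u,t)\ge c-4$ relies on the lower bound $h(v,t)\ge dist(u,t)-w(u,v)-2$, but that bound is not justified: the triangle inequality at the beacon achieving $\|\pi^0(u)-\pi^0(t)\|_\infty$ only gives $\|\pi^0(v)-\pi^0(t)\|_\infty\ge \|\pi^0(u)-\pi^0(t)\|_\infty-w(u,v)=h(u,t)-2-w(u,v)$, which after substitution yields $w(u,v)+h(v,t)-h(u,t)\ge -2$, not $\ge 0$. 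In the ``$h(u,t)$ tight'' case you correctly note that $h(v,t)\le dist(v,t)$ alone is insufficient, and your proposed fix (``the relevant beacon places $u$ on the shortest $u$--$t$ tree path, forcing $h(v,t)$'s value'') is too vague to be an argument.

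The clean observation you are missing---and you almost write it yourself when you say ``the $+2$ terms cannot both hurt''---is this. Let $h'(x,y)=\|\pi^0(x)-\pi^0(y)\|_\infty$; this is the plain beacon heuristic, hence already consistent. By the compact form, $h-h'\in\{0,2\}$, so the only case not immediately inherited from consistency of $h'$ is $h(u,t)=h'(u,t)+2$ and $h(v,t)=h'(v,t)$. But by~\eqref{line:goal_pi}, $h(v,t)=h'(v,t)$ forces $h(v,t)=dist(v,t)$; then $w(u,v)+h(v,t)=w(u,v)+dist(v,t)\ge dist(u,t)\ge h(u,t)$, the last inequality being admissibility (which you already established in Step~1). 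This replaces your entire case split and needs only $c\ge 3$, not ``$c$ large enough''.

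Finally, the overhead discussion in your last paragraph belongs to Theorem~\ref{thm:upper_bound_not_norm_informal}, not to this lemma.
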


\begin{proof}
We prove the equation in line~\ref{line:goal_pi} first.  Due to the unique shortest path assumption, we can show that $|dist(b_i,s)-dist(b_i,t)|=dist(s,t)$ if and only if $s$ is an ancestor or descendant of $t$ on the shortest path tree rooted at $b_i$.  The ``if'' part is immediate, since one of the vertices $s$ and $t$ lies on the shortest path between the root and the other vertex. For the ``only if'' part, let $u$ be the lowest common ancestor of $s$ and $t$ on the shortest path tree rooted at $b_i$ and assume w.l.o.g. that $dist(u,s)>dist(u,t)$. The shortest paths from $s$ to $u$ and from $t$ to $u$ consist of the tree edges. Because $|dist(b_i,s)-dist(b_i,t)|=dist(s,t)$, by canceling the common sub-path, we get that $dist(u,s)-dist(u,t)=dist(s,t)$. Therefore, concatenating the shortest paths from $s$ to $t$ and from $t$ to $u$ produces a second shortest path from $s$ to $u$, thus violating our unique shortest path Assumption~\ref{asm:stronger_unique_shortest_path}. Thus, the second expression in line~\ref{line:pi_design} is equal to $0$ when $|dist(b_i,s)-dist(b_i,t)|=dist(s,t)$ and equal to $2$ 
otherwise. Therefore Equation~\ref{line:goal_pi} follows.

To prove that the new heuristic is consistent, we first prove that $h(s,t)\le dist(s,t)$, i.e., $h$ is admissible. By Equation~\ref{line:goal_pi}, it suffices to show that if $|dist(b_i,s)-dist(b_i,t)|<dist(s,t)$ then $|dist(b_i,s)-dist(b_i,t)|\le dist(s,t)-2$.
We prove this by contradiction. If there exists $b_i$ such that $dist(s,t)-2<dist(b_i,s)-dist(b_i,t)<dist(s,t)$, then we can concatenate paths $P(b_i,t)$ and $P(t,s)$ to get a path from $b_i$ to $s$ of length smaller than $dist(b_i,s)+2$, contradicting Assumption~\ref{asm:stronger_unique_shortest_path} that the second shortest path should be longer than the shortest path by at least $3$. 

Now, we fix any two vertices $u,v$ and recall the definition of consistency: $h(u,t)\le w(u,v)+h(v,t)$. By Equation~\ref{line:goal_pi}, we know that $h(u,v)-\|\pi^0(u)-\pi^0(v)\|_{\infty}$ equals to either $0$ or $2$, so there are 4 cases for checking the inequality.
Consider another heuristic $h'(u,v)=\|\pi^0(u)-\pi^0(v)\|_{\infty}$. It is a standard beacon heuristic and thus consistent. Therefore we only need to consider the worst case where $h(u,t)=h'(u,v)+2$ and $h(v,t)=h'(v,t)$. Note that $h(v,t)=h'(v,t)$ means that $h(v,t)=dist(v,t)$, so the RHS becomes $w(u,v)+dist(v,t)$ which is greater than $dist(u,t)$ and therefore  greater than $h(u,t)$ by admissibility shown above.
\end{proof}

\begin{proof}[Proof of Theorem~\ref{thm:upper_bound_not_norm_informal}]
The analysis extends on the proof of Theorem~\ref{thm:upper_bound_informal} by using random beacon-based embedding which is $\pi^0$ in Definition~\ref{def:beacon_embedding_tie_breaker}.

Considering a pair of query $dist(s,t)$, for those vertices $u$ satisfying $dist(s,u)+\|\pi^0(u)-\pi^0(t)\|_{\infty}<dist(s,t)$, they are inevitably scanned, and by invoking  Theorem~\ref{thm:upper_bound_informal} with $O(n^{\alpha})$ dimension, the average number of such vertices $u$ is upper bounded by $O(n^{1-\alpha})$. Also, for those vertices $u$ satisfying $dist(s,u)+\|\pi(u)-\pi(t)\|_{\infty}>dist(s,t)$, they will not bother our A* because $h(s,v)\ge \|\pi^0(s)-\pi^0(t)\|_{\infty}$ is a strictly tighter distance estimation than $\pi^0$. Now we only care about those $u$ satisfying $dist(s,u)+\|\pi(u)-\pi(t)\|_{\infty}=dist(s,t)$ and there are two cases: If $\|\pi(u)-\pi(t)\|_{\infty}=dist(u,t)$, we have $dist(s,u)+dist(u,t)=dist(s,t)$, so $u$ lies on $P(s,t)$ and should be scanned by A*. Otherwise $\|\pi(u)-\pi(t)\|_{\infty}<dist(u,t)$ and $h(u,t)=\|\pi(u)-\pi(t)\|_{\infty}+2$, breaking the tie, so A* will not scan such a vertex $u$. Therefore, all previous ties impose no extra scanning complexity for A*.
\end{proof}

\bibliographystyle{plainnat}
\bibliography{ref}

\newpage
\appendix

\section{Lower bounds for grid graphs}\label{sec:grid} \label{sec:linftylb_grid_proof}

The main technical difficulty for the grid version lower bounds is that we are no longer allowed to construct a clique as in the proofs of the general theorems. Hence, the independence property of each shortest path distance is lost. Fortunately, with a careful choice of parameters, we can restrict the shortest path to follow a certain trajectory, and then we can use similar ideas to the ones for the general case lower bounds.

\subsection{Stronger lower bound for $l_{\infty}$ norm on a grid}
In this section we modify our lower bound instance so that is based on a 2D grid. This makes it more similar to the instances in the path planning applications discussed in the introduction. Instead of drawing the instance in the form of vertices connected by weighted edges, it will be more natural to associate vertices with grid nodes (cells), and put weights on vertices. The edges from each vertex/cell lead to up to four of its neighbors in the grid, and are unweighted. 

\begin{figure}[ht!]
\centering
\begin{tikzpicture}
\begin{scope}[xscale=.6, yscale=.6]
\newGrid{thick, black!30}
\end{scope}
\end{tikzpicture}
\caption{Grid lower bound instance. 
} 
\label{fig:instance_linf_grid}
\end{figure}

Our instance (call it $G$) is depicted in Figure~\ref{fig:instance_linf_grid}. It consists of a concatenation of 3 rectangles. The upper-left rectangle $R$ is of dimension $m\times m$ and the two rectangles attached to its right ($R_r$) and lower ($R_l$) sides are of dimensions $m\times k$ and $k \times m$,  where $m$ and $k$ are defined later. %
For the simplicity of notation, we label the lower right node of $R$ with coordinates $(0,0)$ and all other nodes are labeled with coordinates $(x,y)$, where the x-axis is horizontal and pointing to the {\em left}, while  %
the y-axis is vertical and pointing to the top. We denote the  node with coordinates $(x,y)$ by $v_{x,y}$, and denote its weight by $w_{x,y}$; the values of the weights will be defined later. Finally, the edges corresponding to the red lines in the picture (interpreted as obstacles) are removed. 
We use $n=|V|$ to denote the size of the graph. This instance has size $|V|=n=m^2+2mk$. We set the weights $w_{i,j}$ as follows:
$$w_{x,y}=\left\{
\begin{array}{ll}
(n-x)\cdot n^3+ u_{x,y} & x,y\ge 0\\
\epsilon_w & x \textup{ or } y<0
\end{array}\right.$$
where $u_{x,y}$ are i.i.d. random variables chosen from the uniform
distribution over the interval $[0,1]$ 
and $\epsilon_w$ is a small number to be specified later.

Recall we aim to prove Theorem~\ref{thm:linftylb_grid}.
\linftylbGrid*

\begin{proposition}\label{lm:existence_epsilon_grid}
For weights $\{w_{x,y}\}$ generated from the description above, we can find a constant $\epsilon$ where no approximated-tie exists in $\{w_{x,y}\}$ with probability at least 0.99.
\end{proposition}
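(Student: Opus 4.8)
\textbf{Proof proposal for Proposition~\ref{lm:existence_epsilon_grid}.}

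The plan is to mimic the argument behind Proposition~\ref{lm:existence_epsilon_nongrid}, adapting it to the weights of the grid instance. Recall that an approximated-tie among a set of values $\{w_i\}$ is a nontrivial integer combination $\sum_i c_i w_i$ with each $c_i \in \{-4,\dots,4\}$ whose absolute value is at most $\epsilon$. Here the relevant values are the vertex weights $\{w_{x,y}\}$ with $x,y \ge 0$; the weights $\epsilon_w$ on the ``obstacle-adjacent'' cells are a fixed tiny constant we will choose even smaller afterwards, so for the purpose of this proposition it suffices to rule out approximated-ties among the randomized weights $w_{x,y} = (n-x)\cdot n^3 + u_{x,y}$, $x,y\ge 0$. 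First I would observe that each such weight is a continuous random variable (it has the fixed shift $(n-x)n^3$ plus the uniform-$[0,1]$ part $u_{x,y}$), and the $u_{x,y}$ are mutually independent.

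The key step is the following: fix any admissible coefficient vector $\{c_{x,y}\}$ (finitely many choices, at most $9^{m^2}$ of them, which is a finite number depending only on $n$) with at least one nonzero entry. Pick a coordinate $(x_0,y_0)$ with $c_{x_0,y_0}\neq 0$. Condition on all the other $u_{x,y}$; then $S := \sum_{x,y} c_{x,y} w_{x,y}$ is an affine function of $u_{x_0,y_0}$ with nonzero slope $c_{x_0,y_0}$, hence $S$ has a density bounded by $1/|c_{x_0,y_0}| \le 1$ on the real line, so $\Pr[|S| \le \delta] \le 2\delta$ for every $\delta>0$. Taking a union bound over all admissible coefficient vectors, the probability that there exists an approximated-tie with threshold $\delta$ is at most $2\delta \cdot 9^{m^2}$. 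Choosing $\delta = \epsilon$ with $\epsilon$ small enough that $2\epsilon \cdot 9^{m^2} \le 0.01$ — for instance $\epsilon = 10^{-2}\cdot 9^{-m^2}/2$ — gives that with probability at least $0.99$ no approximated-tie exists among the $\{w_{x,y}\}$. This $\epsilon$ is a constant once the graph (hence $m$) is fixed, matching the statement.

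I do not expect a genuine obstacle here; the only points requiring a little care are (i) making sure the coefficient set is finite so that the union bound is legitimate — it is, since we only sum over the $O(n)$ many randomized weights and coefficients lie in a bounded set, and in fact the $\epsilon_w$-weights can be excluded from consideration entirely since we will later fix $\epsilon_w$ depending on $\epsilon$; and (ii) checking that conditioning on the other coordinates indeed leaves a nondegenerate affine dependence on $u_{x_0,y_0}$, which is immediate because $c_{x_0,y_0}\neq 0$ and distinct cells carry independent $u$'s. One subtlety worth noting for the later proof (though not needed for this proposition per se) is that the large deterministic offsets $(n-x)n^3$ force the shortest path between the two designated endpoints to traverse cells with small $x$-coordinate, i.e.\ to ``hug'' a prescribed trajectory; the randomized low-order term $u_{x,y}$ then plays the same role that the random clique-edge weights played in the non-grid construction, letting us reuse the crucial-coordinate / cycle argument of Lemma~\ref{lm:half}. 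But establishing that is the business of the subsequent lemmas, not of this proposition, whose proof is the short probabilistic argument sketched above.
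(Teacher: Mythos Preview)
Your argument is correct and is exactly the natural one: fix a nonzero coefficient vector, condition on all but one of the independent uniform variables, bound the resulting density, and union-bound over the finitely many coefficient vectors. The paper in fact does not supply a proof of this proposition (nor of its non-grid analogue, Proposition~\ref{lm:existence_epsilon_nongrid}); both are stated and then used without further justification, presumably because the argument is routine. So there is nothing to compare against beyond saying that your write-up fills in what the authors left implicit.

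One small remark: you were right to restrict attention to the randomized weights $w_{x,y}$ with $x,y\ge 0$. If one literally included the $\epsilon_w$-cells in the set $\{w_{x,y}\}$, then trivial combinations like $c\cdot\epsilon_w$ would be approximated-ties (since $\epsilon_w=\epsilon/(8mk)$ is fixed only \emph{after} $\epsilon$). The paper's later use of the proposition (Lemma~\ref{lm:auxiliary_graph} and its proof) only ever forms $\pm 1$ combinations of the randomized weights along shortest paths that stay inside the $x,y\ge 0$ square, so your reading is the intended one. It might be worth stating this restriction explicitly in a polished version.
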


Now, we set the parameter $\epsilon_w=\frac{\epsilon}{8mk}$ where $\epsilon$ is defined by Proposition~\ref{lm:existence_epsilon_grid}.

\begin{lemma}\label{lm:auxiliary_graph}
Consider an auxiliary graph with vertices $V'=\{a_i\}_{i=0}^{m-1}\cup\{b_i\}_{i=0}^{m-1}$ and edges $(a_i,b_j)$ with weights equal to  the distance $dist(v_{i,0},v_{0,j})$ in the original graph. Then, for any simple cycle $C$ in this auxiliary graph, any $\pm 1$ weighted sum 
of the weights of the nodes in the cycle has absolute value larger than $\epsilon$.
\end{lemma}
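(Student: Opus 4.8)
The plan is to reduce the statement to Proposition~\ref{lm:existence_epsilon_grid}: on the event of probability at least $0.99$ on which that proposition holds, fix the constant $\epsilon$ for which the heavy weights $\{w_{x,y}\}_{x,y\ge 0}$ admit no approximated-tie, and argue that every $\pm1$ sum over a simple cycle in the auxiliary graph reduces to exactly such a (forbidden) approximated-tie. Throughout, for indices $i,j\in\{0,\dots,m-1\}$ write $T_{i,j}$ for the set of cells on a shortest path from $v_{i,0}$ to $v_{0,j}$.

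The first and main step is to pin down these shortest paths. I would show that, because each heavy cell $v_{x,y}$ carries the dominant term $(n-x)\cdot n^{3}$ while the cells of $R_{l}$ and $R_{r}$ are walled off from useful horizontal/vertical motion by the removed edges, a shortest path from $v_{i,0}$ to $v_{0,j}$ can never profit from a detour into the light regions, and among paths inside $R$ it must minimize $\sum_{v}(n-x_{v})$ over the cells it visits. A short optimization argument (every column $0,\dots,i$ must be entered at least once, and the $j$ extra vertical steps are cheapest to take in the rightmost column $x=i$) then identifies the unique shortest path, giving
\[
  T_{i,j}=\{v_{i,y}:0\le y\le j\}\cup\{v_{x,j}:0\le x\le i\},
  \qquad dist(v_{i,0},v_{0,j})=\sum_{v\in T_{i,j}}w_{v}.
\]
In particular $T_{i,j}$ consists only of heavy cells, and a cell $v_{x,y}$ lies in $T_{i,j}$ precisely when ($x=i$ and $y\le j$) or ($y=j$ and $x\le i$). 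I expect this structural step to be the crux; once it is in place the rest is bookkeeping.

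Finally, fix a simple cycle $C$ and signs $c_{e}\in\{-1,+1\}$ for $e\in C$. Exchanging the order of summation and using the previous display,
\[
  \sum_{e\in C}c_{e}\,dist(e)=\sum_{v}\gamma_{v}\,w_{v},
  \qquad \gamma_{v}:=\sum_{e\in C:\,v\in T_{e}}c_{e}.
\]
Since $C$ is a simple cycle in the bipartite auxiliary graph, each index is the $a$-endpoint of at most two of its edges and the $b$-endpoint of at most two of its edges; combined with the membership characterization, each cell $v_{x,y}$ lies on at most $2+2=4$ of the cycle's trajectories, so $\gamma_{v}\in\{-4,\dots,4\}$ for every $v$. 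To see that some coefficient is non-zero, let $e_{0}=(a_{i_{0}},b_{j_{0}})\in C$ be the edge with $i_{0}$ the largest $a$-index occurring in $C$ and $j_{0}$ the larger of the two $b$-indices incident to $a_{i_{0}}$ in $C$; then $i_{0}\ge1$ and $j_{0}\ge1$, so $v_{i_{0},j_{0}}$ is an interior cell of $T_{e_{0}}$, and for every other $e=(a_{i'},b_{j'})\in C$ the membership condition fails ($i'=i_{0}$ forces $j'<j_{0}$, $j'=j_{0}$ forces $i'<i_{0}$, and otherwise both clauses are vacuous). Hence $\gamma_{v_{i_{0},j_{0}}}=c_{e_{0}}=\pm1$, so $\{\gamma_{v}\}$ is a coefficient vector with entries in $\{-4,\dots,4\}$ and a non-zero component, i.e.\ exactly the kind forbidden by Definition~\ref{def:approx_tie}. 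Proposition~\ref{lm:existence_epsilon_grid} therefore yields $\bigl|\sum_{e\in C}c_{e}\,dist(e)\bigr|=\bigl|\sum_{v}\gamma_{v}w_{v}\bigr|>\epsilon$, which is the claim.
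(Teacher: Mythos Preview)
Your proposal is correct and follows essentially the same route as the paper: identify the L-shaped shortest path $T_{i,j}$, expand the $\pm1$ cycle sum as $\sum_v \gamma_v w_v$, bound $|\gamma_v|\le 4$ via the membership characterization plus degree-$2$ in a simple cycle, and exhibit a nonzero coefficient by choosing the corner cell $v_{i_0,j_0}$ at the extremal edge $(a_{i_0},b_{j_0})$, then invoke Proposition~\ref{lm:existence_epsilon_grid}. One minor slip: since the $x$-axis points to the \emph{left}, column $x=i$ is the \emph{leftmost} (not rightmost) column traversed---but your conclusion that the vertical steps are cheapest there is correct, so the argument stands.
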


\begin{proof}
The proof is similar to the proof of Lemma~\ref{lm:half}.
First, from the choice of the weights $w_{x,y}$, we  observe that the shortest path from $v_{i,0}$ to $v_{0,j}$ must first go straight up to $v_{i,j}$ and then go straight right to $v_{0,j}$.
Let $\sum c_{i,j}\cdot w_{i,j}$ be the $\pm 1$ weighted sum of the weights in the cycle. To apply Definition~\ref{def:approx_tie} and Proposition~\ref{lm:existence_epsilon_grid}, we need to show that $|c_{i,j}|\le 4$ and $\sum |c_{i,j}|>0$. For the first condition, if $w_{i,j}$ appears on the shortest path between $v_{i',0}$ to $v_{0,j'}$ and $(v_{i',0},v_{0,j'})$ is an edge in $C$, then either $i=i'$ or $j=j'$. There are at most 4 such edges in $C$, so $|c_{i,j}|\le 4$. For the second condition, we only need to show that there exists at least one $|c_{i,j}|>0$. We choose $v_{i,0}$ with the maximal $i$ appearing in $C$ and then $v_{0,j}$ which is adjacent to the $v_{i,0}$ on some occurrence of $v_{i,0}$ and has the maximal index $j$. For this edge $(a_i,b_j)$ in the cycle,  $v_{i,j}$ is passed through only once and has $c_{i,j} = \pm 1$.
\end{proof}

\begin{lemma}
For any pair of vertices $(v_{i,0},v_{0,j})~(i,j\in [0,m-1])$, if their embedded distance has error larger than $\frac{\epsilon}{2m}$, then for any query $\left(v_{i,-p},v_{-q,j}\right), (p,q\in [1,k])$ the algorithm will scan $\Omega(k)$ vertices.
\end{lemma}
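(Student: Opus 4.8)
The plan is to replay the proof of Lemma~\ref{lm:alwayscan} almost verbatim, with the clique vertex $a_i$ replaced by the boundary cell $v_{i,0}$ of the central rectangle $R$ and the $k$ leaves $\{b^p_i\}$ replaced by the column $v_{i,-1},\dots,v_{i,-k}$ of $R_l$ hanging below $v_{i,0}$; symmetrically $v_{0,j}$ and the row $v_{-1,j},\dots,v_{-k,j}$ of $R_r$ play the roles of $a_j$ and its leaves. First I would record the two structural facts guaranteed by the obstacle layout of Figure~\ref{fig:instance_linf_grid} and the weight choice: (a) each such column of $R_l$ (resp.\ row of $R_r$) is a dead-end corridor whose only neighbour outside $R_l$ (resp.\ $R_r$) is $v_{i,0}$ (resp.\ $v_{0,j}$), so that every $v_{i,-p}$-to-$v_{-q,j}$ path passes through both $v_{i,0}$ and $v_{0,j}$ and therefore $dist(v_{i,-p},v_{-q,j})\ge dist(v_{i,0},v_{0,j})$; and (b) any sub-path staying inside $R_l$ or $R_r$ uses at most $k$ cells, each of weight $\epsilon_w=\frac{\epsilon}{8mk}$, so it has length at most $\frac{\epsilon}{8m}$ --- in particular $dist(v_{i,-p},v_{i,-p'})$, $dist(v_{i,-p'},v_{i,0})$ and $dist(v_{0,j},v_{-q,j})$ are each at most $\frac{\epsilon}{8m}$.

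\textbf{The scan-condition inequality.} Since $h$ is a consistent $l_\infty$ norm heuristic it is both admissible ($h\le dist$ everywhere) and sub-additive. Admissibility turns the hypothesis ``the embedded distance of $(v_{i,0},v_{0,j})$ has error larger than $\frac{\epsilon}{2m}$'' into $h(v_{i,0},v_{0,j})<dist(v_{i,0},v_{0,j})-\frac{\epsilon}{2m}$. Fix a query $(s,t)=(v_{i,-p},v_{-q,j})$ and any $p'\in[1,k]$; I would verify the scan condition of Fact~\ref{lm:scan_condition} for $u=v_{i,-p'}$. Routing $h(u,t)$ through $v_{i,0}$ and $v_{0,j}$ by sub-additivity, then bounding the three cheap segments by admissibility and the middle one by the distortion hypothesis,
\begin{align*}
dist(s,u)+h(u,t) &\le dist(v_{i,-p},v_{i,-p'})+h(v_{i,-p'},v_{i,0})+h(v_{i,0},v_{0,j})+h(v_{0,j},v_{-q,j})\\
&< \tfrac{3\epsilon}{8m}+\Bigl(dist(v_{i,0},v_{0,j})-\tfrac{\epsilon}{2m}\Bigr)\ \le\ dist(v_{i,0},v_{0,j})\ \le\ dist(s,t),
\end{align*}
where the last two steps use $\frac{3\epsilon}{8m}<\frac{\epsilon}{2m}$ and structural fact (a). Hence each $v_{i,-p'}$, $p'\in[1,k]$, \emph{must} be scanned under any tie-breaking rule, so A* scans $\Omega(k)$ vertices, as in Lemma~\ref{lm:alwayscan}.

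\textbf{Where the work is.} The one genuinely new ingredient beyond the general-graph argument is structural fact (a): checking that the removed edges of Figure~\ref{fig:instance_linf_grid} really do make each column of $R_l$ and each row of $R_r$ a corridor attached to $R$ at a single cell, and that the weight gradient $w_{x,y}=(n-x)n^3+u_{x,y}$ inside $R$ forces a shortest $v_{i,0}$-to-$v_{0,j}$ path to stay inside $R$ (going straight up, then straight right), so $dist(v_{i,0},v_{0,j})$ is not contaminated by cheap detours. This is exactly the trajectory computation already done in the proof of Lemma~\ref{lm:auxiliary_graph}, which I would invoke rather than redo. Once (a) and the cheap-segment bound (b) are in hand, the rest is the same three-term chain as in Lemma~\ref{lm:alwayscan} with $w_0$ replaced by $k\epsilon_w$; the only point needing care is fixing the node-weight convention at the hub cells $v_{i,0},v_{0,j}$ so that their heavy weights are not charged to the cheap segments.
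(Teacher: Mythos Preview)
Your proposal is correct and follows essentially the same approach as the paper: split $h(v_{i,-p'},v_{-q,j})$ via sub-additivity through $v_{i,0}$ and $v_{0,j}$, bound the three cheap segments by $k\epsilon_w=\frac{\epsilon}{8m}$ each, apply the distortion hypothesis on the middle segment, and conclude with the obstacle/corridor observation that $dist(v_{i,-p},v_{-q,j})\ge dist(v_{i,0},v_{0,j})$. The paper additionally remarks that the row to the right of $v_{0,j}$ is also scanned (but only verifies the column below $v_{i,0}$, as you do), and your explicit flagging of the node-weight convention at the hub cells is a useful bit of care the paper leaves implicit.
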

\begin{proof}
In the setting of Theorem~\ref{thm:linftylb_grid}, we denote $h(s,t)=\|\pi(s)-\pi(t)\|_{\infty}$. If $\|\pi(v_{i,0})-\pi(v_{0,j})\|\le dist(v_{i,0},v_{0,j})-\frac{\epsilon}{2m}$, we can check that for query $dist(v_{i,-p},v_{-q,j})$ with $(p,q\in [1,k])$, all vertices that lie below $v_{i,0}$ or to the right of $v_{0,j}$ will be scanned by A*. We verify this for one such vertex $v_{i,-p'}(p'>0)$ here: %
\begin{align*}
&\quad dist(v_{i,-p},v_{i,-p'})+\|\pi(v_{i,-p'})-\pi(v_{-q,j})\|\\
&\le dist(v_{i,-p},v_{i,-p'})+\|\pi(v_{i,-p'})-\pi(v_{i,0})\|+\|\pi(v_{i,0})-\pi(v_{0,j})\|+\|\pi(v_{0,j})-\pi(v_{-q,j})\|\\
&\le k\epsilon_w+k\epsilon_w+dist(v_{i,0},v_{0,j})-\frac{\epsilon}{2m}+k\epsilon_w\\
&= dist(v_{i,0},v_{0,j})-\frac{\epsilon}{8m}\\
&< dist(v_{i,0},v_{0,j})<dist(v_{i,-p},v_{-q,j})
\end{align*}
The last inequality follows because we have obstacles (depicted in red) between different rows and columns in $R_r$ and $R_l$, so the shortest path from $v_{i,-p}$ to any $v_{j,0}$ must pass $v_{i,0}$. A similar statement holds for $v_{-q,j}$.
\end{proof}

\begin{lemma}
For the consistent $l_{\infty}$ norm heuristic considered by Theorem~\ref{thm:linftylb_grid}, as long as $d\le o(m)$, at least half of the pairs from $(s,t)\in \{v_{i,0}\}_{i=0}^{m-1}\times \{v_{0,j}\}_{j=0}^{m-1}$ satisfies $h(s,t)<dist(s,t)-\frac{\epsilon}{2m}$.
\end{lemma}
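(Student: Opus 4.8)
The plan is to transcribe the proof of Lemma~\ref{lm:half}, with the clique auxiliary graph used there replaced by the bipartite auxiliary graph of Lemma~\ref{lm:auxiliary_graph}. Call a pair $(v_{i,0},v_{0,j})$ with $i,j\in[0,m-1]$ \emph{good} if $h(v_{i,0},v_{0,j})=\|\pi(v_{i,0})-\pi(v_{0,j})\|_\infty\ge dist(v_{i,0},v_{0,j})-\tfrac{\epsilon}{2m}$ and \emph{bad} otherwise. Since there are $m^2$ such pairs, it suffices to show that at most $o(m^2)$ of them are good, which forces at least $m^2/2$ to be bad once $m$ is large enough.

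The first step is to use admissibility of the consistent heuristic (Definition~\ref{def:admissible_embedding}): for a good pair we have $dist(v_{i,0},v_{0,j})-\tfrac{\epsilon}{2m}\le \|\pi(v_{i,0})-\pi(v_{0,j})\|_\infty\le dist(v_{i,0},v_{0,j})$, and as the $\ell_\infty$ norm is attained on a single coordinate there is some $c\in[d]$ with $|\pi(v_{i,0})_c-\pi(v_{0,j})_c|$ lying in the interval $[\,dist(v_{i,0},v_{0,j})-\tfrac{\epsilon}{2m},\ dist(v_{i,0},v_{0,j})\,]$; fix one such ``crucial coordinate'' per good pair. The second step is, for each fixed $c\in[d]$, to form the bipartite graph on $\{a_i\}_{i=0}^{m-1}\cup\{b_j\}_{j=0}^{m-1}$ with an edge $(a_i,b_j)$ for every good pair whose crucial coordinate is $c$, and to argue this graph is a forest, hence has at most $2m-1$ edges. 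If not, it contains a simple (necessarily even) cycle of length $t\le 2m$; reading off the scalars $\pi(\cdot)_c$ around the cycle, the signed consecutive differences telescope to $0$, while for each cycle edge $e=(a_i,b_j)$ the quantity $|\pi(v_{i,0})_c-\pi(v_{0,j})_c|$ is within $\tfrac{\epsilon}{2m}$ of its edge weight $dist(v_{i,0},v_{0,j})$; combining, some $\pm1$ weighted sum of the cycle's edge weights has absolute value at most $t\cdot\tfrac{\epsilon}{2m}\le\epsilon$, contradicting Lemma~\ref{lm:auxiliary_graph}. Summing over the $d=o(m)$ coordinates then bounds the number of good pairs by $d\,(2m-1)=o(m^2)$, and the lemma follows.

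The step that needs the most care is the telescoping argument: one has to check that it genuinely yields a $\pm1$ combination of \emph{edge} weights in the sense of Lemma~\ref{lm:auxiliary_graph} (the coefficients are $\pm1$ because a simple cycle traverses each edge once, the combination is nontrivial because the cycle is nonempty, and there are no degenerate zero-length edge differences since every edge weight $dist(v_{i,0},v_{0,j})$ far exceeds $\tfrac{\epsilon}{2m}$), and that the accumulated rounding error --- a sum of $t$ terms each at most $\tfrac{\epsilon}{2m}$ --- stays below the threshold $\epsilon$; this is precisely why the cycle length must be controlled by the number $2m$ of auxiliary-graph vertices and why the slack in the statement is $\tfrac{\epsilon}{2m}$ rather than something coarser. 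Everything else is a verbatim adaptation of Lemma~\ref{lm:half} with $n$ replaced by $m$ and the clique replaced by the bipartite auxiliary structure of Lemma~\ref{lm:auxiliary_graph}.
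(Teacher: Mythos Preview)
Your proposal is correct and follows essentially the same argument as the paper's proof: define crucial coordinates, build the bipartite auxiliary graph per coordinate, and use Lemma~\ref{lm:auxiliary_graph} to rule out cycles so that each coordinate carries at most $2m-1$ good pairs. Your write-up is in fact more explicit than the paper's on two points it leaves implicit --- the role of admissibility in bounding the crucial-coordinate difference from above, and the telescoping step that converts a cycle into a $\pm1$ combination of auxiliary edge weights of absolute value at most $\epsilon$ --- but the strategy is identical.
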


\begin{proof}
By the definition of $l_{\infty}$ norm, if $\|\pi(u)-\pi(v)\|_{\infty}\ge dist(u,v)-\frac{\epsilon}{2m}$, then there must exist a coordinate $i\in[d]$ where $|\pi(u)_i-\pi(v)_i|\ge dist(u,v)-\frac{\epsilon}{2m}$. We call such coordinate a ``crucial coordinate'' for the pair of vertices $(u,v)$

We claim that for one coordinate $i$, at most $2m-1$ pairs of vertices uses coordinate $i$ as their crucial coordinate. The reason is that if we construct an auxiliary graph where we assign an edge to $(u,v)$ if $(u,v)$ uses the current coordinate as their ``crucial coordinate'', the auxiliary graphs have $2m$ vertices. If there are more than $2m$ edges, at least one simple cycle exists with length $l\le 2m$. Because $l\cdot \frac{\epsilon}{2m}\le \epsilon$, such a cycle will violate Lemma~\ref{lm:auxiliary_graph}.

Therefore if there are at most $o(m)$ coordinates, there can be at most $o(m^2)$ ``crucial coordinates'', which means that at least $\frac{m^2}{2}$ pairs are embedded with error more than $\frac{\epsilon}{2m}$.
\end{proof}

\begin{proof}[Proof of Theorem \ref{thm:linftylb_grid}]
We set $m=n^{\alpha}$ and $k=n^{1-\alpha}$ then $|V|=\Theta(n)$ when $\alpha\le 0.5$.

Now we calculate the overall average complexity lower bound. Each pair $(v_{i,0},v_{0,j})$ with $h(v_{i,0},v_{0,j})<dist(v_{i,0},v_{0,j})-\frac{\epsilon}{2m}$ will produce $k^2$ pairs of queries with query complexity $\Omega(k)$, so the average query complexity is lower bounded by the following:
$$\Omega\left(\frac{\frac{m^2}{2}\cdot k^2\cdot k}{n^2}\right)=\Omega\left(\frac{n^{2\alpha}\cdot n^{2(1-\alpha)}\cdot n^{1-\alpha}}{n^2}\right)=\Omega(n^{1-\alpha})$$
\end{proof}

\subsection{A lower bound for labeling heuristics on a grid}\label{sec:info_lb_grid}
In this section we prove Theorem~\ref{thm:information_lb}, which we restate here for the sake of convenience.

\thmInfoGrid*

\paragraph{The lower bound instance.} The hard instance we use in this section has the same structure as that in Figure~\ref{fig:instance_linf_grid}, but with  a different setting of parameters.  We set the weights  as follows:
\begin{align}\label{eq:weights_info_grid}
w_{x,y}=\left\{
\begin{array}{ll}
(n-x)\cdot n^4+\delta_{x,y}\cdot n-2k & x,y\ge 0\\
1 & x \textup{ or } y<0
\end{array}\right.
\end{align}
where $\delta_{x,y}\in [-n^2,n^2]$ to be specified later.
Additionally, we define $\Delta(i,j)$ to be the sum of $\delta_{i,j}$ along the shortest path from $v_{i,0}$ to $v_{0,j}$.

\begin{lemma}
For any choice of $\delta_{x,y}$ in Equation~\ref{eq:weights_info_grid}, the shortest path of any pair $(v_{i,0},v_{0,j})$  is first going up from $v_{i,0}$ to $v_{i,j}$, and then going right to $v_{j,0}$ (recall Figure~\ref{fig:instance_linf_grid}). 
\end{lemma}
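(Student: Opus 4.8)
The plan is to show that for any assignment of the $\delta_{x,y}$ values, the dominant term in every edge weight $w_{x,y}$ (for $x,y\ge 0$) is $(n-x)\cdot n^4$, which depends only on the $x$-coordinate, and that this term forces the shortest path to take the claimed ``up then right'' (i.e., ``L-shaped'') trajectory. First I would fix $i,j$ and observe that any path from $v_{i,0}$ to $v_{0,j}$ that stays inside the rectangle $R$ has the same number of vertices, namely $i+j+1$ (it is a monotone staircase path from column $i$ down to column $0$ and from row $0$ up to row $j$; the grid structure together with the removed obstacle edges forbids detours into $R_r$ or $R_l$, and going outside $R$ and back would only add vertices). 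So the combinatorial length (hop count) of every candidate shortest path is identical, and the comparison reduces to comparing the sums of vertex weights along different monotone staircase paths inside $R$.

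Next I would isolate the leading term. Write $w_{x,y} = (n-x)n^4 + (\delta_{x,y} n - 2k)$ for $x,y \ge 0$. Along a monotone staircase path from column $i$ to column $0$, the multiset of columns visited is determined: the path visits each column value $x \in \{0,1,\dots,i\}$ some number of times $n_x \ge 1$, with $\sum_x n_x = i+j+1$. The contribution of the leading terms is $n^4 \sum_x n_x (n-x)$. The key step is to check that this quantity is \emph{minimized} exactly when the path descends column $i$ all the way to row $0$ before moving left — i.e., when it spends as few steps as possible in the high-weight (small-$x$) columns. Concretely, the ``up then right'' path visits column $i$ exactly $j+1$ times and every column $x<i$ exactly once; any other monotone staircase path moves left at a higher row and hence visits some column $x<i$ more than once while visiting column $i$ fewer than $j+1$ times, strictly increasing $\sum_x n_x(n-x)$ (since $n-x > n-i$ for $x<i$). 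This gives a gap of at least $n^4$ between the optimal L-path and any competitor.

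Finally I would bound the lower-order terms to confirm they cannot overturn this $n^4$ gap: the total of the $|\delta_{x,y} n - 2k|$ contributions over any path of at most $n$ vertices is at most $n \cdot (n^2 \cdot n + 2k) = O(n^4)$ — here I need to be slightly careful and note that $\delta_{x,y} \in [-n^2, n^2]$ so $|\delta_{x,y} n| \le n^3$ and $2k \le 2n$, giving a per-vertex bound of at most $n^3 + 2n$ and a total of at most $n(n^3+2n) < n^4$ for $n$ large. Wait — this is exactly the borderline case, so the real care is in the constants: since the L-path beats any competitor by at least one full unit of $n^4$ in the leading term while the lower-order terms on \emph{both} paths are each at most $n^4 - \omega(1)$ in absolute value, and more precisely the \emph{difference} of the lower-order contributions between two paths of the same hop count is at most (number of differing vertices)$\cdot(n^3+2n) \le n(n^3+2n) < n^4$, the L-path is strictly shortest. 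The main obstacle is precisely this bookkeeping: verifying that the coefficients are chosen so that one unit of the $n^4$-scale leading term strictly dominates the worst-case accumulated $\delta$-perturbation over a full-length path, and handling the tie-breaking cleanly so that the claimed path is not just \emph{a} shortest path but the unique staircase shape achieving the minimum of the leading term (uniqueness of the actual shortest path among all paths will then follow, or be handled, once the $\delta$ values are later specialized). I would present the column-counting monotonicity argument carefully, as that is where the ``up then right'' shape is actually pinned down.
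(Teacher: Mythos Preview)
Your approach is the same as the paper's: the leading term $(n-x)n^4$ controls everything, and the perturbations $\delta_{x,y}n - 2k$ are too small to affect which staircase wins. The paper's proof is two sentences; yours fleshes out the column-counting. That said, two points need correction.

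First, the arithmetic in your final comparison is wrong: you write $n(n^3+2n) < n^4$, but $n(n^3+2n)=n^4+2n^2>n^4$. You even flag this as ``the borderline case'' and then repeat the same false inequality. The repair is to use the actual hop count of the path rather than the crude bound $n$: any monotone staircase from $v_{i,0}$ to $v_{0,j}$ has exactly $i+j+1\le 2m$ vertices, and since $m=n^{\alpha}$ with $\alpha\le 1/2$ in the grid instance, the total lower-order contribution along such a path is at most $2m(n^3+2n)=O(n^{3+\alpha})=o(n^4)$. With this bound, the $n^4$ gap in leading terms between the L-path and any other staircase is decisive.

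Second, you assert that ``every candidate shortest path'' is a monotone staircase without saying why. Non-monotone paths (backtracking inside $R$, or leaving $R$ and returning) visit strictly more than $i+j+1$ cells, and you need the additional observation that adding even one cell costs at least $(n-m)n^4 - (n^3+2k)=\Omega(n^5)$, which dwarfs any lower-order savings. The paper does make this step explicit: each single cell's leading term already exceeds the sum of all second terms, so the shortest path must first minimize the cell count, and only then do you compare staircases. Once you insert that sentence and fix the path-length bound, your argument is complete and matches the paper's.
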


\begin{proof}
Observe that for any possible $x$ and $\delta_{x,y}$ values,  the first term in the weight $w_{x,y}$ is larger than the summation of all grids' second terms (since $x< m$ and $\delta_{x,y}\in[-n^2,n^2]$).
Hence, any shortest path must minimize the number of visited cells. 
Furthermore, the first term $(n-x)\cdot n^4$ only depends on the $x$-axis and is smaller when $x$ is large, implying that the shortest path should first go up and then go right.
\end{proof}

\begin{lemma}\label{lm:reduction_bdelta}
For any vector $x\in [0,2^b-1]^{m^2}$, there exists a choice of $\{\delta_{i,j}\}$ satisfying $|\delta_{i,j}|\le n^2$ for any $i,j$ and $\Delta(i,j)=x_{i\cdot m+j}$.
\end{lemma}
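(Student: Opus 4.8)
Lemma~\ref{lm:reduction_bdelta} asks us to show that any vector $x\in[0,2^b-1]^{m^2}$ can be encoded into the free parameters $\{\delta_{i,j}\}$ so that the \emph{cumulative} sums $\Delta(i,j)$ along each shortest path recover the coordinates of $x$. Recall that the shortest path from $v_{i,0}$ to $v_{0,j}$ goes straight up from $v_{i,0}$ to $v_{i,j}$ and then straight right to $v_{0,j}$, so $\Delta(i,j)$ is a sum of $\delta$ values over cells $\{v_{i,y}: 0\le y\le j\}\cup\{v_{x,j}: 0\le x\le i\}$. The key observation is that this is a triangular/lower-dependency structure: $\Delta(i,j)$ depends only on $\delta$-values with indices ``below'' $(i,j)$ in a natural partial order, so we can solve for the $\delta$'s greedily.

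The plan is to set up the encoding recursively on the pair $(i,j)$. First I would fix a convenient ordering of the $m^2$ cells $(i,j)$ with $i,j\in[0,m-1]$ — say, in increasing order of $i+j$, breaking ties arbitrarily — and define $\delta_{i,j}$ one at a time. When processing $(i,j)$, the quantity $\Delta(i,j)-\delta_{i,j}$ is already determined (it is a sum of $\delta$-values on cells with strictly smaller ``L-shaped prefix'', all of which precede $(i,j)$ in the ordering, being handled by the path structure: the up-segment below $v_{i,j}$ and the right-segment below it consist of cells $v_{i,y}$ with $y<j$ and $v_{x,j}$ with $x<i$). So I set $\delta_{i,j} := x_{i\cdot m+j} - (\Delta(i,j)-\delta_{i,j})$, which forces $\Delta(i,j)=x_{i\cdot m+j}$ exactly. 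The only thing left to check is the size bound $|\delta_{i,j}|\le n^2$.

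For the size bound, I would argue by a crude triangle-inequality estimate: $|\delta_{i,j}| \le |x_{i\cdot m+j}| + \sum_{\text{cells on the L below }(i,j)} |\delta_{i',j'}|$. Since $|x_{i\cdot m+j}|\le 2^b-1 < n$ (using the assumption $2^b<n$), and there are at most $2m\le 2n^{\alpha}\le 2\sqrt n$ cells on the L-shape, an easy induction shows $|\delta_{i,j}|$ stays bounded — in fact one can show each $\delta_{i,j}$ is at most roughly $n\cdot\mathrm{poly}(m)$ by induction on the ordering, but a cleaner route is to bound $|\Delta(i,j)|\le n$ directly (it equals $x_{i\cdot m+j}\in[0,n)$ by construction) and then note $|\delta_{i,j}| = |\Delta(i,j) - (\text{partial sum})|$ where the partial sum is itself a difference of two already-constructed $\Delta$-values plus one $\delta$; chasing this gives $|\delta_{i,j}|\le O(n)\ll n^2$ comfortably. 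I would present whichever of these bookkeeping arguments is shortest.

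The main obstacle is purely notational: making the ``L-shaped prefix'' dependency structure precise enough that the greedy/inductive definition is well-posed, i.e., verifying that $\Delta(i,j)$ genuinely does not depend on $\delta_{i',j'}$ for cells $(i',j')$ that have not yet been processed. This is where care is needed, because the index $(i,j)$ of the cell $v_{i,j}$ at the corner of the L is the one cell whose $\delta$ is ``new,'' and I must confirm that corner cell $v_{i,j}$ appears on the path from $v_{i,0}$ to $v_{0,j}$ but on no path from $v_{i',0}$ to $v_{0,j'}$ with $(i',j')$ strictly smaller in the ordering — which follows because such a path visits $v_{i',j'}$ and its up/right segments, and $(i,j)$ lies on that L only if $i'\ge i$ and $j'\ge j$. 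Once this is nailed down, the rest is routine arithmetic, so I would keep the size-bound computation terse.
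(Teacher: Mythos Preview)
Your greedy construction is exactly the paper's approach: define the $\delta_{i,j}$ one cell at a time so that each $\Delta(i,j)$ hits its target. If you unwind your recursion you recover precisely the paper's explicit formula $\delta_{i,j}=\delta_{i-1,j-1}+(x_{im+j}+x_{(i-1)m+j-1})-(x_{im+j-1}+x_{(i-1)m+j})$ for $i,j\ge 1$, together with the boundary values $\delta_{0,0}=x_0$, $\delta_{i,0}=x_{im}-x_{(i-1)m}$, $\delta_{0,j}=x_j-x_{j-1}$.

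The only place to tighten is the size bound. The ``crude triangle-inequality'' induction you float first does \emph{not} give $n\cdot\mathrm{poly}(m)$: the recursion $|\delta_{i,j}|\le n+\sum_{\text{cells on the L below }(i,j)}|\delta|$ has $\Theta(i+j)$ summands on the right and unrolls to a factorial-type blowup, so that route fails. Your ``cleaner route'' is the correct one, but be careful with the arithmetic: the partial sum $\Delta(i,j)-\delta_{i,j}$ equals $\Delta(i,j-1)+\Delta(i-1,j)-\Delta(i-1,j-1)-\delta_{i-1,j-1}$ (three $\Delta$'s and one $\delta$, not two and one), so you get $|\delta_{i,j}|\le 4n+|\delta_{i-1,j-1}|$ and hence $O(mn)$ by induction along the diagonal, not $O(n)$. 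Since $m\le n$ this is still comfortably $\le n^2$, so the lemma goes through; just state the bound as $O(mn)$ (the paper writes it as $2m\cdot\max|x|$).
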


\begin{proof}
We first construct those $\{\delta_{i,j}\}$ with $i=0$ or $j=0$. We set $\delta_{0,0}=x_0$, $\delta_{i,0}=x_{i\cdot m}-x_{(i-1)\cdot m}$, and $\delta_{0,j}=x_{j}-x_{j-1}$. Then, we iterate $i,j$ in the order of $v_{1,1},\ldots,v_{1,m-1},v_{2,1},\ldots,v_{m-1,m-1}$ and set $\delta_{i,j}=\delta_{i-1,j-1}+(x_{i\cdot m+j}+x_{(i-1)\cdot m+j-1})-(x_{i\cdot m+j-1}+x_{(i-1)\cdot m+j})$. It holds that $|\delta_{i,j}|\le 2m\cdot \max_{i,j}|x_{i,j}|\le n^2$, and it can be verified that for every $i,j$, $\Delta(i,j)=x_{i\cdot m+j}$.
\end{proof}

\begin{definition}[$2m$-tuple]
We define a $2m$-tuple to be a set consisting of $2m$ vertices $(a_0,\ldots,a_{m-1},b_0,\ldots,b_{m-1})$ where
$a_i\in\{v_{i,-p}\}_{p=1}^{k}$ and similarly $b_i\in\{v_{-q,i}\}_{q=1}^{k}$. Therefore, there are $k^{2m}$ different $2m$-tuple for our graph.
\end{definition}

\begin{figure}[ht!]
\centering
\begin{tikzpicture}
\GridTuple
\end{tikzpicture}
\caption{The lower bound instance, with an example of one possible $2m$-tuple in blue.}
\end{figure}
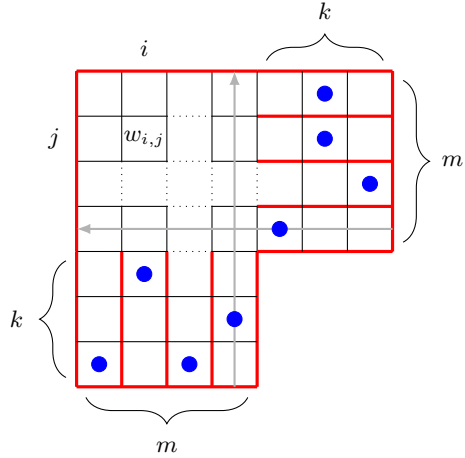

\begin{definition}[A bad pair]\label{def:bad_pair_grid}
We refer to a pair of vertices $(a_i,b_j)$ with  distortion larger than $2k$ as a \emph{bad pair}.
\end{definition}

\begin{proposition}\label{prop:bad_penality}
If a pair of vertices $(v_{i,-p},v_{-q,j})$ where $p,q> 0$ is a bad pair  as defined above, then for any shortest pair query $(v_{i,-p'}, v_{-q,j})$ ($p'> 0$)
A* will scan $v_{i,-p}$. Therefore, one bad pair contribute $\Omega(k)$ cost to the summation of all-pair shortest path queries cost.
\end{proposition}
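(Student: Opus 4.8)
The plan is to mimic the structure of the corresponding argument in the non-grid case (the proof of Lemma~\ref{lem:bad_pairs_contribution} together with Lemma~\ref{lm:alwayscan}), and simply verify that the grid geometry together with the weight choices in Equation~\ref{eq:weights_info_grid} behaves the way we need. First I would fix a bad pair $(v_{i,-p},v_{-q,j})$ with $p,q>0$; by Definition~\ref{def:bad_pair_grid} this means the labeling heuristic satisfies $g(f(v_{i,-p}),f(v_{-q,j})) < dist(v_{i,-p},v_{-q,j}) - 2k$. The goal is to show that for the shifted source $v_{i,-p'}$ (any $p'>0$ in the same column $i$, inside $R_l$), the vertex $v_{i,-p}$ satisfies the \emph{must-scan} condition of Fact~\ref{lm:scan_condition} for the query $(v_{i,-p'},v_{-q,j})$, i.e.
\[
dist(v_{i,-p'},v_{i,-p}) + g(f(v_{i,-p}),f(v_{-q,j})) < dist(v_{i,-p'},v_{-q,j}).
\]

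The key step is to control all three distances. By the obstacle structure (the red barriers inside $R_l$ and $R_r$, as used in the $l_\infty$ grid lower bound), the shortest path from any vertex $v_{i,-p'}$ in column $i$ of $R_l$ to anything outside $R_l$ must first travel to $v_{i,0}$, then proceed; similarly any shortest path to $v_{-q,j}$ must pass through $v_{0,j}$. Hence $dist(v_{i,-p'},v_{-q,j}) = dist(v_{i,-p'},v_{i,0}) + dist(v_{i,0},v_{0,j}) + dist(v_{0,j},v_{-q,j})$, and likewise $dist(v_{i,-p},v_{-q,j}) = dist(v_{i,-p},v_{i,0}) + dist(v_{i,0},v_{0,j}) + dist(v_{0,j},v_{-q,j})$. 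The two ``boundary legs'' inside $R_l$ and $R_r$ have weight $1$ per cell, so each such leg has length at most $2k$ (a column or row of length $k$, traversed at most twice worth of weight, or more precisely each leg costs at most $k$ and the detour between $v_{i,-p'}$ and $v_{i,-p}$ costs at most $2k$). Plugging the bad-pair bound into the left side:
\[
dist(v_{i,-p'},v_{i,-p}) + g(f(v_{i,-p}),f(v_{-q,j})) \le 2k + \bigl(dist(v_{i,-p},v_{-q,j}) - 2k\bigr) = dist(v_{i,-p},v_{-q,j}),
\]
and I would then use the decomposition above plus the fact that $dist(v_{i,-p'},v_{i,0}) \ge dist(v_{i,-p},v_{i,0}) - dist(v_{i,-p},v_{i,-p'})$... actually more cleanly: since both queries share the suffix $v_{i,0}\rightsquigarrow v_{-q,j}$, it suffices that $dist(v_{i,-p'},v_{i,-p}) + dist(v_{i,-p},v_{i,0}) \le dist(v_{i,-p'},v_{i,0}) + 2k - \text{(slack)}$; comparing prefixes, $dist(v_{i,-p'},v_{i,-p}) + dist(v_{i,-p},v_{i,0})$ and $dist(v_{i,-p'},v_{i,0})$ differ by at most $2\cdot(\text{leg length}) \le 2k$, so after cancelling the $2k$ from the bad-pair gap one gets a strict inequality (the gap $2k$ dominates the at-most-$2k$ prefix overhead because the weight on the boundary cells is exactly $1$ and the column/row has length $k$, giving a strict margin once we account for $p,p'\ge 1$). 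This yields the must-scan inequality, so $v_{i,-p}$ is scanned on every query $(v_{i,-p'},v_{-q,j})$, $p'>0$; there are $\Omega(k)$ such sources, giving the claimed $\Omega(k)$ contribution.

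The main obstacle I anticipate is getting the constants in the $2k$-vs-$2k$ comparison to actually produce a \emph{strict} inequality rather than just $\le$; this is exactly why the definition of a bad pair in Definition~\ref{def:bad_pair_grid} uses the threshold $2k$ (with the boundary weights set to $1$ and each boundary leg of length exactly $k$, the worst-case prefix overhead is strictly below $2k$ whenever the source is genuinely displaced, i.e. $p,p'\ge 1$, so the slack from $\delta$-free unit weights closes the gap). I would make this precise by writing $dist(v_{i,-p'},v_{i,0}) = p'$ and $dist(v_{i,-p},v_{i,0}) = p$, $dist(v_{i,-p'},v_{i,-p}) = |p-p'|$, so the prefix overhead is $|p-p'| + p - p' \le 2p' \le 2k$ if $p'\le p$ and $0$ if $p'\ge p$, and similarly bound the $R_r$ side by $2q\le 2k$; since these cannot both be tight simultaneously with the bad-pair gap, strictness follows. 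The remainder of the argument — that this forces A* to scan $\Omega(k)$ vertices and hence contributes $\Omega(k)$ to the all-pairs total — is then immediate from Fact~\ref{lm:scan_condition}, exactly as in Lemma~\ref{lem:bad_pairs_contribution}.
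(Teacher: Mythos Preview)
Your overall strategy---verify the must-scan condition of Fact~\ref{lm:scan_condition} for the shifted source---is exactly what the paper does. But you make it much harder for yourself than necessary, and one of your final bounds is actually wrong.

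The paper's proof is two lines: since $v_{i,-p'}$ and $v_{i,-p}$ lie in the same unit-weight column of length $k$, one has $dist(v_{i,-p'},v_{i,-p})\le k$ (not $2k$). Together with the bad-pair inequality and the plain triangle inequality $dist(v_{i,-p},v_{-q,j}) \le dist(v_{i,-p},v_{i,-p'}) + dist(v_{i,-p'},v_{-q,j}) \le k + dist(v_{i,-p'},v_{-q,j})$, this gives
\[
dist(v_{i,-p'},v_{i,-p}) + g(f(v_{i,-p}),f(v_{-q,j})) < k + dist(v_{i,-p},v_{-q,j}) - 2k \le dist(v_{i,-p'},v_{-q,j}),
\]
with strictness inherited from the bad-pair definition. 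No decomposition through $v_{i,0}$ and $v_{0,j}$ is needed, and there is no ``$2k$ versus $2k$'' tension to resolve.

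Your route---bounding the detour by $2k$ and then trying to recover strictness by explicitly computing prefix lengths---can be made to work, but the computation you wrote is not right: when $p'\le p$ you claim $|p-p'|+p-p'\le 2p'$, whereas in fact $|p-p'|+p-p' = 2(p-p')$, which need not be bounded by $2p'$ at all (take $p=k$, $p'=1$). The correct bound there is $2(p-p')\le 2(k-1) < 2k$, which does give strictness; so the argument is salvageable, but as written the key inequality is false. Using the tighter $\le k$ bound from the start avoids all of this.
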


\begin{proof}
By Fact~\ref{lm:scan_condition},  for a pair of shortest path query $(s,t)$, a vertex $u$ must be scanned by A* if $dist(s,u)+g(f(u),f(t))<dist(s,t)$. If $(v_{i,-p},v_{-q,j})$ is a ``bad" pair, it means that $g(f(v_{i,-p}),f(v_{-q,j}))< dist(v_{i,-p},v_{-q,j})-2k$. Then for any query from $v_{i,-p'}$ to $v_{-q,j}$, we have
\begin{align*}
dist(v_{i,-p'},v_{i,-p})+g(f(v_{i,-p}),f((v_{-q,j}))&< k+dist(v_{i,p},v_{-q,j})-2k<dist(v_{i,-p'},v_{-q,j})
\end{align*}
Therefore $v_{i,-p}$ will be scanned by at least $k$ queries.
\end{proof}

Next, we lower bound the number of such ``bad" pairs at $\Omega(n^2)$.

\begin{lemma}\label{lm:count_bad_pairs}
If there are $\Omega(m^2)$ bad pairs in every $2m$-tuple, then the total number of ``bad pairs" is lower bounded by $\Omega(k^2m^2)$.
\end{lemma}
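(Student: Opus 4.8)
The plan is to mimic the double-counting argument used in Lemma~\ref{lem:bad_pairs_contribution}, which establishes essentially the same statement for the (non-grid) clique instance. The only structural fact we need is a clean correspondence between bad pairs inside $2m$-tuples and the pairs $(a_i,b_j)$ of ``rows'' and ``columns'' of the grid. Recall that a $2m$-tuple fixes, for each index $i\in[0,m-1]$, one representative leaf $a_i\in\{v_{i,-p}\}_{p=1}^{k}$ below row $i$ and one representative $b_i\in\{v_{-q,i}\}_{q=1}^{k}$ to the right of column $i$; the pairs considered for badness are the $m^2$ pairs $(a_i,b_j)$, $i,j\in[0,m-1]$.

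First I would count bad pairs with multiplicity across all $2m$-tuples. By hypothesis every one of the $k^{2m}$ distinct $2m$-tuples contains $\Omega(m^2)$ bad pairs, so the total count, with repetitions, is $\Omega(m^2\cdot k^{2m})$. Second I would bound how many $2m$-tuples a single fixed bad pair $(v_{i,-p},v_{-q,j})$ can belong to: once these two vertices are pinned down, the representatives $a_i$ and $b_j$ are determined, leaving $2m-2$ free coordinates each with $k$ choices, so the pair appears in exactly $k^{2m-2}$ tuples. Dividing the two bounds, the number of \emph{distinct} bad pairs is at least $\Omega(m^2 k^{2m})/k^{2m-2}=\Omega(m^2k^2)$, which is the claim (and equals $\Omega(n^2)$ since $n=\Theta(mk)$).

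The argument is essentially bookkeeping, so there is no serious obstacle; the one point requiring a sentence of care is checking that a bad pair really does pin down both its representatives uniquely — i.e., that the vertices $v_{i,-p}$ and $v_{-q,j}$ determine the indices $i,j$ and the slots $p,q$, which is immediate from the grid coordinates — and that distinct tuples containing it differ only in the other $2m-2$ coordinates. Combined with Proposition~\ref{prop:bad_penality}, which says each bad pair forces $\Omega(k)$ scanned vertices summed over the associated queries, this will give an $\Omega(m^2k^3)$ contribution to the all-pairs query cost and hence an $\Omega(k)=\Omega(n^{1-\alpha})$ average overhead once we set $m=n^{\alpha}$, $k=n^{1-\alpha}$, exactly paralleling the non-grid case.
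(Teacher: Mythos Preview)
Your proposal is correct and is precisely the argument the paper intends: the paper omits the proof and simply points to Lemma~\ref{lem:bad_pairs_contribution}, whose double-counting you have faithfully adapted to $2m$-tuples (total count $\Omega(m^2 k^{2m})$, each bad pair in $k^{2m-2}$ tuples, hence $\Omega(m^2k^2)$ distinct bad pairs).
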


The proof of this lemma is similar to Lemma~\ref{lem:bad_pairs_contribution}, so we omit it here.

\begin{lemma}\label{lm:memorizeweight_grid}
Consider the graph described above and the A* algorithm using a consistent $L=o(m)$ labeling heuristic. If for any choice of $\{\delta_{ij}\}$ there exists a $2m$-tuple $(a_0,\ldots,a_{m-1},b_0,\ldots,b_{m-1})$, at most $o(m^2)$ pairs of the induced all-pair embedding distances $\{g(f(a_i),f(b_j))\}_{i,j\in[m]}$ have large distortion, $g(f(a_i),f(b_j))<dist(a_i,b_j)-2k$, then there is an one-way protocol for indexing problem using $o(nb)$ bits and succeeds with probability at least $1-o(1)$.
\end{lemma}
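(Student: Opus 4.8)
The plan is to adapt the proof of Lemma~\ref{lm:memorizeweight_nongrid} almost line for line, with the telescoping construction of Lemma~\ref{lm:reduction_bdelta} playing the role that ``set $\delta_{ij}=x_{ij}$'' played for the clique instance. Concretely, assuming the hypothesis --- that for every assignment $\{\delta_{i,j}\}$ with $|\delta_{i,j}|\le n^2$ there is a $2m$-tuple $(a_0,\dots,a_{m-1},b_0,\dots,b_{m-1})$, with $a_i=v_{i,-p_i}$ and $b_j=v_{-q_j,j}$, on which all but $o(m^2)$ of the $m^2$ ordered pairs $(i,j)$ are ``good'', i.e. $dist(a_i,b_j)-2k\le g(f(a_i),f(b_j))\le dist(a_i,b_j)$ (the upper bound being admissibility of the consistent heuristic) --- I would build a one-way protocol for the indexing problem (Definition~\ref{def:general_index}) on $m^2$ coordinates over the alphabet $\{0,\dots,2^b-1\}$, which, via Theorem~\ref{thm:result_general_index}, is what feeds the proof of Theorem~\ref{thm:information_lb} together with Lemma~\ref{lm:count_bad_pairs} and Proposition~\ref{prop:bad_penality}, exactly as in the non-grid case.

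Alice, on input $x\in[0,2^b-1]^{m^2}$, would invoke Lemma~\ref{lm:reduction_bdelta} to pick $\{\delta_{i,j}\}$ realizing $\Delta(i,j)=x_{i\cdot m+j}$ for all $i,j$ (every $x$ gives a legal assignment, so the hypothesis applies); this fixes $G$ and hence a good $2m$-tuple, which she locates by unbounded brute force over the $k^{2m}$ candidates. She then transmits (i) the $2m$ binary labels $f(a_i),f(b_j)$, costing $2mBL=o(m^2B)=o(m^2b)$ bits since $L=o(m)$ and $B=\Theta(b)$, and (ii) the $2m$ leaf indices $p_i,q_j$, costing $O(m\log n)=o(m^2b)$ bits since $b=\Theta(\log n)$ --- a total of $o(m^2b)=o(nb)$ bits, using $m^2\le mk=n$ when $\alpha\le 0.5$.

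Bob would recompute every $g(f(a_i),f(b_j))=h(a_i,b_j)$. By the shortest-path-structure lemma established above for this instance, the dominant weight term $(n-x)n^4$ in Equation~\ref{eq:weights_info_grid} together with the obstacles pin the $a_i$-to-$b_j$ shortest path to the fixed ``up then right'' trajectory through $v_{i,0},\dots,v_{i,j},\dots,v_{0,j}$, so its cell set is a function of $(i,j,p_i,q_j)$, all known to Bob. Summing the cell weights from Equation~\ref{eq:weights_info_grid} and peeling off the only $\delta$-dependent contribution gives $dist(a_i,b_j)=D_0(i,j)+n\,\Delta(i,j)$, where $D_0(i,j)$ --- the sum of the $(n-x)n^4-2k$ terms over the inner cells plus the weight-$1$ leaf cells --- is $\delta$-free and Bob-computable. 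For a good pair, $g(f(a_i),f(b_j))=dist(a_i,b_j)-\xi_{ij}$ with $0\le\xi_{ij}\le 2k$, so
\[
\left\lceil\frac{g(f(a_i),f(b_j))-D_0(i,j)}{n}\right\rceil=\left\lceil\Delta(i,j)-\frac{\xi_{ij}}{n}\right\rceil=\Delta(i,j)=x_{i\cdot m+j},
\]
because $0\le\xi_{ij}/n\le 2k/n=2/m<1$. Thus Bob reconstructs $x$ on a $1-o(1)$ fraction of coordinates and answers his uniformly random query correctly with probability $1-o(1)$, which is the claimed protocol.

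The hard part --- and the only genuinely grid-specific point, everything else being supplied by the hypothesis and by Lemma~\ref{lm:reduction_bdelta} --- is making sure Bob can strip off \emph{exactly} the $\delta$-free part $D_0(i,j)$ of $dist(a_i,b_j)$. This is what forces the two design features of Equation~\ref{eq:weights_info_grid}: the enormous $(n-x)n^4$ spacing plus the red obstacles, which make the shortest path geometrically unique so Bob knows its cell set precisely, and the multiplier $n$ on $\delta_{x,y}$, which must dominate the total rounding error ($\le 2k$ from distortion, plus another $\le 2k$ from the leaf legs if one does not bother sending the indices $p_i,q_j$ and instead rounds to the nearest integer) --- and this is exactly where $k/n=1/m$ being bounded away from $1$ (i.e.\ $m\ge 10$) is used. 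I expect verifying this error budget and the shortest-path uniqueness to be the bulk of the otherwise routine work.
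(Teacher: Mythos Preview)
Your proposal is correct and follows essentially the same approach as the paper: Alice encodes $x$ via Lemma~\ref{lm:reduction_bdelta}, brute-forces the good $2m$-tuple, and ships the $2m$ labels; Bob evaluates $g(f(a_i),f(b_j))$, strips the $\delta$-free part, divides by $n$ and rounds. The only cosmetic difference is that you additionally transmit the leaf indices $p_i,q_j$ so Bob can compute $D_0(i,j)$ exactly, whereas the paper absorbs the $\le 2k$ leaf-leg uncertainty into the rounding slack (as you yourself note in your last paragraph); your decoding formula is in fact cleaner and more explicit than the paper's somewhat sketchy $\lceil g/n\rceil-(n-i)n^3$.
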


\begin{proof}
Suppose for every choice of $\{\delta_{i,j}\}$, there exists at least one $2m$-tuple with their induced all-pair embedding distances satisfying that for at least $1-o(1)$ fraction of the pairs, $dist(a_i,b_j)-2k\le g(f(a_i),f(b_j))\le dist(a_i,b_j)$. Then we construct the following protocol for the indexing problem. We let $n=m^2$ and we think of the index that Bob receives as a tuple $(i,j)$ where $i,j\in [m]$.

Alice: given an input $x\in [0,2^b-1]^{m^2}$, invoke Lemma~\ref{lm:reduction_bdelta} to produce $\{\delta_{i,j}\}$ s.t. $\forall (i,j), \Delta_{i,j}=x_{i\cdot m+j}$. Enumerate over all the $2m$-tuples to find the one such that  for at least $1-o(1)$ fraction of its pairs, the  embedded distance  distortion is smaller than $2k$.  Note that smaller than $2k$ distortion means that $dist(v_{i,0},v_{0,j})-2k\le g(f(a_i),f(b_j))\le dist(v_{i,0},v_{0,j})+2k$ so that $\left\lceil\frac{g(f(a_i),f(b_j))}{n}\right\rceil-(n-i)\cdot n^3=\Delta_{i,j}$. Alice sends to Bob $\{f(a_i)\}$ and $\{f(b_j)\}$, which has bit length  $o(m^2b)$.

Bob: Receive $\{f(a_i)\}$ and $\{f(b_j)\}$ from Alice, and use them to construct $\Delta'_{i,j}=\left\lceil\frac{g(f(a_i),f(b_j))}{n}\right\rceil-(n-i)\cdot n^3$, where at least $1-o(1)$ fraction of them satisfy $\Delta'_{i,j}=x_{i,j}$. Given Bob's query $(i,j)$, Bob answers the corresponding term $\Delta'_{i,j}$.

Therefore, the protocol above can solve the indexing problem with probability greater than
$1-o(1)$ and communication complexity $o(nb)$ for any input.
\end{proof}

\begin{proof}[Proof of Theorem~\ref{thm:information_lb}]
We set $m=n^{\alpha}$ and $k=n^{1-\alpha}$ with ($\alpha\le 0.5$) implying that the size of the graph is still $\Theta(n)$. We prove this theorem by contradiction. If there exists a consistent labeling heuristic $h$ with length $L=o(n^{\alpha})$ such that for any choice of $\{\delta_{ij}\}$, A* with heuristic $h$ scans $o(n^{1-\alpha})$ vertices. Then by Proposition~\ref{prop:bad_penality} and Lemma~\ref{lm:count_bad_pairs}, for each choice of $\{\delta_{ij}\}$ we can find a $2m$-tuple, there are at most $o(m^2)$ bad-pairs. Finally, by Lemma~\ref{lm:memorizeweight_grid}, we can get an algorithm for indexing problem contradicting Theorem~\ref{thm:result_general_index}.
\end{proof}

\section{Proof of Proposition~\ref{prop:prob_unique}}\label{appen:prob_unique_proof}

\begin{proof}
We define a boolean variable $tie(s,t)$ which equals 1 if and only if there exists two equal length shortest path from $s$ to $t$, $N(s)$ to denote the set of neighbors of vertex $s$, $w(u,v)$ to be the edge weight for edge $(u,v)$.

\begin{align*}
&\quad Pr[\cup_{s,t\in V}\mb{I}[tie(s,t)]]\\
&\le \sum_{s,t}\sum_{u,v\in N(s) \land u\neq v}Pr[w(s,u)+dist(u,t)=w(s,v)+dist(v,t)]\\
&\le n^4Pr[w(s,u)+dist(u,t)=w(s,v)+dist(v,t)]
\end{align*}
In the following, we prove that for any fixed $t,u,v$ where $u,v\in N(s)$ and $u\neq v$, the expression above $Pr[w(s,u)+dist(u,t)=w(s,v)+dist(v,t)]$ is bounded by $n^{-5}$. We use $d_{s,u}(\cdot,\cdot)$ to be the shortest distance using all edges in $E\setminus (s,u)$, and $d(\cdot,\cdot)$ to be the shortest distance using all edges in $E$ (an abbreviation for $dist(\cdot,\cdot)$). Then, the probability of the existence of non-unique shortest path can be upper bounded as the following:
\begin{align}
&\quad Pr[w(s,u)+d(u,t)=w(s,v)+d(v,t)]\notag\\
&=Pr[w(s,u)=w(s,v)+d(v,t)-d(u,t)]\notag\\
&=Pr[d(v,t)=d_{s,u}(v,t) \land d(u,t)=d_{s,u}(u,t) \land w(s,u)=w(s,v)+d(v,t)-d(u,t)]\notag\\
&\ +\underbrace{Pr[(d(v,t)\neq d_{s,u}(v,t) \lor d(u,t)\neq d_{s,u}(u,t)) \land w(s,u)=w(s,v)+d(v,t)-d(u,t)]}_{\textup{this case is impossible}}\label{line:whetherdchange}\\
&\le Pr[w(s,u)=w(s,v)+d_{s,u}(v,t)-d_{s,u}(u,t)]\label{line:csuequal}\\
&\le \max_{C}Pr[w(s,u)=C]\notag\\
&\le n^{-5}\notag
\end{align}
On line~\ref{line:whetherdchange}, we expand the probability basing on whether the distance function changes after adding edge $(s,u)$. We can observe that as long as the addition of edge $(s,u)$ influences the shortest path $P(v,t)$ or $P(u,t)$, the later equality cannot hold. In line~\ref{line:csuequal}, $w(s,u)$ is independent of all the other quantities. Therefore, we prove that the probability of violation of Assumption~\ref{asm:stronger_unique_shortest_path} is upper bounded by $\frac{1}{n}$.

Because our edge weight are drawn from a discrete distribution, unique shortest path is enough to show the existence of a constant margin as required in Assumption~\ref{asm:stronger_unique_shortest_path}.
\end{proof}

\end{document}